\documentclass[a4paper,twocolumn,10pt,unpublished]{quantumarticle}
\pdfoutput=1
\usepackage{graphicx}
\usepackage[dvipsnames]{xcolor}
\usepackage{amsmath}
\usepackage{amssymb}
\usepackage{braket}
\usepackage{makecell}
\usepackage{tablefootnote}
\usepackage{enumitem}
\usepackage[numbers,sort&compress]{natbib}

\usepackage{amsthm}
\theoremstyle{definition}
\newtheorem{theorem}{Theorem}
\newtheorem{corollary}[theorem]{Corollary}

\newtheorem{definition}[theorem]{Definition}
\newtheorem{remark}[theorem]{Remark}

\newtheorem{lemma}[theorem]{Lemma}

\usepackage{hyperref}
\hypersetup{
colorlinks=true,
citecolor=magenta}
\usepackage{cleveref}

\usepackage{tikz}
\usepackage[normalem]{ulem}
\usetikzlibrary{positioning}
\usetikzlibrary{calc}
\usetikzlibrary{decorations.pathreplacing}
\usepackage{booktabs}
\usepackage{multirow}

\newcommand{\supp}{\mathrm{supp}}
\newcommand{\Q}{\mathcal{Q}}
\newcommand{\A}{\mathcal{A}}
\newcommand{\nullity}{\mathrm{nullity}}
\newcommand{\Zlog}{\overline{Z}}
\newcommand{\Xlog}{\overline{X}}
\newcommand{\Plog}{\overline{P}}

\newcommand{\wt}{\mathrm{wt}}
\newcommand{\rs}{\mathrm{rs}}
\newcommand{\im}{\mathrm{im}}
\newcommand{\HGP}{\mathrm{HGP}}

\begin{document}

\title{Full Extractors for Logical Processing in Hypergraph Product Codes}

\author{John Blue$^{1,2}$, Zhiyang He$^3$, Hengyun Zhou$^4$, and Isaac L. Chuang$^{1,4}$}
\affiliation{$^1$Department~of~Physics,~Massachusetts~Institute~of~Technology,~Cambridge,~MA,~USA \\
$^2$The NSF AI Institute for Artificial Intelligence and Fundamental Interactions\\
$^3$Department~of~Mathematics,~Massachusetts~Institute~of~Technology,~Cambridge,~MA,~USA \\
$^4$Department~of~Electrical~Engineering~and~Computer~Science,~Massachusetts~Institute~of~Technology,~Cambridge,~MA,~USA}

\begin{abstract}
Quantum low-density parity-check (QLDPC) codes are promising candidates for practical low-overhead quantum memories. 
For large-scale fault-tolerant quantum computation, we further need logical processing methods for QLDPC codes.
In this work, we construct full extractors---surgery systems capable of measuring arbitrary logical Pauli operators on a code block---for several hypergraph product (HGP) codes.
These extractors enable logical processing via Pauli-based computation (PBC) without the compilation overhead observed in prior works.
Moreover, our extractors have sizes between 47\% and 80\% of the base HGP codes, and the extractor-augmented codes can be supported on fixed-connectivity hardware with maximum qubit degree ten.
Our approach involves assembling many partial extractors with verifiable fault tolerance into a single full extractor.
For a distance \(10\) HGP code, circuit-level noise simulations yield logical measurement error rates of approximately \(10^{-6}\) at a physical error rate of \(0.1\%\).
These results demonstrate that extractor architectures, when designed in the fixed-connectivity setting, can achieve the space efficiency of QLDPC codes without introducing compilation overhead compared to surface-code PBC architectures. 
\end{abstract}

\maketitle

\section{Introduction}

Executing large-scale quantum algorithms requires quantum error correction and fault-tolerant quantum computation (FTQC).
Extractor architectures (EAs)~\cite{he2025extractors}, based on quantum low-density parity-check (QLDPC) codes~\cite{eczoo_qldpc} and generalized code surgery~\cite{cohen2022low-overhead,cross2024improved,williamson2024low-overhead,Ide_2025,swaroop2024universal}, are a promising blueprint for practical FTQC with low space overhead. 
In an EA, the logical qubits of a quantum algorithm are partitioned into memory and compute blocks encoded in QLDPC codes. 
Each compute block is augmented with one or more \emph{extractors}, which are ancilla systems coupled to the QLDPC code to measure logical Pauli observables.
Such measurements, together with a source of high-fidelity magic states, are sufficient to implement universal FTQC via Pauli-based computation (PBC)~\cite{bravyi2016trading,litinski2019game}.
The challenge now is to move from the abstract EA framework to concrete designs tailored to specific codes and hardware platforms.

The design of extractor architectures involves tradeoffs among space overhead, compilation overhead, and connectivity.
The original extractor proposal uses full extractors with fixed bounded-degree connectivity, capable of measuring arbitrary logical Pauli operators without compilation overhead, but this generality comes with relatively large constants in the required connectivity and space overhead.

These overheads can be reduced by using partial extractors that natively measure only a restricted family of logical operators, lowering both space and connectivity requirements while introducing compilation overhead, as arbitrary measurements must be decomposed into supported ones.
Alternatively, reconfigurable connectivity, as available in atomic platforms, allows the surgery gadget to be specialized to each logical measurement, limiting the associated space and time overhead.
However, this may require designing and verifying many separate gadgets, and is less suited to fast, fixed-connectivity hardware such as superconducting qubits.
It is therefore desirable to develop practical extractor systems that achieve competitive performance along all three axes.

These tradeoffs are clearly reflected in existing constructions.
The logical processing units from the bicycle architecture~\cite{yoder2025tour,wills2026concatenating} are compact in space and can be implemented with fixed, maximum degree-seven connectivity.
Nonetheless, they support only a limited set of measurements, resulting in nontrivial compilation overheads. 
The extractors from Ref.~\cite{webster2026pinnacle} are capable of measuring all logical operators, but require reconfigurable connectivity to access different physical qubits in the code.
Most recently, Ref.~\cite{cain2026shorsalgorithm} proposed using different surgery systems for measuring different operators on a reconfigurable system. 
We provide a comparison of these systems in Table~\ref{tab:comparison}.

A further challenge in designing extractors is verifying fault tolerance.
Surgery systems are often built not with provable methods such as those in~\cite{williamson2024low-overhead,he2025extractors,yuan2026parsimoniousquantumlowdensityparitycheck}, but with overhead-minimizing heuristics, with fault tolerance verified through simulations or distance estimations.
This approach does not scale, as the number of possible measurements grows exponentially with the number of logical qubits.
Recent works have explored automated construction of surgery systems with fault-tolerance guarantees~\cite{zhou2026genecs}.
At this time, their overheads are still higher than heuristically built systems. 
This verification challenge suggests that extractors should be tailored to, and potentially co-designed with, the base code.

In this work, we follow this principle and construct full extractors on hypergraph product (HGP) codes through an assembly process tailored to the code structure.
We first construct partial extractors whose fault tolerance is easy to verify. 
We then assemble these small systems and their local fault-tolerance guarantees into a full extractor with a global fault-tolerance guarantee.
The final extractors have sizes between 47\% and 80\% of the corresponding base codes.
A whole computational block, which consists of the full extractor and the base code, can be implemented with fixed connectivity of maximum degree ten (see Table~\ref{tab:comparison}).
Notably, the methods proposed in~\cite{he2025extractors} would construct full extractors which are much bigger than the base codes.
For a distance \(10\) instance, circuit-level noise simulations of a measurement of a high-weight non-CSS logical Pauli operator with a two-stage decoder~\cite{zhao2026ultrahighratequantum} indicate logical measurement error rates of approximately \(10^{-6}\) at a physical error rate of \(0.1\%\) (see Fig.~\ref{fig:extractor-results}). 

Collectively, our results demonstrate that EAs can have practically relevant rates, can be designed within fixed, low-degree connectivity, can implement Pauli rotations without additional compilation overhead (compared to surface-code PBC architectures), and can have logical error rates comparable to base code error rates.
\begin{table}
\centering
\footnotesize
\setlength{\tabcolsep}{2pt}
\begin{tabular}{lccccccc}
\toprule
Ref. & \makecell{Code\\Family} & \(d\) & $k$ & $n_{\mathrm{tot}}$ & \(\frac{k}{n_{\mathrm{tot}}}\) & \makecell{Comp.\\Overhead} & \makecell{Max\\Degree} \\
\midrule
\multirow{4}{*}{\cite{litinski2019game}}
  & \multirow{2}{*}{\makecell{Surface\\(Fast)}}
  & 11 & 1 & 482 & \(0.21\%\) & \multirow{2}{*}{$1\times$} & \multirow{4}{*}{4} \\
  & & 17 & 1 & 1154 & \(0.09\%\) & & \\
  \cmidrule{2-7}
  & \multirow{2}{*}{\makecell{Surface\\(Compact)}} & 11 & 1 & 362 & \(0.28\%\) & \multirow{2}{*}{\(\le 8 \times\)} \\
  & & 17 & 1 & 866 & \(0.12\%\) &  \\
\midrule
\multirow{2}{*}{\cite{yoder2025tour}}
  & \multirow{2}{*}{BB}
  & 12 & 12 & 338 & \(3.55\%\) & \multirow{2}{*}{$\le 25\times$} & \multirow{2}{*}{7} \\
  & & 18 & 12 & 734 & \(1.63\%\) & & \\
\midrule
\multirow{3}{*}{\makecell{\textbf{This}\\\textbf{work}}}
  & \multirow{3}{*}{HGP}
  & 8  & 32 & 1605 & \(1.99\%\) & \multirow{3}{*}{$1\times$} & \multirow{3}{*}{10} \\
  & & 10 & 50 & 3011 &\(1.66\%\)&  & \\
  & & 16 & 50 & 5651 & \(0.88\%\) & & \\
\midrule
\multirow{3}{*}{\cite{webster2026pinnacle}} & \multirow{3}{*}{GB} & 6 & 10 & 244 & \(4.10\%\) & \multirow{3}{*}{$1\times$} & \multirow{3}{*}{Reconf.}\\
& & 10 & 12 & 452 & \(2.65\%\) & & \\
& & 16 & 14 & 860 & \(1.63\%\) & & \\
\midrule
\multirow{2}{*}{\cite{cain2026shorsalgorithm}} & BB & \(\le 18\) & 10 & 875 & \(1.14\%\) & \multirow{2}{*}{\(1\times\)} & \multirow{2}{*}{Reconf.}\\
& LP & \(\le 20\) & 148 & 3742 & \(3.96\%\) &  & \\
\bottomrule
\end{tabular}
\caption{Comparison of existing extractor and surgery systems. Code family acronyms: BB - bivariate bicycle, GB - generalized bicycle, LP - lifted product.
  The column \(n_{\mathrm{tot}}\) refers to the total number of qubits (both data qubits and check qubits) in the code block and extractor.
  ``Comp. overhead'' refers to the compilation overhead, \emph{i.e.} the maximum number of native measurements required to implement an arbitrary measurement.
  In the ``Max Degree'' column, ``Reconf.'' indicates that the construction requires reconfigurable connectivity.
  For additional details about assumptions made to obtain the values listed in the table, see Appendix~\ref{sec:table-details}.
}
\label{tab:comparison}
\end{table}

\section{Hypergraph Product Codes}

Hypergraph Product (HGP) codes~\cite{tillich2014quantum,eczoo_hypergraph_product} are a family of CSS QLDPC codes constructed from two classical codes with parity check matrices \(H_1\) and \(H_2\). We restrict our attention to the symmetric case \(H_1 = H_2 = H \in \mathbb{F}_2^{m \times n}\), for which the quantum parity check matrices become
\begin{align}
  \label{eq:hgp-pcms}
    H_X &= \left (H \otimes I_{n} \mid I_m \otimes H^\top  \right ) \\
    H_Z &= \left (I_n \otimes H \mid H^\top  \otimes I_m \right ).
\end{align}
If \(H\) and \(H^\top \) define classical codes with parameters \([n,k,d]\) and \([m, k^\top , d^\top ]\), then the quantum code \(\HGP(H, H)\) has parameters
\begin{equation}
[[n^2 + m^2, k^2 + (k^\top )^2, \min(d, d^\top )]]
\end{equation}
with the convention \(d^\top =\infty\) when \(k^\top =0\). As seen in Equation \eqref{eq:hgp-pcms}, the data qubits of an HGP code split into blocks of size \(n^2\) and \(m^2\), which are typically referred to as the ``left'' and ``right'' blocks respectively. A convenient basis of logical operators is illustrated in Fig.~\ref{fig:hgp-fig}(a). On the left block, \(Z\) (\(X\)) operators are supported on columns (rows) corresponding to an information set of \(H\).
(See Appendix~\ref{sec:cyc-codes} for the definition of an information set).
Within these columns (rows), their supports are codewords \(c \in \ker(H)\). A similar set of operators may be found on the right block, with rows and columns exchanged, and the information set and codewords now belonging to \(H^\top \). We refer to a qubit in row \(i\) and column \(j\) as an \emph{information qubit} if both \(i\) and \(j\) are in the information set for \(H\), and similarly for the right block and \(H^\top \).

Recent work~\cite{aydin2025cyclichypergraph} investigated a family of HGP codes built from binary circulant matrices as the classical parity check matrices. For such matrices, \(H\) and \(H^\top \) are permutation equivalent, and thus \(n=m\), \(k=k^\top \), and \(d=d^\top \). We use these cyclic HGP codes as the base codes for our constructions. For our logical basis, we use \(\mathcal{I}_H = \{1, \dots, k\}\) as the information set for \(H\), and \(\mathcal{I}_{H^\top } = \{\sigma(1), \dots, \sigma(k)\}\) as the information set for \(H^\top \), where \(\sigma\) is the coordinate permutation under which \(H\) and \(H^\top \) are equivalent.

\begin{figure*}
    \centering
    \includegraphics[width=\linewidth]{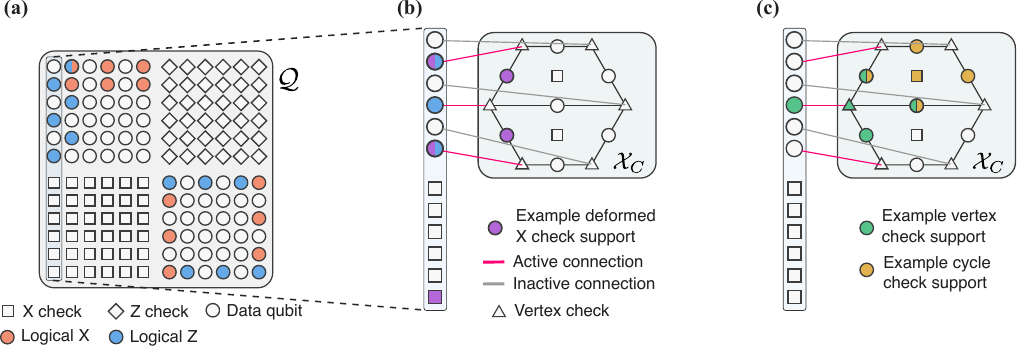}
    \caption{
    \textbf{(a)} An HGP code \(\mathcal{Q}\) with example logical \(X\) and \(Z\) operators highlighted.
    The cyclic HGP code depicted uses the check matrix \(H\) which is the binary circulant matrix corresponding to \(h(x) = 1 + x^2\) and is a \([[72,8,3]]\) code.
    \textbf{(b)} A single-column extractor \(\mathcal{X}_C\) used to measure logical \(Z\) operators supported on the first column of the left block.
	Connections activated to measure the shown logical \(Z\) operator are highlighted in pink; inactive connections are shown in light gray.
    \textbf{(c)} Example supports of a vertex check and a cycle check.
    }
    \label{fig:hgp-fig}
\end{figure*}

\section{Extractor Construction}

The extractors considered in this work for measuring a set of logical operators \(\mathcal{L}\) can be described in terms of a simple, connected graph \(G = (V, E)\), and a \emph{port function} \(f\) which assigns qubits in the support of logical operators in \(\mathcal{L}\) to vertices of \(G\). 
We call the vertices in the image of an information qubit under \(f\) information vertices.
Briefly, to measure an operator \(\overline{P} \in \mathcal{L}\) on a quantum code \(\mathcal{Q}\), the extractor is merged with \(\mathcal{Q}\) to form a new code \(\mathcal{Q}'\).
The merged code adds one qubit for every edge in \(G\), vertex checks coupling these edge qubits to the support of \(\overline{P}\), and cycle checks which correspond to elements of a cycle basis for \(G\).
Additionally, some checks of the base code must be deformed in order to ensure commutativity with the new vertex checks.
By adjusting which vertex checks are coupled to the base code, a single extractor is then capable of measuring multiple different operators.
More details about surgery can be found in Appendix~\ref{sec:code-surg} and Section~3 of~\cite{he2025extractors}.

Our construction of full extractors for cyclic HGP codes proceeds in three steps.
We begin by searching for a graph \(G = (V, E)\) with \(|V| = n\), whose corresponding extractor can measure any logical \(Z\) operator supported entirely in the first column of the left block.
We require that this single-column extractor be distance preserving, \emph{i.e.} for each logical operator supported in the first column, the merged code \(\mathcal{Q}'\) formed by attaching the extractor to the code block must have at least the distance of the original code. 
Although computing the exact distance of stabilizer codes is NP-hard~\cite{kapshikar2023hardnessminimum}, we show in Appendix~\ref{sec:single-col} that the distance of \(\mathcal{Q}'\) can be determined by looking at a much smaller code, which contains only the first column of the HGP block together with the single-column extractor.
This reduction allows the single-column extractors for the HGP codes in Table~\ref{tab:comparison} to be certified in minutes using an open-source SAT solver~\cite{ortools,cpsatlp}.

Next, we form a \(Z\)-basis extractor by connecting \(k\) single-column extractors with \(k-1\) bridge systems~\cite{cross2024improved,swaroop2024universal}.
The \(i\)th copy couples to column \(i\) of the left block via \(f\) and to the corresponding row of the right block via \(\sigma \circ f\).
Crucially, the ability to connect vertex checks to two qubits (one in the left block and one in the right), as opposed to just one as has been done in previous works, relies on the structure of the cyclic HGP code and reduces the size of the single-basis extractor by a factor of two.
This extractor is capable of measuring arbitrary logical \(Z\) operators.
The same construction, coupled instead to rows of the left block and columns of the right block, yields an \(X\)-basis extractor.
Finally, we bridge the information vertices of the \(X\)- and \(Z\)-basis extractors (Fig.~\ref{fig:extractor-construction}(b)), yielding a compact full extractor capable of measuring any logical operator on the code block, unlike previous partial extractors designed for the fixed-connectivity setting~\cite{yoder2025tour}.

\begin{figure}
    \centering
    \includegraphics{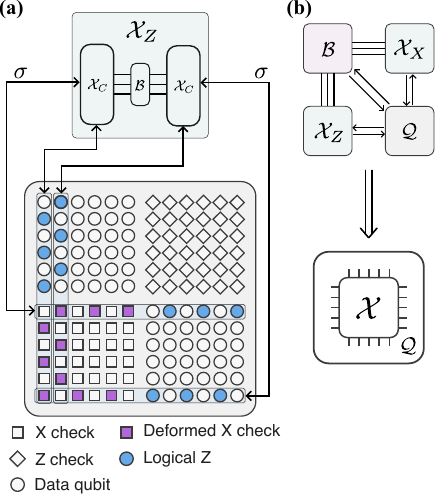}
    \caption{
    \textbf{(a)} A \(Z\) extractor (\(\mathcal{X}_Z\)), capable of measuring any logical \(Z\) operator on the base code \(\mathcal{Q}\).
    The \(Z\) extractor is formed by connecting \(k\) single-column extractors with bridges.
	Lines between the extractor and base code show how they are connected.
    The connections between \(\mathcal{X}_Z\) and the rows of the right block are governed by the permutation \(\sigma\).
    \textbf{(b)} Connecting an \(X\) extractor (\(\mathcal{X}_X\)) and a \(Z\) extractor (\(\mathcal{X}_Z\)) with a bridge \(\mathcal{B}\) to form a full extractor.
    }
    \label{fig:extractor-construction}
\end{figure}

\begin{theorem}
    \label{thm:main}
    Let \(\mathcal{Q}\) be a \([[2n^2, 2k^2, d]]\) cyclic HGP code. Suppose \(\Q\) has a distance-preserving single-column extractor with associated graph \(G = (V, E)\), where \(|V| = n\). Then, there exists a full extractor for \(\Q\), which uses \(\Theta(k(|E| + |V| + k))\) additional qubits and checks. Using this extractor, there exists a protocol to measure any logical operator of \(\Q\) with phenomenological fault distance at least
    \begin{equation}
        \min\left(d, k^2, k\cdot \lambda(G)\right)
    \end{equation}
    where, given the \(|V| \times |E|\) incidence matrix \(M\) of \(G\), \(\lambda(G)\) is the edge connectivity of \(G\), defined as
    \begin{equation}
        \lambda(G) = \min \{ |M^\top v| : v \in \mathbb{F}_2^{|V|},\ |v| \notin \{0, |V|\}\}.
    \end{equation}
\end{theorem}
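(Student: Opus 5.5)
We prove the theorem in three parts: the explicit construction, its qubit count, and a fault-distance bound obtained by splitting logical faults by type.

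\emph{Construction and counting.} We build the full extractor exactly as described before the theorem. We take $k$ copies $G_1,\dots,G_k$ of the single-column graph $G$. Copy $G_i$ is ported to column $i$ of the left block via $f$ and to the matching row of the right block via $\sigma\circ f$. To verify that one vertex check may couple to two qubits, we use the permutation equivalence $\sigma$ of $H$ and $H^\top$: a $Z$ logical with support $c\in\ker H$ in left column $i$ has a partner $\sigma(c)$ in right row $\sigma(i)$. The deformed base checks then still commute, and the merged code is the single-column merged code twisted by $\sigma$.

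We join consecutive copies with $k-1$ bridges, each connecting corresponding information vertices. Each bridge contributes $O(k)$ qubits and checks, plus the new cycles it creates. The same construction on rows of the left block and columns of the right block gives the $X$ extractor, and a final bridge $\mathcal{B}$ on information vertices joins the two extractors.

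The count is $k(|E|+|V|)$ from the copies. Cycle checks number $|E|-|V|+1$ per copy, plus one per bridge edge. Bridges contribute $O(k^2)$ in total. Hence the total is $\Theta(k(|E|+|V|+k))$.

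\emph{Measurement protocol.} To measure an arbitrary $\Plog=\Xlog_A\Zlog_B$, we activate the vertex checks of copy $G_i$ on the $i$-th column component of $\Zlog_B$, and likewise for the $X$ side. We activate bridge edges only between copies and subsystems that carry nontrivial support, so the active graph is connected. Each copy's activated pattern is a single-column $Z$ logical, which the single-column extractor can measure. We run $d$ rounds of merged-code syndrome extraction and then split.

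We bound the fault distance by the standard decomposition of a logical fault in the spacetime (phenomenological) picture into three kinds of fault.

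\textbf{(i) Space-like logical errors.} These are undetectable Pauli errors of the merged code that act nontrivially on the base code. We bound their weight by reducing to the single-column statement of Appendix~\ref{sec:single-col}. Restrict the error to each column/row pair and its copy $G_i$. After cleaning with cycle checks and vertex checks, the restriction is either a stabilizer or a single-column logical, and distance preservation gives weight $\ge d$. The bridges and the $X$–$Z$ bridge add only checks and gauge degrees of freedom. Using the bridge results of~\cite{cross2024improved,swaroop2024universal}, we argue that gluing distance-preserving systems along information vertices preserves distance, so such errors have weight $\ge d$.

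\textbf{(ii) Time-like measurement-outcome errors.} A fault here flips the inferred product of vertex-check outcomes without being detected. An undetectable flip must flip a set $S$ of vertex checks in each round, consistent with the edge-qubit initialization. Equivalently, it flips a cut $M^\top v$ in every copy where $v$ is nonconstant, costing $\lambda(G)$ per affected copy.

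If the pattern is nonconstant across all $k$ copies of one basis, the cost is $k\lambda(G)$. Otherwise $v$ is constant on whole copies, and the cut lies in the bridge graph between copies. The bridges connect $k$ information vertices per pair, so each bridge cut costs $\ge k$. Cutting the structure of up to $2k$ copies via bridges therefore costs at least $\min(k\cdot k,\,k\lambda(G))$. This is where the term $k^2$ comes from.

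\textbf{(iii) Mixed faults.} A combination of space and time faults reduces to one of the two cases above by the usual cleaning argument.

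The main obstacle is the bridge step in (i)–(ii): showing that the local single-column guarantees glue to a global one. For (i), we would first try to prove that any merged-code logical restricted to one copy is a single-column merged-code logical up to stabilizers. For (ii), we would first try to prove that a minimal nonconstant $v$ over the combined graph either crosses a bridge layer of width $k$ in at least $k$ places, or cuts each of at least $k$ copies nontrivially. The case analysis on how $v$ distributes over the copies is where we expect the difficulty.
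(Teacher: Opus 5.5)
There is a genuine gap, and it sits exactly where the $k^2$ and $k\lambda(G)$ terms come from. Your count of $\Theta(k(|E|+|V|+k))$ is fine. The problem is step (i), where you claim every spacelike logical of the merged code has weight at least $d$ because gluing distance-preserving systems preserves distance. This fails at the $X$--$Z$ junction.

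When $\Plog_M=\Xlog\Zlog$ is measured, the halves $\Zlog$ and $\Xlog$ become equivalent modulo the new stabilizer $\Plog_M$. Multiplying $\Zlog$ by all vertex checks of $G_Z$ leaves only $Z$ on the information qubits of $\supp\Zlog$ and on the $k^2$ cross-bridge edges. If $\Xlog$ and $\Zlog$ commute, this is a \emph{spacelike} logical. For example, for $k=1$ and $\Plog_M=\Xlog^L\Zlog^R$ it is $Z$ on one information qubit and one cross edge, a weight-$2$ logical whatever $d$ is. If they anticommute, the same operator is timelike. So the bottleneck is not a separate timelike cut mechanism; it is these half-operators, which can be of either type.

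The paper treats both types at once:
\begin{itemize}
\item By Lemma~\ref{thm:fault-dist}, every element of $\mathcal{L}^*\mathcal{G}$ or $\overline{Q}_M\mathcal{L}\mathcal{G}$ has the form $\Xlog'\Zlog'X_eM\Lambda$ with $\Xlog'\Zlog'\ne\Plog_M$.
\item If $I\ne\Zlog'\ne\Zlog$ or $I\ne\Xlog'\ne\Xlog$, Theorem~\ref{thm:single-basis-dist} gives weight at least $d$.
\item The remaining operators $\Zlog M\Lambda$ (and $\Xlog M\Lambda$) are bounded through the vertex sets $\eta_{i,z},\eta_{j,x}$ of $\Lambda$ in each copy.
\item If some $\eta_{i,z}=0$ or some $\eta_{j,x}=\mathbf{1}$, a telescoping bound along the intra-basis bridges forces weight at least $d$.
\item Otherwise each nonconstant copy costs $\lambda(G)$, and each pair (all-ones $Z$ copy, all-zero $X$ copy) leaves one cross edge. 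This gives $(n_z+n_x)\lambda(G)+(k-n_z)(k-n_x)\ge k\min(k,\lambda(G))$.
\end{itemize}

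Your step (ii) does not reproduce this. ``Nonconstant on all copies'' versus ``constant on whole copies'' is not a dichotomy. A cost of $k$ per bridge layer gives $k\cdot k$ only if $k$ layers must be crossed, which you never establish. The picture of flipping cuts round by round is also not how timelike distance is characterized here: by Lemma~\ref{thm:fault-dist} it is the static quantity $d(\overline{Q}_M\mathcal{L}\mathcal{G})$.

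Two features of your construction also break the bounds before any $X$--$Z$ gluing.

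First, the intra-basis bridges $B^{(i)}$ must join all $|V|=n$ pairs of corresponding vertices, not just the $k$ information vertices. With $k$-edge bridges, take the $Z$-basis extractor measuring $Z(c_1e_1^\top+c_2e_2^\top,0)$ with $c_1,c_2\ne0$. The unmeasured logical $Z(c_2e_2^\top,0)$ times all vertex checks of copies $2,\dots,k$ is just $Z$ on the $k$ edges of $B^{(1)}$. That has weight $k<d$ (e.g.\ $k=5$, $d=10$), so the single-basis extractors you intend to glue are not distance preserving. The paper's telescoping steps use $|\eta_l+\eta_{l+1}|$ over full vertex sets and need bridge width $n\ge d$.

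Second, activating only the subgraphs that carry support can shrink the $X$--$Z$ cut to a few edges. The count $(k-n_z)(k-n_x)$ requires all $2k$ copies and all $k^2$ cross edges to be merged in every measurement.
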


Theorem~\ref{thm:main} reduces the design of a full extractor for cyclic HGP codes to the search for a distance-preserving single-column extractor, whose graph \(G\) has \(\lambda(G) \gtrsim d/k\). The edge connectivity of a graph is efficiently computable~\cite{beineke2013topicsstructuralgraph}, and the distance preserving property can be verified straightforwardly in practice using the reduction described above. A full definition of phenomenological fault distance can be found in Appendix~\ref{sec:meas}, and a proof of the theorem, along with a detailed accounting of the connectivity requirements of the construction, can be found in Appendix~\ref{sec:extractor-construction}. 

\section{Circuit-Level Noise Simulations}
\label{sec:simulation-results}

To evaluate our constructions, we perform circuit-level noise simulations of the extractor for the \([[882,50,10]]\) code.
In particular, we simulate the measurement of a randomly sampled logical operator denoted \(\overline{P}\), with logical weight \(37\) and physical weight \(197\).
We use a standard circuit-level depolarizing noise model in which single-qubit, two-qubit, initialization, and measurement errors occur with probability \(p\), and idle errors occur with probability \(p/10\). To decode, we use a two-stage decoder similar to the one from \cite{zhao2026ultrahighratequantum}.
In the first stage, we use the Relay-BP decoder~\cite{muller2025improved}.
If the Relay-BP decoder does not converge, we fall back to an integer programming decoder \cite{landahl2011faulttolerant,cain2024correlateddecoding}.

Using Stim~\cite{gidney2021stim}, we first noiselessly prepare the base code in a joint \(+1\) eigenstate of \(50\) independent logical operators, including \(\overline{P}\).
We then noisily measure the checks of the merged code for \(N_R\) rounds, noisily measure the base code syndromes for one round, and finish by noiselessly measuring the base code checks and the other \(49\) logical operators.

First, keeping \(N_R=15\) fixed, we sweep over physical error rates, and compare against a \(Z\)-basis memory experiment with the base code using \(16\) rounds of syndrome measurement (Fig.~\ref{fig:extractor-results}(a)).
At each tested physical error rate, the block error rate for the extractor is within a factor of \(2.2\) of the base code memory error rate, demonstrating that the extractor does not significantly change the protection of the information encoded in the other \(49\) logical qubits in the block.
Additionally, at a physical error rate of \(p=0.1\%\), the logical measurement error rate is \(8 \times 10^{-7}\) \((95\%\,\mathrm{CI}: 0.4\text{--}1.6\times 10^{-6})\).

Next, keeping the physical error rate fixed at \(p=0.1\%\), we sweep over \(N_R\), finding as expected that the logical measurement error rate decreases as \(N_R\) increases, and the block error rate increases with \(N_R\) (Fig.~\ref{fig:extractor-results}(b)).
The crossing point appears to occur at \(N_R = 15\) at a logical error rate of roughly \(10^{-6}\), suggesting an architecture based on this extractor could reach the megaquop regime~\cite{preskill2025megaquop}.

\begin{figure*}
    \centering
    \includegraphics[width=1.0\linewidth]{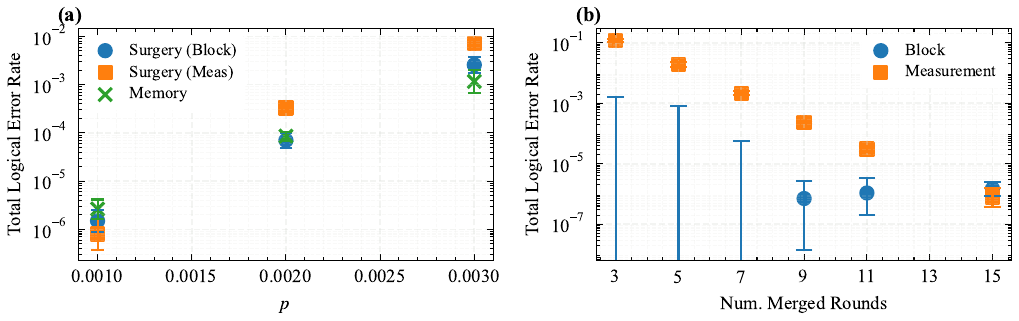}
    \caption{Logical error rates of the \([[882, 50, 10]]\) HGP code and extractor system used to measure a randomly sampled element of \(\{I, X, Y, Z\}^{\otimes 50}\). 
    For the extractor system, ``Meas'' refers to the error rate of the logical measurement, and ``Block'' refers to the logical error rate of the other \(49\) logical qubits during the experiment.
    Note that the shown logical error rates are not normalized per round or per logical qubit.
    Error bars indicate \(95\%\) Agresti-Coull confidence intervals.
    \textbf{(a)} A comparison of the logical error rates for a sixteen-round memory experiment with the base HGP code, and a surgery experiment with fifteen rounds of noisy syndrome measurement in the merged code.
    Note that the surgery system has error rates similar to memory.
    \textbf{(b)} A comparison of the block and measurement error rate of the surgery system with a varying number of rounds of noisy syndrome measurement, with a physical error rate of \(p=0.1\%\). There were no observed block errors when the number of rounds of merged code syndrome measurement was equal to three, five, and seven rounds.
    Both experiments used a two-stage decoder described in the main text.
    }
    \label{fig:extractor-results}
\end{figure*}

\section{Discussion and Conclusion}
In this work, we have shown that extractors for finite-sized HGP codes can simultaneously support arbitrary logical measurements and bounded connectivity, with a total footprint less than \(2\times\) that of the base code.
While we have only focused on extractors for a single compute block, using the techniques described in~\cite{he2025extractors}, it is straightforward to connect many such blocks together with bridges, and to magic state factories with adaptors, to yield a full architecture.
These results suggest that fixed-connectivity platforms can take advantage of the space savings provided by QLDPC codes, without incurring a time-overhead penalty relative to surface-code PBC architectures.

Realizing these gains with superconducting qubits, however, requires connectivity beyond the planar layouts of current systems.
Recent progress in fabrication capabilities shows that the pieces required for such a system are becoming available.
In particular, developments in 3D integration~\cite{rosenberg20173dintegrated,yost2020solidstatequbits,kosen2022buildingblocks} and long-range couplers~\cite{marxer2023longdistance,xiong2025scalablelow,xu2026tunablehybrid} suggest a multi-tier design in which the connectivity graph is spread across multiple layers to reduce congestion~\cite{tremblay2022constant-overhead}. Recent work has begun to investigate how to lay out QLDPC codes on such a design~\cite{mathews2026placingandrouting}.

We note that while we have focused on fixed-connectivity implementation throughout this paper, our extractors can be implemented on reconfigurable hardware, and they overcome the verification challenge mentioned in the introduction.

Our work opens several immediate questions.
First, the practical decoding of such systems remains a challenge.
While the two-stage decoder used in this work is sufficient for benchmarking purposes, a decoder relying on an ILP solver is unlikely to meet the \(\sim 1\,\mu\mathrm{s}\) syndrome measurement time demonstrated by superconducting hardware~\cite{acharya2024quantum}. 
Recent work has demonstrated that tuning the Relay-BP parameters specifically for surgery circuits can result in significant improvement in logical error rates~\cite{wills2026concatenating}; it would be interesting to see how much further tuning could close the gap with the two-stage decoder.

Furthermore, while HGP codes have constant encoding rates, there are several code families with better parameters than the cyclic HGP codes studied here, \emph{e.g.} various families of two-block group algebra codes~\cite{kovalev2013quantumkronecker,panteleev2021degenerate,bravyi2024high-threshold2,lin2024quantumtwoblock}, lifted~\cite{panteleev2022quantum} and balanced~\cite{breuckmann2021balanced} product codes, and recently introduced ultra-high-rate constructions~\cite{kasai2026breakingorthogonality,zhao2026ultrahighratequantum}.
Extending our techniques of assembling an extractor from smaller components to such codes could yield fixed-connectivity architectures with lower space overheads, potentially at smaller block sizes.
It would also be interesting to incorporate recently developed parallel surgery techniques~\cite{zhang2025time-efficient,cowtan2025parallel,zheng2025high,gu2026qgpu} and fast surgery techniques~\cite{cowtan2025fast,baspin2025fast,chang2026constant} into the design of extractors. 

Finally, our constructions enable resource estimations in the megaquop regime with HGP-based extractor architectures. 
The full measurement capability of our extractors provides a solid layer of abstraction for compiling algorithms onto such an EA.
Compilation onto EAs is an open direction with many fresh problems and opportunities~\cite{he2025extractors,yoder2025tour,wills2026concatenating,xu2026distilling,liu2026assessing,sethi2026injeqt,sethi2026optimizing,khan2026architecting}.

\section*{Acknowledgements}
We thank Adam Wills, Qian Xu, Robert Huang, Ted Yoder, AJ Davenport, and Andy Zeyi Liu for insightful discussions and comments. 
J.B. acknowledges support from the MIT Center for Quantum Engineering--Laboratory for Physical Sciences Doc Bedard Fellowship.
Z.H. acknowledges support from the MIT Department of Mathematics, the MIT-IBM Watson AI Lab, and the NSF
Graduate Research Fellowship Program under Grant No. 2141064.
This work is supported in part by the National Science Foundation under Cooperative Agreement PHY-2019786 (The NSF AI Institute for Artificial Intelligence and Fundamental Interactions, \url{https://iaifi.org}) and the DARPA HARQ program, and made use of resources provided by subMIT at MIT Physics~\cite{bendavid2025submitphysicsanalysisfacility}, as well as the FASRC Cannon cluster supported by the FAS Division of Science Research Computing Group at Harvard University.
The authors also acknowledge the MIT Office of Research Computing and Data for providing high performance computing resources that have contributed to the research results reported within this paper. This research was, in part, funded by the U.S. Government. The views and conclusions contained in this document are those of the authors and should not be interpreted as representing the official policies, either expressed or implied, of the U.S. Government.

\section*{Author Contribution and AI Use Statements}
ZH developed the initial idea of constructing full extractors for HGP codes. JB and ZH developed the extractor design and proofs of fault-tolerance, with additional input from HZ and IC. JB wrote the code with input from all other authors and ran the simulations. All authors contributed to scientific discussions and the writing of the manuscript.

We used large language models (LLMs) to assist with developing the simulation code, preparing figures, and editing the manuscript.
LLMs also identified errors in earlier versions of the proofs of Theorem~\ref{thm:single-basis-dist} and Theorem~\ref{thm:full-ext}; the corrected proof of Theorem~\ref{thm:full-ext} incorporates a fix proposed by an LLM. The authors have verified all results and take full responsibility for the content of the paper.

\bibliographystyle{quantum}
\bibliography{main}

\onecolumn
\appendix

\section*{Appendices}
In the Appendices, we start by defining notation (Appendix~\ref{sec:notation}), providing the necessary definitions from graph theory (Appendix~\ref{sec:graph-theory}) and classical coding theory (Appendix~\ref{sec:cyc-codes}), and summarizing the relevant results regarding hypergraph product codes (Appendix~\ref{sec:hgp-codes}) and extractors (Appendix~\ref{sec:code-surg}).
We then provide the details of our extractor construction and a proof of its fault tolerance (Appendix~\ref{sec:extractor-construction}).
Next, we provide details on our numerical simulations and some additional results (Appendix~\ref{sec:numerics-details}).
Finally, we detail the assumptions made to obtain the values listed in Table~\ref{tab:comparison} (Appendix~\ref{sec:table-details}).

\section{Notation}
\label{sec:notation}
First, we establish notation that we will use throughout the rest of the appendices.
\begin{itemize}
\item \([n]\) refers to the set \(\{1, 2, \dots, n\}\).
\item \(e_j\) refers to the unit vector with a one in the \(j\)th entry and zeros everywhere else.
\item \(E_{i,j}\) refers to a matrix with a one in the \(i\)th row and \(j\)th column, and zeros everywhere else.
\item For a matrix \(F\), we use \(F_{(i,\cdot)}\) to denote the \(i\)th row, and \(F_{(\cdot, j)}\) to denote the \(j\)th column.
\item For a vector \(v\) or matrix \(V\), \(|v|\) and \(|V|\) denote the Hamming weight, \emph{i.e.} the number of non-zero entries.
\item We will often use the same letter to denote a vector over \(\mathbb{F}_2\), and the set of its nonzero indices. This allows us to write statements such as \(|u + v| = |u| + |v| - 2 |u \cap v|\) where on the left hand side we think of \(u\) and \(v\) as vectors, and on the right hand side as sets.
\item Given a set of Pauli operators \(A\), we denote the weight of the minimum weight element of \(A\) as \(d(A) = \min \{ \wt(M) : M \in A\}\). We let \(d_X(A)\) denote the minimum weight Pauli operator in \(A\) which consists only of tensor products of \(I\) and \(X\), and similarly for \(d_Z(A)\).
\item Given a function \(f: A \to B\) with \(C \subset A\), we will often abbreviate \(f|_{C}\) as \(f_C\).
\end{itemize}

We adopt the convention that indices for qubits, vectors, and matrices begin with one. In the discussion of cyclic codes, we will index polynomial coordinates starting with zero.

\section{Graph Theory}
\label{sec:graph-theory}
In this section we provide several definitions from graph theory that we will use throughout the appendices. Let \(G = (V, E)\) be a connected, simple graph. Label the vertices \(v_1, \dots, v_{|V|}\) and the edges \(e_1, \dots, e_{|E|}\). The \emph{incidence matrix} of \(G\) is a matrix \(M \in \mathbb{F}_2^{|V| \times |E|}\) such that
\begin{equation}
    M_{ij} = \begin{cases}
        1 & \text{ if } v_i \in e_j \\
        0 & \text{ otherwise}
    \end{cases}
\end{equation}
The \emph{cycle space} of \(G\) is defined as the kernel of \(M\), and a basis for the cycle space is called a \emph{cycle basis}. The cycle space of a connected graph has dimension \(|E| - |V| + 1\). Let \(N\) be a matrix whose rows form a cycle basis for \(G\). The maximum row weight of \(N\) is called the maximum weight of the cycle basis, and the maximum column weight of \(N\) is called the \emph{congestion} of the cycle basis.

Let \(S \subset V\). A \emph{path matching} of \(S\) is a set of edges such that in the edge-induced subgraph, every vertex in \(S\) has odd degree, and every vertex in \(V \setminus S\) has even degree.

Finally, the \emph{edge connectivity} \(\lambda(G)\) of a graph is given by
\begin{equation}
    \lambda(G) = \min \{ |M^\top v| : v \in \mathbb{F}_2^{|V|},\ |v| \notin \{0, |V|\}\}.
\end{equation}

\section{Classical Cyclic Codes}
\label{sec:cyc-codes}
We briefly recall several relevant facts about cyclic codes --- a more thorough introduction can be found in an introductory text on error correction (\emph{e.g.} \cite{huffman2003fundamentals}).
A classical linear code \(C \subset \mathbb{F}_2^n\) is said to be cyclic if, for any codeword \(c = (c_0, c_1, \dots, c_{n-1}) \in C\), the cyclic shift \(c_{\rightarrow 1} = (c_{n-1}, c_0, \dots, c_{n-2}) \in C\) as well.
For each vector \(v \in \mathbb{F}_2^n\) one can associate a polynomial \(v(x) = v_0 +  v_1x + \dots +  v_{n-1}x^{n-1} \in \mathbb{F}_2[x]\).
The requirement that \(C\) is closed under cyclic shifts of codewords can be shown to be equivalent to \(C\) (when viewed as a set of polynomials) forming an ideal of \(\mathbb{F}_2[x] / (x^n-1)\).
For every non-zero cyclic code, there exists a generator polynomial \(g(x)\) and a check polynomial \(h(x)\) such that
\begin{itemize}
    \item \(C = \langle g(x) \rangle \)
    \item \(g(x) h(x) = x^n-1\)
    \item \(k = n - \mathrm{deg}(g)\)
    \item The generator matrix \(G\) is given by 
    \begin{equation}
        G = \begin{pmatrix}
            g_0 & g_1 &  &\dots & g_{n-k} & 0 & \dots & 0 & 0 \\
            0 & g_0 & & \dots & g_{n-k-1} & g_{n-k} &\dots & 0 & 0 \\
            \vdots & \vdots & & \ddots & \vdots & \vdots & \ddots  & \vdots & \vdots  \\
            0 & 0 & & \dots& & & & g_{n-k-1} & g_{n-k}
        \end{pmatrix}
    \end{equation}
    \item The parity check matrix \(H\) is given by
    \begin{equation}
    \label{eq:cyclic-pmat}
    H = \begin{pmatrix}
            h_k & h_{k-1} &  &\dots & h_0 & 0 & \dots & 0 & 0 \\
            0 & h_k & & \dots & h_1 & h_0 &\dots & 0 & 0 \\
            \vdots & \vdots & & \ddots & \vdots & \vdots & \ddots  & \vdots & \vdots  \\
            0 & 0 & & \dots& & & & h_1 & h_0
        \end{pmatrix}
    \end{equation}
\end{itemize}

For any classical code with generator matrix \(G\), a set of \(k\) coordinates whose corresponding columns in \(G\) are linearly independent is called an \emph{information set}.
For a cyclic code, any \(k\) consecutive coordinates form an information set.

\section{Hypergraph Product Codes}
\label{sec:hgp-codes}
\subsection{Construction}
Let \(\mathcal{C}_A\) and \(\mathcal{C}_B\) be \([n_A, k_A, d_A]\) and \([n_B, k_B, d_B]\) classical codes with parity check matrices \(\partial_A \in \mathbb{F}_2^{m_A \times n_A}\) and \(\partial_B \in \mathbb{F}_2^{m_B \times n_B}\). For each code, we have an associated one-term chain complex \(C_{\bullet, x}\)
\begin{equation}
  C_{x}^1 \xrightarrow{\partial_{x}} C_{x}^0
\end{equation}
for \(x \in \{A, B\}\). Here, \(C_{x}^1 = \mathbb{F}_2^{n_x}\) and \(C_{x}^0 = \mathbb{F}_2^{m_x}\).

For each code, we can also define the transpose code \(\mathcal{C}_x^\top \) as the classical code with the parity check matrix \(\partial_x^\top \). We denote the number of bits encoded in \(\mathcal{C}_x^\top \) as \(k_x^\top \), and its distance as \(d_x^\top \). Note that \(\mathcal{C}_x^\top \) is not uniquely defined by \(\mathcal{C}_x\), and instead depends on the specific parity check matrix \(\partial_x\). The transpose code is associated with the one-term cochain complex \(C^{\bullet}_x\):
\begin{equation}
  C_x^1 \xleftarrow{\partial_x^\top } C_x^0.
\end{equation}

Consider the two-term chain complex \(Q_{\bullet}\) obtained by taking the product of \(C_{\bullet,A}\) and \(C^{\bullet}_{ B}\):
\begingroup\renewcommand{\theHequation}{Qbullet}
\begin{equation}
\label{eq:q-cmplx}
\begin{tikzpicture}
  \node (t2) {\(C^1_A \otimes C^0_B\)};
  \node (t4) [right=2.5 of t2] {\(Q_2\)};
  \coordinate[left=1.0 of t2] (t1);
  \coordinate[right=1.0 of t2] (t3);
  \node (m1) [below=1.5 of t1] {\(C^1_A \otimes C^1_B\)};
  \node (m3) [below=1.5 of t3] {\(C^0_A \otimes C^0_B\)};
  \node (b2) [below=3.0 of t2] {\(C^0_A \otimes C^1_B\)};
  \node (m4) at (t4|-m3) {\(Q_1\)};
  \node (b4) at (t4|-b2) {\(Q_0\)};
  \draw[->] (t2) -- (m1) node[midway, above left] {\small \(I_{n_A} \otimes \partial_B^\top \)};
  \draw[->] (t2) -- (m3) node[midway, above right] {\small \(\partial_A \otimes I_{m_B}\)};
  \draw[->] (m1) -- (b2) node[midway, below left] {\small \(\partial_A \otimes I_{n_B}\)};
  \draw[->] (m3) -- (b2) node[midway, below right] {\small \(I_{m_A} \otimes \partial_B^\top \)};
  \draw[->] (t4) -- (m4) node[midway, right] {\small \(H_Z^\top \)};
  \draw[->] (m4) -- (b4) node[midway, right] {\small \(H_X\)};
\end{tikzpicture}
\tag{\(Q_{\bullet}\)}
\end{equation}
\endgroup

The hypergraph product (HGP) code~\cite{tillich2014quantum} \(\mathrm{HGP}(\partial_A, \partial_B)\) is defined as the CSS code \(\mathcal{Q}\) obtained from the chain complex \(Q_2 \xrightarrow{H_Z^\top } Q_1 \xrightarrow{H_X} Q_0\). The parity check matrices are given by
\begin{equation}
\label{eq:hgp-par}
H_X = ( \partial_A \otimes I_{n_B} | I_{m_A} \otimes \partial_B^\top ), \qquad H_Z = (I_{n_A} \otimes \partial_B | \partial_A^\top  \otimes I_{m_B}).
\end{equation}

We call the block of \(n_A \cdot n_B\) qubits associated with \(C_A^1 \otimes C_B^1\) the \emph{left block}, and the block of \(m_A\cdot m_B\) qubits associated with \(C_A^0 \otimes C_B^0\) the \emph{right block}. At different points, we will find it useful to represent sets of data qubits in the left (right) block as 
\begin{itemize}
\item Subsets of \([n_A \cdot n_B]\) (\([m_A \cdot m_B]\))
\item Vectors in \(C_A^1 \otimes C_B^1\) (\(C_A^0 \otimes C_B^0\))
\item Matrices in \(\mathbb{F}_2^{n_A \times n_B}\) (\(\mathbb{F}_2^{m_A \times m_B}\)).
\end{itemize}
In the grid picture shown in Figure \ref{fig:hgp-logop}, the left qubit in row \(i\) and column \(j\) corresponds to the matrix \(E_{i,j}\), the vector \(e_i \otimes e_j\), and the element \(n_B \cdot (i - 1) + j \in [n_A \cdot n_B]\). 
In the reverse direction, each data qubit \(l \in [n_A \cdot n_B]\) in the left block corresponds to a unit vector \(e_i \otimes e_j \in C_A^1 \otimes C_B^1\) and matrix \(E_{i,j}\) where
\begin{equation}
  \label{eq:index-mapping}
  i = \left \lceil \frac{l}{n_B}\right \rceil, \quad j = l - n_B(i-1).
\end{equation}

Additionally, under this mapping each subset of data qubits \(s \subset [n_A \cdot n_B]\) in the left block corresponds to a vector \(v \in C_A^1 \otimes C_B^1\) such that \(\supp(v) = s\) (where we are implicitly using the mapping defined in Equation \eqref{eq:index-mapping}). Similarly, each data qubit in the right block \(r \in [m_A \cdot m_B]\) corresponds to a unit vector \(e_i \otimes e_j \in C_A^0 \otimes C_B^0\) where now \(i = \lceil \frac{r}{m_B}\rceil\) and \(j = r - m_B(i - 1)\), and each subset of data qubits in the right block \(t \subset [m_A m_B]\) corresponds to a vector \(u \in C_A^0 \otimes C_B^0\) such that \(\supp(u) = t\). Note that we will often abuse notation and use the same symbol to refer to vectors over \(\mathbb{F}_2\) and their support.

To denote Pauli operators acting on the qubits, we adopt the notation from~\cite{quintavalle2022reshapedecoder} that for matrices \(L \in \mathbb{F}_2^{n_A \times n_B}\) and \(R \in \mathbb{F}_2^{m_A \times m_B}\),
\begin{equation}
  \label{eq:x-op-notation}
  X(L, R) = \left ( \bigotimes_{i_a, i_b} X^{L_{i_a, i_b}} \right) \otimes \left ( \bigotimes_{j_a, j_b} X^{R_{j_a, j_b}} \right )
\end{equation}
that is, \(L\) and \(R\) give the support of the \(X\) operator on the left and right block. We adopt similar notation for \(Z\) operators.

To analyze the logical operators of \(\mathcal{Q}\), we must analyze the homology and cohomology groups of the chain complex. The K\"{u}nneth formula tells us that the first homology group of \(\mathcal{Q}\), and thus the space of logical \(Z\) operators, is given by
\begin{equation}
\label{eq:log-z}
(H_{A, 1} \otimes H_B^0) \oplus (H_{A, 0} \otimes H_B^1)
\end{equation}
where \(H_{A, i}\) (\(H_A^i\)) denotes the \(i\)th homology (co-homology) group of the chain complex associated with the classical code \(\mathcal{C}_A\) (and similarly for \(\mathcal{C}_B\)).
Similarly, the first cohomology group, and space of logical \(X\) operators, is given by
\begin{equation}
\label{eq:log-x}
(H_A^0 \otimes H_{B, 1}) \oplus (H_A^1 \otimes H_{B, 0}).
\end{equation}

A straightforward application of the rank-nullity theorem reveals that \(\dim(H_{x,1}) = \dim(H_x^0) = k_x\), and \(\dim(H_{x,0}) = \dim(H_x^1) = k_x^\top \). Thus, \(\mathcal{Q}\) encodes \(k_Ak_B + k_A^\top  k_B^\top \) logical qubits. Further analysis of the elements of the first homology and cohomology groups (see, \textit{e.g.} Appendix C of~\cite{campbell2019theory}) shows that their Hamming weights are lower bounded by \(\min(d_A, d_B, d_A^\top , d_B^\top )\) (where we use the convention that a code has \(d=\infty\) if \(k=0\)).

\subsection{Cyclic HGP Codes}
\label{sec:cyclic-hgp}

All of the new constructions from~\ref{tab:comparison} use the HGP of classical cyclic codes, first explored in \cite{aydin2025cyclichypergraph}, as the base code. More specifically, given a check polynomial \(h(x)\), we construct \(\Q = \mathrm{HGP}(\partial, \partial)\) where we take \(\partial\) to be the parity check matrix whose rows consist of all cyclic shifts of \(h^*(x)\)
\footnote{Given a degree \(a\) polynomial \(p(x) = p_0 + x p_1 + \dots + x^{a}p_{a}\), the reciprocal polynomial \(p^*(x)\) is obtained by reversing the coefficients, \emph{i.e.} \(p^*(x) = p_{a} + x p_{a-1} + \dots + x^{a} p_0\)},
as opposed to just the first \(n-k\) as shown in Equation \eqref{eq:cyclic-pmat}.
Note that taking the transpose of \(\partial\) results in a matrix whose rows are generated by cyclic shifts of \(h(x)\), and so \(\partial^\top \) defines the cyclic code with check polynomial \(h^*(x)\). 
This code is permutation equivalent to \(C\) (the code defined by \(\partial\) as a parity check matrix). 
Thus, assuming that \(C\) is an \([n, k, d]\) code, \(C^\top  = \ker{\partial^\top }\) will also be an \([n,k,d]\) code, and \(\mathrm{HGP}(\partial, \partial)\) will be a \([[2n^2, 2k^2, d]]\) code. We list several examples of such cyclic HGP codes in Table \ref{tab:cyclic-hgp}.
\begin{table}[h!]
{
\setlength{\tabcolsep}{8pt}
\begin{center}
    \begin{tabular}{cccc}
    \toprule
     \(h(x)\)  & \([n_c, k_c, d_c]\) & \([[n, k, d]]\) & Rate \\
         \midrule
       \(1 + x^3 + x^4\)       & \([15, 4, 8]\)  & \([[450, 32, 8]]\)   & \(7.1\%\) \\ 
       \(1 + x + x^5\)       & \([21, 5, 10]\) & \([[882, 50, 10]]\)  & \(5.7\%\) \\
       \(1 + x^4 + x^5 + x^6\) & \([31, 6, 15]\) & \([[1922, 72, 15]]\) & \(3.7\%\) \\
       \(1 + x^2 + x^5\)       & \([31, 5, 16]\) & \([[1922, 50, 16]]\) & \(2.6\%\) \\
         \bottomrule
    \end{tabular}
    \caption{A selection of cyclic HGP codes. The header \([n_c, k_c, d_c]\) refers to the parameters of the classical cyclic code, and the header \([[n, k, d]]\) refers to the parameters of the quantum code. Note that the first two codes are from \cite{aydin2025cyclichypergraph}, and the first and last codes are members of the simplex family~\cite{eczoo_simplex}. The \([[1922,50,16]]\) HGP code was studied in~\cite{panteleev2021degenerate}.}
    \label{tab:cyclic-hgp}
\end{center}
}
\end{table}

\subsection{Canonical Basis of Logical Operators}
\label{sec:hgp-logical-ops}
Following~\cite{quintavalle2023partitioning} and~\cite{xu2024fast}, we now construct a well-structured symplectic basis of logical operators for \(\mathrm{HGP}(\partial_A, \partial_B)\). We begin by explicitly writing out the homology and cohomology groups in terms of \(\partial_x\):
\begin{align}
\label{eq:3}
   H_{x, 1} &= \ker(\partial_x) & H_{x, 0} &= \mathbb{F}_2^{m_x} / \im(\partial_x) \\
   H_x^1 &= \ker \left(\partial_x^\top \right) & H_x^0 &= \mathbb{F}_2^{n_x} / \im \left(\partial_x^\top \right).
\end{align}

By equations (\ref{eq:log-z}) and (\ref{eq:log-x}), to find a basis of logical operators it suffices to independently find bases for \(H_{x, 0}, H_{x,1}, H_x^0\) and \(H_x^1\). 
We begin by considering the generator matrix \(G_x\) for \(\mathcal{C}_x\).
Assume we order the coordinates of \(\mathcal{C}_x\) such that the first \(k\) columns of \(G_x\) are linearly independent (\emph{i.e.} they form an information set).
Then by performing row reductions we can write \(G_x\) as
\begin{equation}
\label{eq:2}
G_x = \left (\begin{array}{c|c} I_{k_x} & A_x^\top  \end{array}\right)
\end{equation}
with corresponding parity check matrix
\begin{equation}
\label{eq:1}
\partial_x' = \left (\begin{array}{c|c} A_x & I_{n_x-k_x} \end{array}  \right ).
\end{equation}
Note that \(\ker(\partial_x) = \ker(\partial_x')\).

Let \(\left\{ c_{x,i} \right\}_{i=1}^{k_x}\) be the rows of \(G_x\), which form a basis for \(\ker(\partial_x) = H_{x,1}\).
Let \(e_{x,j} \in \mathbb{F}_2^{n_x}\) be the unit vectors in \(\mathbb{F}_2^{n_x}\) with \((e_{x, j})_k = \delta_{jk}\). We can see that \(\left\{ e_{x,j} \right\}_{j=1}^{k_x}\) forms a basis for \(\left( \im (\partial_x^\top ) \right)^{\bullet}\), and thus \(H_x^0\)~\footnote{The notation \(W^{\bullet}\) denotes a \emph{subspace complement}: For a vector space \(V\) and subspace \(W\), the subspace complement \(W^{\bullet}\) is another subspace \(T\) such that \(W \cap T = \left\{ 0 \right\}\) and \(W \oplus T = V\). If \(\left\{ t_i \right\}\) forms a basis for \(W^{\bullet}\), then \(t_i + W\) forms a basis for \(V/W\).}.

Furthermore, by construction, \(c_{x,i}^\top  e_{x,j} = \delta_{i,j}\). By similarly writing a full rank row reduction of \(\partial_x^\top \) in systematic form, we can obtain transpose codewords \(\{\tilde{c}_{x,i}\}_{i=1}^{k_x^\top }\) and unit vectors \(\left\{ e_{x,j} \right\}_{j=1}^{k_x^\top } \subset \mathbb{F}_2^{m_x}\) that form bases for \(H_x^1\) and \(H_{x,0}\), and where \(\tilde{c}_{x,i}^\top  e_{x,j} = \delta_{i,j}\).

We can then consider four sets of logical operators (illustrated in \Cref{fig:hgp-logop}):
\begin{itemize}
\item \(\mathcal{L}_Z^L\): Logical \(Z\) operators with support only on the left block. These are spanned by \[\left\{ \bar{Z}^L_{ij} = Z(c_{A,i} \cdot e^\top _{B,j}, 0): i \in [k_A], j \in [k_B] \right\}.\]
    \item \(\mathcal{L}_X^{L}\): Logical \(X\) operators with support only on the left block. These are spanned by \[\left\{ \bar{X}^L_{ij} = X(e_{A,i} \cdot c^\top _{B,j}, 0): i \in [k_A], j \in [k_B]  \right\}.\]
    \item \(\mathcal{L}_Z^{R}\): Logical \(Z\) operators with support only on the right block. These are spanned by \[\left\{ \bar{Z}^R_{ij} =Z(0, e_{A,i} \cdot \tilde{c}^\top _{B,j}): i \in \left [k_A^\top \right], j \in \left[ k_B^\top  \right ] \right\}.\]
    \item \(\mathcal{L}_X^R\): Logical \(X\) operators with support only on the right block. These are spanned by \[\left\{ \bar{X}^R_{ij} = X(0, \tilde{c}_{A,i} \cdot e^\top _{B,j}) : i \in \left [k_A^\top \right], j \in \left [k_B^\top  \right ]\right\}.\]
\end{itemize}

%
%
%
%
%
%

\begin{figure}[htbp]
\centering
\begin{tikzpicture}[
  font=\footnotesize,
  scale=1.0,
  every node/.style={inner sep=1pt},
]

\def\csz{0.46}   
\def\LN{6}       
\def\RN{4}       
\def\Roff{3.3}   

%
%
%

\fill[RoyalBlue!12] (0,0) rectangle ({2*\csz},{-6*\csz});
\fill[BrickRed!12]  (0,0) rectangle ({6*\csz},{-2*\csz});

\foreach \i in {1,3,5}{
  \fill[RoyalBlue!60] (0,{-(\i-1)*\csz}) rectangle (\csz,{-\i*\csz});
}
\foreach \j in {1,2,5}{
  \fill[BrickRed!55] ({(\j-1)*\csz},0) rectangle ({\j*\csz},{-\csz});
}
\fill[Plum!65] (0,0) rectangle (\csz,{-\csz});

\foreach \i in {0,1,2,3,4,5,6}{
  \draw[gray!45, very thin] (0,{-\i*\csz}) -- ({6*\csz},{-\i*\csz});
}
\foreach \j in {0,1,2,3,4,5,6}{
  \draw[gray!45, very thin] ({\j*\csz},0) -- ({\j*\csz},{-6*\csz});
}
\draw[black, thick] (0,0) rectangle ({6*\csz},{-6*\csz});

\node[left=3pt] at (0,{-0.5*\csz}) {$1$};
\node[left=3pt] at (0,{-1.5*\csz}) {$2$};
\node[left=3pt] at (0,{-2.5*\csz}) {$3$};
\node[left=3pt, text=gray] at (0,{-3.7*\csz}) {\raisebox{4pt}{$\vdots$}};
\node[left=3pt] at (0,{-5.5*\csz}) {\tiny $n_{\!A}$};

\node[above=2pt] at ({0.5*\csz},0) {$1$};
\node[above=2pt] at ({1.5*\csz},0) {$2$};
\node[above=2pt] at ({2.5*\csz},0) {$3$};
\node[above=2pt, text=gray] at ({3.7*\csz},0) {$\cdots$};
\node[above=2pt] at ({5.5*\csz},0) {\tiny $n_{\!B}$};

\node[font=\small\bfseries, anchor=base] at (1.38, 1.40) {Left block\vphantom{g}};

\draw[decorate, decoration={brace,amplitude=3pt}, RoyalBlue!70, thick]
  (0, 0.42) -- (0.92, 0.42)
  node[midway, above=3pt, RoyalBlue!80, font=\scriptsize]
  {info cols ($k_B$)};

\draw[decorate, decoration={brace,mirror,amplitude=3pt}, BrickRed!70, thick]
  (-0.45, 0) -- (-0.45, -0.92)
  node[midway, left=4pt, BrickRed!80, font=\scriptsize, align=right]
  {info\\rows ($k_A$)};

\draw[->, RoyalBlue!80, thick]
  (-1.5, -1.38) -- (-0.07, -1.38);
\node[left=2pt, RoyalBlue!90, align=right, font=\scriptsize] at (-1.55, -1.38)
  {$\bar{Z}^L_{11}, c_{A,1}\!\in\!\ker\partial_A$};

\node[left=2pt, BrickRed!90, align=right, font=\scriptsize] (xLnode) at (-1.55, -2.5)
  {$\bar{X}^L_{11}, c_{B,1}\!\in\!\ker\partial_B$};
\draw[->, BrickRed!80, thick] (-1.5, -2.5) to [out=20, in=230] (2.07, -0.23);

\draw[thin, ->] (-1.5, 1.05) to [out=270, in=135] (0.18, -0.18);
\node[font=\scriptsize, anchor=base east] at (-0.75, 1.15) {info qubit $(i,j)$};

%
%

\fill[RoyalBlue!12] (\Roff,0) rectangle ({\Roff+4*\csz},{-2*\csz});
\fill[BrickRed!12]  (\Roff,0) rectangle ({\Roff+2*\csz},{-4*\csz});

\foreach \j in {1,3,4}{
  \fill[RoyalBlue!60] ({\Roff+(\j-1)*\csz},0) rectangle ({\Roff+\j*\csz},{-\csz});
}
\foreach \i in {1,3}{
  \fill[BrickRed!55] (\Roff,{-(\i-1)*\csz}) rectangle ({\Roff+\csz},{-\i*\csz});
}
\fill[Plum!65] (\Roff,0) rectangle ({\Roff+\csz},{-\csz});

\foreach \i in {0,1,2,3,4}{
  \draw[gray!45, very thin] (\Roff,{-\i*\csz}) -- ({\Roff+4*\csz},{-\i*\csz});
}
\foreach \j in {0,1,2,3,4}{
  \draw[gray!45, very thin] ({\Roff+\j*\csz},0) -- ({\Roff+\j*\csz},{-4*\csz});
}
\draw[black, thick] (\Roff,0) rectangle ({\Roff+4*\csz},{-4*\csz});

\node[left=3pt] at (\Roff,{-0.5*\csz}) {$1$};
\node[left=3pt] at (\Roff,{-1.5*\csz}) {$2$};
\node[left=3pt, text=gray] at (\Roff,{-2.7*\csz}) {\raisebox{8pt}{$\vdots$}};
\node[left=0pt] at (\Roff,{-3.5*\csz}) {\tiny $m_{\!A}$};

\node[above=2pt] at ({\Roff+0.5*\csz},0) {$1$};
\node[above=2pt] at ({\Roff+1.5*\csz},0) {$2$};
\node[above=2pt, text=gray] at ({\Roff+2.7*\csz},0) {\hspace*{-6pt}$\cdots$};
\node[above=2pt] at ({\Roff+3.5*\csz},0) {\tiny $m_{\!B}$};

\node[font=\small\bfseries, anchor=base] at (4.22, 1.40) {Right block\vphantom{g}};

\draw[decorate, decoration={brace,amplitude=3pt}, RoyalBlue!70, thick]
  (5.24, 0) -- (5.24, -0.92)
  node[midway, right=5pt, RoyalBlue!80, font=\scriptsize]
  {info rows ($k_A^{\top}$)};

\draw[decorate, decoration={brace,amplitude=3pt}, BrickRed!70, thick]
  (3.3, 0.42) -- (4.22, 0.42)
  node[midway, above=3pt, xshift=-16pt, BrickRed!80, font=\scriptsize]
  {info cols ($k_B^{\top}$)};

\draw[->, RoyalBlue!80, thick]
  (5.70, 0.75) to [out=270, in=45] (4.91, -0.23);
\node[RoyalBlue!90, align=left, font=\scriptsize, anchor=south west]
  at (4.80, 0.80)
  {$\bar{Z}^R_{11}$,\; $\tilde{c}_{B,1}\!\in\!\ker\partial_B^{\top}$};

\draw[->, BrickRed!80, thick]
  (6.54, -1.15) -- (3.83, -1.15);
\node[right=2pt, BrickRed!90, align=left, font=\scriptsize]
  at (6.59, -1.15)
  {$\bar{X}^R_{11}$\\[-1pt]$\tilde{c}_{A,1}\!\in\!\ker\partial_A^{\top}$};

\def\LegY{-7.2*\csz}
\def\LegX{0.15}

\fill[RoyalBlue!60] (\LegX,\LegY) rectangle ({\LegX+\csz},{\LegY-\csz});
\node[right=3pt, font=\scriptsize] at ({\LegX+\csz},{\LegY-0.5*\csz})
  {$\bar{Z}$ support (codeword of $\ker\partial$ or $\ker\partial^{\top}$)};

\fill[BrickRed!55] (\LegX,{\LegY-1.5*\csz}) rectangle ({\LegX+\csz},{\LegY-2.5*\csz});
\node[right=3pt, font=\scriptsize] at ({\LegX+\csz},{\LegY-2.0*\csz})
  {$\bar{X}$ support (codeword of $\ker\partial$ or $\ker\partial^{\top}$)};

\fill[Plum!65] (\LegX,{\LegY-3.0*\csz}) rectangle ({\LegX+\csz},{\LegY-4.0*\csz});
\node[right=3pt, font=\scriptsize] at ({\LegX+\csz},{\LegY-3.5*\csz})
  {information qubit (unique $\bar{Z}$--$\bar{X}$ overlap)};

\end{tikzpicture}
\caption{
Canonical basis of logical operators for $\mathrm{HGP}(\partial_A,\partial_B)$,
  illustrated on a small representative instance ($n_A{=}n_B{=}6$, $m_A{=}m_B{=}4$).
  Each logical qubit $(i,j)$ is associated with an \emph{information qubit} at which its
  paired $\bar{Z}$ and $\bar{X}$ operators intersect.
  On the left block, $\bar{Z}^L_{ij}$ occupies information column~$j$ (sparse, along the
  codeword $c_{A,i}\!\in\!\ker\partial_A$) while $\bar{X}^L_{ij}$ occupies information
  row~$i$ (sparse, along $c_{B,j}\!\in\!\ker\partial_B$).
  On the right block the roles of rows and columns are exchanged, with codewords drawn from
  the transpose codes $\ker\partial_A^{\top}$ and $\ker\partial_B^{\top}$.
  Light shading indicates the information rows/columns; the unique $\bar{Z}$--$\bar{X}$
  intersection (violet) is the information qubit.}
\label{fig:hgp-logop}
\end{figure}

To verify that this basis has the correct commutation relations, consider the operators \(\bar{Z}^L_{i,j}\) and \(\bar{X}^L_{k,l}\). We can see that these operators overlap if and only if \(c_{A,i}^\top  e_{A,k} = 1\) and \(e_{B,j}^\top  c_{B,l} = 1\), which is only the case when \(i=k\) and \(j=l\). A similar argument shows the same holds for the logical operators on the right block. Not only does this mean that this basis has the proper commutation relations, it also shows that we can identify each logical qubit in the code, defined by a pair of logical operators, with the single physical qubit on which these operators have overlapping support. We refer to this qubit as an \emph{information} qubit, as it corresponds to the product of two information bits of the classical codes in the hypergraph product.

We conclude this discussion with a definition and two lemmas that will be useful in our proof of fault tolerance. 

\begin{definition}
  For a Pauli operator \(Z(L, R)\), we define
  \begin{equation}
    \label{eq:rowlogZ}
    \mathrm{row}_{\mathrm{log}}(L) = \{ i \in [n_A] : L_{(i, \cdot)} \notin \im(\partial_B^\top ) \}
  \end{equation}
  and
  \begin{equation}
    \label{eq:collogZ}
    \mathrm{col}_{\mathrm{log}}(R) = \{ j \in [m_B] : R_{(\cdot, j)} \notin \im(\partial_A) \}.
  \end{equation}
Similarly, for a Pauli operator \(X(L, R)\), we define
\begin{equation}
    \label{eq:collogX}
    \mathrm{col}_{\mathrm{log}}(L) = \{ j \in [n_B] : L_{(\cdot, j)} \notin \im(\partial_A^\top ) \}
  \end{equation}
  and
  \begin{equation}
    \label{eq:rowlogX}
    \mathrm{row}_{\mathrm{log}}(R) = \{ i \in [m_A] : R_{(i, \cdot)} \notin \im(\partial_B) \}.
  \end{equation}

\end{definition}

The following was first shown in the proof of Proposition 3 of \cite{quintavalle2022reshapedecoder}.
\begin{lemma}
	\label{lem:z-hom-inv}
	Let \(\overline{Z} = Z(L, R)\) be a logical operator of \(\mathrm{HGP}(\partial_A, \partial_B)\). Then \(\mathrm{row}_{\mathrm{log}}(L)\) and \(\mathrm{col}_{\mathrm{log}}(R)\) are invariants of the homology class of \(\overline{Z}\). Similarly, given a logical operator \(\overline{X} = X(L, R)\), \(\mathrm{col}_{\mathrm{log}}(L)\) and \(\mathrm{row}_{\mathrm{log}}(R)\) are invariants of the homology class of \(\overline{X}\).
\end{lemma}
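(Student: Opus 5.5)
The plan is to prove that each set is unchanged when a single stabilizer generator is added. Since every element of the homology class of \(\overline{Z}\) differs from \(\overline{Z}\) by a sum of \(Z\)-stabilizers, invariance under one generator implies invariance on the whole class. The \(X\) statement then follows by the same argument with \(\partial_A,\partial_B\) swapped for their transposes, or by applying the \(Z\) argument to the transposed complex.

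First, I would compute the \(Z\)-stabilizers from Equation~\eqref{eq:hgp-par}. The rows of \(H_Z\) are indexed by \(C_A^1\otimes C_B^0\), with basis \(e_a\otimes e_b\) for \(a\in[n_A]\) and \(b\in[m_B]\). The generator for \(e_a\otimes e_b\) acts on the left block as \(e_a\otimes \partial_B^\top e_b\) and on the right block as \(\partial_A e_a\otimes e_b\). In the matrix picture, it adds \(E\) to \(L\), where \(E\) is nonzero only in row \(a\) and that row equals \(\partial_B^\top e_b\), an element of \(\im(\partial_B^\top)\). It adds \(F\) to \(R\), where \(F\) is nonzero only in column \(b\) and that column equals \(\partial_A e_a\), an element of \(\im(\partial_A)\). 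A general \(Z\)-stabilizer is a sum of such generators. It therefore changes each row of \(L\) by an element of \(\im(\partial_B^\top)\) and each column of \(R\) by an element of \(\im(\partial_A)\).

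Next, I would use the fact that \(\im(\partial_B^\top)\) and \(\im(\partial_A)\) are subspaces. For \(u\in\im(\partial_B^\top)\), a row \(L_{(i,\cdot)}\) lies in \(\im(\partial_B^\top)\) if and only if \(L_{(i,\cdot)}+u\) does. So \(\mathrm{row}_{\mathrm{log}}(L)\) is unchanged, and likewise \(\mathrm{col}_{\mathrm{log}}(R)\) is unchanged. This completes the \(Z\) case.

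For the \(X\) case, the rows of \(H_X\) are indexed by \(C_A^0\otimes C_B^1\), with basis \(e_a\otimes e_b\) for \(a\in[m_A]\) and \(b\in[n_B]\). The generator for \(e_a\otimes e_b\) acts on the left block as \(\partial_A^\top e_a\otimes e_b\), which is a single column of \(L\) lying in \(\im(\partial_A^\top)\). It acts on the right block as \(e_a\otimes \partial_B e_b\), which is a single row of \(R\) lying in \(\im(\partial_B)\). The same coset argument gives invariance of \(\mathrm{col}_{\mathrm{log}}(L)\) and \(\mathrm{row}_{\mathrm{log}}(R)\).

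The only real obstacle is bookkeeping: matching the tensor index ordering to rows versus columns, and getting the transposes in \(H_X\) and \(H_Z\) right. I would fix this by checking it against Equation~\eqref{eq:index-mapping}. There, \(e_i\otimes e_j\) corresponds to \(E_{i,j}\), so \(e_a\otimes w\) occupies row \(a\) and \(w\otimes e_b\) occupies column \(b\).
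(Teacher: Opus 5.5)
Your proposal is correct and takes essentially the same route as the paper. A product of \(Z\) checks has the form \(Z(U\partial_B, \partial_A U)\), which shifts each row of \(L\) by an element of \(\im(\partial_B^\top)\) and each column of \(R\) by an element of \(\im(\partial_A)\), so membership in these subspaces is preserved; the paper phrases this via a complement decomposition \(t+v\), which is equivalent to your coset argument, and it handles the \(X\) case symmetrically as you do.
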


\begin{proof}
	We only prove the result for \(Z(L, R)\) as the reasoning for \(X(L, R)\) is almost identical. Let \(u \in C_A^1 \otimes C_B^0\) correspond to a \(Z\) check \(M\). We can then see that the logical operator \(\overline{Z}M = Z(L + U \partial_B, R + \partial_A U)\). Each row of \(U\partial_B\) lies in the image of \(\partial_B^\top \), thus if row \(i\) of \(L\) is in \(\mathrm{Im}(\partial_B^\top )\), then so is row \(i\) of \(L + U \partial_B\). However, if row \(i\) of \(L\) is not in \(\mathrm{Im}(\partial_B^\top )\) (\emph{i.e.} it is equal to \(t + v\) for \(v \in (\mathrm{Im}(\partial_B^\top ))\), \(t \in \left ( \mathrm{Im}(\partial_B^\top )\right )^{\bullet}, t \ne 0\)), then neither will row \(i\) of \(L + U \partial_B\). Thus, \(\mathrm{row}_{\mathrm{log}}(L) = \mathrm{row}_{\mathrm{log}}(L + U\partial_B)\). Similar reasoning shows that \(\mathrm{col}_{\mathrm{log}}(R) = \mathrm{col}_{\mathrm{log}}(R + \partial_A U)\).
\end{proof}

Next, we have the following result about cleaning logical operators of HGP codes:
\begin{lemma}
  \label{lem:hgp-clean}
  Let \(\Zlog = Z(L, 0)\) for \(L = \sum_{j=1}^{k_B} c_j \cdot e_j^\top \), where each \(c_j \in \ker(\partial_A)\) and let \(v \subset [k_B]\) denote some set of columns of the left block of qubits. Suppose there exists a \(j' \notin v\) such that \(c_{j'} \ne 0\). Then every representative of \(\Zlog\) will have support of size at least \(d\) outside of the columns corresponding to \(v\).
\end{lemma}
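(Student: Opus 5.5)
We need to prove Lemma~\ref{lem:hgp-clean}. Every representative of \(\Zlog\) has the form \(Z(L + U\partial_B, \partial_A U)\) for some \(U \in \mathbb{F}_2^{n_A \times m_B}\). The plan is to lower-bound the weight that any such representative places in the left-block columns outside \(v\) and in the right block.

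We will, however, lower-bound only the left block outside \(v\), since that is what the statement asks for. Consider column \(j'\) of the left block of the representative. It equals \(c_{j'} e_{j'}^\top\) restricted to that column, plus \((U\partial_B)_{(\cdot, j')}\). If this column is to vanish or be light, \((U\partial_B)_{(\cdot,j')}\) must nearly cancel \(c_{j'}\). This cannot be forced column by column, so we switch to the row picture and use Lemma~\ref{lem:z-hom-inv}.

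Since \(c_{j'} \neq 0\), pick \(i\) with \((c_{j'})_i = 1\). Then row \(i\) of \(L\) is \(\sum_j (c_j)_i e_j^\top\), a nonzero combination of the unit vectors \(e_{B,j}\), \(j \in [k_B]\). By the choice of information set these span a complement of \(\im(\partial_B^\top)\), so row \(i\) of \(L\) lies outside \(\im(\partial_B^\top)\). The same holds for every row in \(\supp(c_{j'})\), of which there are at least \(d_A \geq d\).

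For each such row \(i\), row \(i\) of any representative is \(L_{(i,\cdot)} + (U\partial_B)_{(i,\cdot)}\), which lies in the coset \(L_{(i,\cdot)} + \im(\partial_B^\top)\). The key claim is that this row has a nonzero entry outside the columns \(v\). Suppose instead it were supported on \(v\). Pairing it with the codeword \(c_{B,j'} \in \ker \partial_B\) gives \(e_{B,j}^\top c_{B,j'} = \delta_{jj'}\), and elements of \(\im(\partial_B^\top)\) pair to zero with \(\ker\partial_B\). So the pairing of row \(i\) with \(c_{B,j'}\) equals \((c_{j'})_i = 1\), regardless of \(U\). This obstruction is the main obstacle: a row supported only on \(v \subset [k_B]\) need not pair trivially with \(c_{B,j'}\), because \(c_{B,j'}\) has support beyond the information coordinates.

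To fix this, we use the pairing with the left-block \(\bar X\) operators' weights differently. We combine pairings over all \(i\) into a single \(X\) logical \(X(e_{A,\cdot}\ldots)\) argument. Concretely, take the operator \(\bar X\) which is \(X(a \cdot c_{B,j'}^\top, 0)\) for \(a\) chosen so that \(a^\top c_{j'} = 1\), and use cleaning: for each nonzero \(a' \in a + \im\partial_A^\top\), the operator \(X(a' c_{B,j'}^\top, \cdot)\) is an equivalent logical that anticommutes with \(\Zlog\). We restrict to the submatrix of rows and the columns outside \(v\) and argue by a counting/duality step. The residual difficulty is columns in \(v\) meeting \(\supp(c_{B,j'})\). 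I expect to handle it by choosing the classical codewords \(c_{B,j'}\) in systematic form, so that \(c_{B,j'}\) vanishes on \([k_B]\setminus\{j'\}\). Then the pairing sees only columns outside \(v\), so every row \(i\) with \((c_{j'})_i=1\) contributes a nonzero entry outside \(v\). This yields at least \(|c_{j'}| \ge d\) qubits, as required.
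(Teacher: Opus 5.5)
Your final argument is correct, but it takes a dual route to the paper's. You fix a row \(i\in\supp(c_{j'})\) of an arbitrary representative \(L'=L+U\partial_B\) and pair it with the systematic codeword \(c_{B,j'}\in\ker\partial_B\). The pairing annihilates \((U\partial_B)_{(i,\cdot)}\) and returns \((c_{j'})_i=1\). Since \(\supp(c_{B,j'})\subseteq\{j'\}\cup\{k_B+1,\dots,n_B\}\) is disjoint from \(v\), row \(i\) of \(L'\) must have a nonzero entry outside \(v\). Equivalently, \(X(e_i\, c_{B,j'}^\top,0)\) is a logical operator that anticommutes with \(\Zlog\) and is supported in row \(i\) away from \(v\).

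The paper instead splits on the single entry \(L'_{i,j'}\). If it is \(1\), it already lies outside \(v\). If it is \(0\), then \((U\partial_B)_{(i,j')}=1\); because no nonzero vector of \(\rs(\partial_B)\) is supported inside the information set \([k_B]\), some \(j''>k_B\) has \((U\partial_B)_{(i,j'')}=1\), hence \(L'_{i,j''}=1\).

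Both proofs use that \([k_B]\) is an information set. The paper uses it on the dual side (vectors of \(\rs(\partial_B)\) cannot live on \([k_B]\)), you on the primal side (a codeword restricting to \(e_{j'}\) on \([k_B]\) exists). Both finish by counting the \(|c_{j'}|\ge d_A\ge d\) distinct rows. Your version avoids the case split and exposes the anticommuting-logical structure behind the bound; the paper's is a more hands-on entrywise check.

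That said, the middle of your write-up should be cut. The ``obstacle'' you identify does not exist. When you wrote \(e_{B,j}^\top c_{B,j'}=\delta_{jj'}\), you were already using the systematic codeword; the paper's canonical \(c_{B,j'}\) is by definition a row of \(G_B=(I_{k_B}\mid A_B^\top)\), so there is nothing to ``choose''. A row supported on \(v\subseteq[k_B]\) with \(j'\notin v\) then pairs with \(c_{B,j'}\) to \(0\), which immediately contradicts the value \(1\). The support of \(c_{B,j'}\) beyond \([k_B]\) is irrelevant for such a row, so your stated reason for the obstruction is simply wrong. The detour through \(X(a\,c_{B,j'}^\top,0)\), the cosets \(a+\im\partial_A^\top\), and an unspecified ``counting/duality step'' is never carried out and is not needed. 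Lemma~\ref{lem:z-hom-inv} is invoked but never actually used.
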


\begin{proof}
  Let \(M = Z(U\partial_B, \partial_AU)\) be the product of \(Z\) checks. We can see that \(\Zlog M = Z(L + U\partial_B, \partial_A U)\). Let \(L' = L + U\partial_B\), and consider the \(i\)th row of \(L'\), where \(i \in c_{j'}\). Since \(L_{i, j'} = 1\), the only way that \(L'_{i, j'} = 0\) is if \((U\partial_B)_{(i, j')} = 1\). Note that \((U \partial_B)_{(i, \cdot)} \in \rs(\partial_B)\). By considering the standard form of \(\partial_B\), we can see that since \(j' \in [k_B]\), if \((U\partial_B)_{(i, j')} = 1\), then there exists some \(j'' > k_B\) such that \((U\partial_B)_{(i, j'')} = 1\). Thus, the \(i\)th row of \(L'\) has at least one nonzero entry outside of the columns of \(v\). Since \(c_{j'} \in \ker(\partial_A)\) and \(c_{j'} \ne 0\), we have that \(\Zlog M\) has support of size at least \(d\) outside of the columns of \(v\), as desired.
\end{proof}

Note that we can make similar statements about operators of the form \(\Zlog = Z(0, R), \Xlog = X(L, 0)\), and \(\Xlog = X(0, R)\). Put together, we have the following:

\begin{corollary}
\label{cor:clean}
  Let \(\Plog\) and \(\Plog'\) be products of canonical basis operators for \(\mathrm{HGP}(\partial_A, \partial_B)\), such that \(\Plog'\) has support outside of the rows and columns supporting \(\Plog\). Then every representative of \(\Plog'\) has support at least \(d\) outside of the rows and columns supporting \(\Plog\). 
\end{corollary}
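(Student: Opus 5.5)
The statement to prove is Corollary~\ref{cor:clean}. I would reduce it to Lemma~\ref{lem:hgp-clean} and its three analogues (for \(Z(0,R)\), \(X(L,0)\), \(X(0,R)\)). The key observation is that a logical operator has a nontrivial component in some sector, and each sector already carries its own weight-\(d\) bound.

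\textbf{Step 1: split \(\Plog'\) into sectors.} Write \(\Plog'\), up to phase, as a product of canonical basis operators. Group them by type into \(\Zlog'_L\in\mathcal{L}_Z^L\), \(\Zlog'_R\in\mathcal{L}_Z^R\), \(\Xlog'_L\in\mathcal{L}_X^L\) and \(\Xlog'_R\in\mathcal{L}_X^R\). Do the same for \(\Plog\). Let \(S\) be the set of left-block columns and rows, and right-block rows and columns, that support \(\Plog\). The hypothesis says some canonical factor of \(\Plog'\) lies in an information column or row not used by \(\Plog\). Without loss of generality, \(\Zlog'_L=Z(\sum_j c_j e_j^\top,0)\) has \(c_{j'}\neq0\) for some information column \(j'\notin S\). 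The other three cases are symmetric under the row/column and left/right exchange.

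\textbf{Step 2: reduce to a pure-\(Z\), left-type operator.} Take any representative \(P'\) of \(\Plog'\), that is, \(\Plog'\) times a stabilizer. Write \(P' = X(a)Z(b)\). Then \(Z(b)\) is a representative of \(\Zlog'_L\Zlog'_R\), because the code is CSS and \(Z\) stabilizers only alter the \(Z\) part. Also \(\wt(P')\ge |b|\) on any region. So it is enough to bound the \(Z\) part \(b=(L',R')\).

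\textbf{Step 3: apply the cleaning argument.} Rerun the proof of Lemma~\ref{lem:hgp-clean} on \(L'=\sum_j c_je_j^\top + U\partial_B\). The right-block component does not touch the left block, so it does not interfere. Fix any row \(i\in c_{j'}\). Entry \((i,j')\) of \(L'\) is either \(1\), or it was cancelled by a row of \(U\partial_B\) in \(\rs(\partial_B)\). In the cancelled case, the systematic form of \(\partial_B\) forces a nonzero entry in a non-information column \(j''>k_B\).

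Here \(S\) contains rows as well as columns, which is where care is needed. The relevant rows are the \(X\)-support rows of \(\Plog\), namely its information rows. So I would choose the witness set of rows to avoid them. The rows \(i\) satisfying the condition are the support of a nonzero codeword \(c_{j'}\) of \(\ker\partial_A\). Moreover, \(\mathrm{row}_{\log}(L')\) is invariant by Lemma~\ref{lem:z-hom-inv}. I would argue that each row of \(\mathrm{row}_{\log}\) is nonzero outside \(S\).

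\textbf{Main obstacle.} Counting at least \(d\) such rows that also lie outside the rows of \(S\) is the delicate step. It needs the classical fact that a nonzero codeword restricted away from the information coordinates of other logicals still has \(d\) usable rows. I would handle this by taking distinct rows of the support of \(c_{j'}\), giving at least \(d\) of them, and noting that the witnessing entries sit in columns outside \(S\). The entries differ because they lie in distinct rows, so they are distinct qubits outside the supporting columns, and row membership is only used to place them in distinct rows. Combining Steps 2 and 3 gives weight at least \(d\) outside \(S\).
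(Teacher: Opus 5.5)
Your reduction (split \(\Plog'\) into sectors, pass to the \(Z\)-part, rerun the proof of Lemma~\ref{lem:hgp-clean}) is the route the paper intends, since it simply "puts together" that lemma and its three analogues. It works when the excluded lines in the block you clean are parallel to the line carrying the witnessing component of \(\Plog'\). For example, if \(\Plog\) and \(\Plog'\) are both \(Z\)-type, which is the only way the corollary is used (Theorem~\ref{thm:single-basis-dist}), then each row \(i\in\supp(c_{j'})\) has a witness at \((i,j')\) or in a column \(>k_B\). None of these is excluded, so you get \(|c_{j'}|\ge d\) distinct qubits outside \(S\).

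The gap is in your "main obstacle" step, where \(S\) also contains left-block rows (information rows of \(X\)-type factors of \(\Plog\)). Your final count only shows that the \(d\) witnesses are distinct and lie outside the \emph{columns} of \(S\). A witness for a row \(i\in\supp(c_{j'})\) that is itself a row of \(S\) lies inside \(S\). The sentence "row membership is only used to place them in distinct rows" silently drops this constraint, so "weight at least \(d\) outside \(S\)" does not follow.

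The "classical fact" you invoke is false, and no argument can close this step, because the statement fails in this case. Take \(\Plog=\Xlog^L_{i j_0}\), supported on information row \(i\), and \(\Plog'=\Zlog^L_{i j_1}\) with \(j_1\neq j_0\). Since \(c_{A,i}^\top e_{A,i}=1\), the representative \(\Plog'\) itself has exactly \(|c_{A,i}|-1\ge 1\) qubits outside row \(i\). So the hypothesis holds, but the conclusion fails whenever \(|c_{A,i}|=d\). This happens, for example, for the \([15,4,8]\) simplex code underlying the \([[450,32,8]]\) code, where every nonzero codeword has weight \(8\). The same failure occurs with the roles of \(X\) and \(Z\) exchanged.

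So the corollary has to be read with the excluded lines restricted to those of the same orientation as the cleaned component, in particular with \(\Plog\) and \(\Plog'\) of the same CSS type. You should restrict your proof to that case rather than claim the mixed case. Your appeal to the invariance of \(\mathrm{row}_{\mathrm{log}}\) does not help here either: rows of \(\supp(c_j)\) with \(j\) excluded need not meet the complement of \(S\) at all.
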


\section{Code Surgery}
\label{sec:code-surg}

We now briefly overview the construction of ancilla systems that can be used to measure logical operators of LDPC codes.
We also refer readers to Section~3 of~\cite{he2025extractors} for a more detailed review of recent works and methods.

Let $\Q$ be an $[[n,k,d]]$ QLDPC code with stabilizer $\mathcal{S}$, and let $Q$ be the set of qubits in $\Q$.
Let $\Plog$ be a logical operator of $\Q$.
Given \(q \in \supp(\Plog)\), let \(\Plog(q)\) be the single qubit Pauli operator acting on \(q\).
Our goal is to construct a new QLDPC code $\Q'$ with stabilizer $\mathcal{S}'$ such that $\Plog \in \mathcal{S}'$.
Thus, by deforming the code $\Q$ into $\Q'$, we can measure $\Plog$.

To determine the new code $\Q'$, we will make use of ancilla systems, as defined below.
\begin{definition}[Ancilla System]
	An ancilla system for a QLDPC code $\Q$ and logical operator $\Plog$ consists of the following:
	\begin{itemize}
		\item An $[[n_{\A}, k_{\A}, d_{\A}]]$ QLDPC code $\mathcal{A} = \mathrm{CSS}(H_X, H_Z)$, such that $\dim \left ( \ker \left ( H_Z^\top  \right ) \right ) = 1$.
			Denote the set of $X$ checks of $\mathcal{A}$ as $A_X$ and the set of $Z$ checks as $A_Z$.
            We associate a hypergraph $G = (V, E)$ to the parity check matrix \(H_Z^\top \), where the rows/checks of \(H_Z\) are associated with the vertices, and the columns/qubits are associated with edges.
		\item A port function $f: \supp(\Plog) \to A_Z$ such that for $0 \ne c \in \ker(H_Z^\top )$ we have $\mathrm{im}(f) \subset \supp(c)$.
        \item A deformation map \(D: \mathcal{S}\rightarrow \mathcal{P}(E)\) (where \(\mathcal{P}\) denotes the power set). For each $S \in \mathcal{S}$, and \(a \in A_Z\), let \(K(S, a) = \{ q \in f^{-1}(\{a\}): \{S(q), \Plog(q)\} = 0\}\), and then let \(K(S) = \{ a \in A_Z : |K(S, a)| \text{ is odd}\}\). $D(S)$ is a path matching for $K(S)$ in $G$.
        Note that if \(K(S) = \emptyset\), \(D(S) = \emptyset\).
	\end{itemize}
\end{definition}

\begin{remark}
    We note that previous works have required that the port function be injective. Here, we lift that requirement and will consider non-injective port functions. 
\end{remark}

\begin{remark}
  Previous works~\cite{zhang2025time-efficient,cowtan2025parallel,zheng2025high,gu2026qgpu} have considered the case when \(\mathrm{dim}(\ker(H_Z^\top )) > 1\), thus allowing a single ancilla system to measure multiple operators in parallel. In this work, we will restrict our attention to the single operator case, leaving the construction of ``high rate'' extractors to the future.
\end{remark}

\begin{remark}
  In all of the ancilla systems considered in this work, we will ensure that \(\rs(H_X) = \ker(H_Z)\), and so \(k_{\A} = 0\).
\end{remark}

Using the ancilla system, we can now define the code $\Q'$ for which $\Plog$ is an element of the stabilizer.

\begin{definition}[Deformed Code]
	The deformed code of a QLDPC code $\Q$ and ancilla system $\mathcal{A}$ is a quantum code $\Q '$ consisting of the qubits of $\Q$ and $\A$, along with the checks generated by $\mathcal{S}'$ defined as follows:
	\begin{itemize}
		\item For each $S \in A_Z$, add $ \prod_{q \in f^{-1}(\{S\})} \Plog(q)\cdot S$ to $\mathcal{S}'$. These are called \textbf{vertex checks}.
		\item For each $S \in A_X$, add $S$ to $\mathcal{S}'$. These are called \textbf{cycle checks} or \textbf{gauge checks}.
        \item For each \(S \in \mathcal{S}\), add \(S \prod_{e \in D(S)} X(e)\) to \(\mathcal{S}'\). These are called \textbf{path matching checks}.
	\end{itemize}
\end{definition}

All ancilla systems in this paper can be described by a graph, in which case it is guaranteed that the necessary path matchings exist.
To see that \(\Plog \in \mathcal{S}'\), consider the product of the vertex checks associated with the support of \(c \in \ker(H_Z^\top )\).
Since \(\im(f) \subset \supp(c)\), by construction this product has support equal to \(\Plog\) on the base code, and no support on the ancilla system.

\begin{definition}[Distance-preserving]
   We say that an ancilla system \((\A, f, D)\) is \emph{distance preserving} when the distance of the deformed code \(\Q'\) is at least as large as the distance of \(\Q\). 
\end{definition}

As described above, a single ancilla system is capable of measuring a single logical operator, which naively leads to the requirement that one design \(4^k-1\) ancilla systems in order to be able to measure the entire Pauli group via surgery. 
This challenge motivated the proposal of \textit{extractor} systems, which we present next.

\begin{definition}[Extractor]
    An extractor for a QLDPC code \(\Q\) and set of logical operators \(\{\Plog_i\}\) consists of
    \begin{itemize}
        \item An additional \([[n_{\mathcal{X}}, k_{\mathcal{X}}, d_{\mathcal{X}}]]\) QLDPC code \(\mathcal{X} = \mathrm{CSS}(H_X, H_Z)\) such that \(\mathrm{dim}(\ker(H_Z^\top )) = 1\). Denote the \(Z\) checks of \(\mathcal{X}\) as \(X_Z\).
        \item A function \(f: \bigcup_i \supp(\Plog_i) \to X_Z\)
        \item A deformation map \(D: \mathcal{S}\rightarrow \mathcal{P}(E)\) such that for each \(\Plog_i\) and \(S \in \mathcal{S}\), \(D(S)\) contains a path matching of \(K(S)\), where \(K(S)\) is defined using the port function \(f|_{\supp(\Plog_i)}\).
        Let \(D_{\Plog_i}: \mathcal{S} \to \mathcal{P}(E)\) which takes \(S\) to such a path matching.
    \end{itemize}
    such that \((\mathcal{X}, f|_{\supp(\Plog_i)}, D_{\Plog_i})\) is a valid ancilla system for \(\mathcal{Q}\) and \(\Plog_i\). We call an extractor \emph{distance preserving} when each \((\mathcal{X}, f|_{\supp(\Plog_i)}, D_{\Plog_i})\) is distance preserving. We further specify that an extractor is a \emph{full} extractor when the set of logical operators \(\{\Plog_i\}\) is equal to \emph{all} of the logical Pauli operators of \(\mathcal{Q}\), and a \emph{partial} extractor otherwise.
\end{definition}

\Cref{fig:code-surgery} summarizes the definitions of this section and their relationships.

%
%
%

\begin{figure*}[t]
\centering
\begin{tikzpicture}[
  font=\scriptsize,
  >=stealth,
  every node/.style={inner sep=2pt},
]


\draw[draw=gray!40, rounded corners=6pt, dashed]
  (-0.15, 0.20) rectangle (16.15, -3.1);
\node[font=\footnotesize\bfseries, gray!55, anchor=south west]
  at (-0.15, 0.20) {Ancilla System $(\mathcal{A},\,f,\,D)$};

\draw[draw=RoyalBlue!55, fill=RoyalBlue!6, rounded corners=4pt, thick]
  (0, 0) rectangle (5.6, -3.0);
\node[font=\footnotesize\bfseries, RoyalBlue!75, anchor=north]
  at (2.8, -0.17) {Base Code $\mathcal{Q}$};
\node[align=center] at (2.8, -1.3) {
  $[[n,k,d]]$ code, stabilizer \(\mathcal{S}\)\\[3pt]
  Logical operator $\Plog$.\\[3pt]
  $\Plog(q)$: Single-qubit Pauli at \\qubit $q \in \supp(\Plog)$
};
\foreach \x in {1.5, 2.1, 2.7, 3.3, 4.5}{
  \fill[RoyalBlue!18] (\x, -2.3) circle (4.5pt);
}
\foreach \x in {1.5, 2.1, 2.7, 3.3}{
  \fill[RoyalBlue!55] (\x, -2.3) circle (4.5pt);
  \draw[RoyalBlue!75, very thin] (\x, -2.3) circle (4.5pt);
}
\foreach \x/\q in {1.5/$q_1$, 2.1/$q_2$, 2.7/$q_3$, 3.3/$q_4$}{
  \node[font=\tiny, below=4pt] at (\x, -2.3) {\q};
}
\node[font=\tiny, gray!55, below=4pt] at (4.5, -2.3) {$\notin \supp(\Plog)$};

\draw[draw=ForestGreen!55, fill=ForestGreen!6, rounded corners=4pt, thick]
  (7.75, 0) rectangle (15.750, -3.0);
\node[font=\footnotesize\bfseries, ForestGreen!75, anchor=north]
  at (11.75, -0.17) {Ancilla $\mathcal{A} = \mathrm{CSS}(H_X, H_Z)$};

\draw[->, BrickRed!75, very thick] (5.6, -1.1) -- (7.75, -1.1);
\node[above=3pt, BrickRed!80] at (6.69, -1.1)
  {\tiny $f : \supp(\Plog) \to A_Z$};
\draw[->, Mahogany!75, very thick] (5.6, -1.7) -- (7.75, -1.7);
\node[below=3pt, Mahogany!80] at (6.69, -1.7)
  {\tiny $D : \mathcal{S} \to \mathcal{P}(E)$};

\draw[gray!25, thin] (11.75, -0.7) -- (11.75, -2.9);

\draw[ForestGreen!50, line width=1.6pt]
  (8.55,-2.30) -- (8.55,-0.95);   
\draw[ForestGreen!50, line width=1.6pt]
  (8.55,-0.95) -- (10.65,-0.95);   
\draw[ForestGreen!50, line width=1.6pt]
  (10.65,-0.95) -- (10.65,-2.30);   
\draw[ForestGreen!50, line width=1.6pt]
  (10.65,-2.30) -- (8.55,-2.30);   
\node[font=\tiny, ForestGreen!70, left=2pt]  at (8.55,  -1.625) {$e_1$};
\node[font=\tiny, ForestGreen!70, above=2pt] at (9.6, -0.95)  {$e_2$};
\node[font=\tiny, ForestGreen!70, right=2pt] at (10.65,  -1.625) {$e_3$};
\node[font=\tiny, ForestGreen!70, below=2pt] at (9.6, -2.30)  {$e_4$};
\foreach \vx/\vy/\vl in
  {8.55/-2.30/V_1, 8.55/-0.95/V_2, 10.65/-0.95/V_3, 10.65/-2.30/V_4}{
  \fill[ForestGreen!35] (\vx,\vy) circle (8pt);
  \draw[ForestGreen!60, thin] (\vx,\vy) circle (8pt);
  \node[font=\tiny\bfseries] at (\vx,\vy) {$\vl$};
}


\node[align=left, anchor=north west] at (12.1, -0.75) {
  $A_Z$: $Z$-check stabilizers 
\\[3pt]
  $A_X$: $X$-check stabilizers 
\\[3pt]
  $\dim \ker(H_Z^{\top}) = 1$ 
\\[3pt]
  $k_{\mathcal{A}} = 0$ 
\\[3pt]
  $D(S)$: path matching\\
  \hspace*{1em}of $K(S)$ in $G$
};

\draw[->, thick, gray!55] (8.0, -3.1) -- (8.0, -3.6);
\node[right=4pt, gray!60] at (8.0, -3.35) {code deformation};

\draw[draw=Plum!50, fill=Plum!4, rounded corners=4pt, thick]
  (0, -3.6) rectangle (16.0, -7.65);
\node[font=\footnotesize\bfseries, Plum!70, anchor=north] at (8.0, -3.73)
  {Deformed Code $\mathcal{Q}'$\quad
   \normalfont\scriptsize qubits $Q \cup A$,\; stabilizer $\mathcal{S}'$};

\draw[draw=ForestGreen!50, fill=ForestGreen!5, rounded corners=3pt]
  (0.2, -4.2) rectangle (5.1, -7.1);
\node[font=\scriptsize\bfseries, ForestGreen!75, anchor=north]
  at (2.65, -4.28) {Vertex checks};
\node[align=center] at (2.65, -5.75) {
  for each $S \in A_Z$:\\[6pt]
  $S \cdot \prod_{q\,\in\,f^{-1}(S)} \Plog(q)$\\[8pt]
  \color{gray!60}$Z$-check $S$ of ancilla augmented\\
  \color{gray!60}by Pauli $\mathcal{L}(q)$ for every code\\
  \color{gray!60}qubit $q$ mapped to $S$ by $f$
};

\draw[draw=RoyalBlue!50, fill=RoyalBlue!5, rounded corners=3pt]
  (5.4, -4.2) rectangle (10.5, -7.1);
\node[font=\scriptsize\bfseries, RoyalBlue!75, anchor=north]
  at (7.95, -4.28) {Cycle checks};
\node[align=center] at (7.95, -5.75) {
  for each $g \in A_X$:\\[6pt]
  $g$ \textrm{(unchanged)}\\[8pt]
  \color{gray!60}$X$-stabilizers of $\mathcal{A}$\\
  \color{gray!60}added to $\mathcal{S}'$ verbatim\\
  \color{gray!60}(gauge\,/\,cycle stabilizers)
};

\draw[draw=BrickRed!50, fill=BrickRed!5, rounded corners=3pt]
  (10.8, -4.2) rectangle (15.8, -7.1);
\node[font=\scriptsize\bfseries, BrickRed!75, anchor=north]
  at (13.3, -4.28) {Path-matching checks};
\node[align=center] at (13.3, -5.75) {
  for each $S \in \mathcal{S}$:\\[6pt]
  $S \cdot \prod_{e\,\in\,D(S)} X(e)$\\[8pt]
  \color{gray!60}original stabilizer $S$ deformed\\
  \color{gray!60}by $X(e)$ on ancilla edges $e$\\
  \color{gray!60}via deformation map $D(S)$ in $G$
};

\draw[draw=Mahogany!65, fill=Mahogany!8, rounded corners=3pt, very thick]
  (0.3, -7.12) rectangle (15.7, -7.59);
\node[font=\scriptsize, Mahogany!80] at (8.0, -7.36) {
  Key:\quad
  $\prod_{v\,\in\,\mathrm{supp}(c)}\!\bigl(\text{vertex check at }v\bigr)
  \;=\;\Plog$
  \qquad$\Longrightarrow$\qquad
  $\Plog\in\mathcal{S}'$\quad
};

\draw[->, thick, gray!55] (8.0, -7.65) -- (8.0, -8.40);


\draw[draw=gray!40, rounded corners=6pt, dashed]
  (-0.15, -8.40) rectangle (16.15, -10.75);
\node[font=\footnotesize\bfseries, gray!55, anchor=south west]
  at (-0.15, -8.40) {Extractor System $(\mathcal{X},\,f,\,D)$};

\draw[draw=TealBlue!60, fill=TealBlue!5, rounded corners=4pt, thick]
  (0, -8.55) rectangle (7.5, -10.65);
\node[font=\footnotesize\bfseries, TealBlue!80] at (3.75, -8.87)
  {Distance-Preserving};
\node[align=center] at (3.75, -9.9) {
  $d(\mathcal{Q}') \;\geq\; d(\mathcal{Q})$\\[2pt]
  A property of the ancilla system $(\mathcal{A},\,f,\,D)$.\\[1pt]
  For CSS codes, this ensures the fault distance of the\\
  measurement protocol is $\geq d$.\\
};

\draw[draw=Mahogany!55, fill=Mahogany!5, rounded corners=4pt, thick]
  (8.0, -8.55) rectangle (16.0, -10.65);
\node[font=\footnotesize\bfseries, Mahogany!75] at (12.0, -8.87)
  {Extractor $(\mathcal{X},\,f,\,D)$};
\node[align=center] at (12.0, -9.8) {
  Code $\mathcal{X}$, port function $f$, deformation map $D$, for ops $\{\Plog_i\}$,\\[2pt] 
   s.t. $(\mathcal{X},\,f|_{\supp(\Plog_i)},\,D_{\Plog_i})$ is a valid ancilla system for each $\Plog_i$.\\[2pt]
  \textbf{Full extractor}: $\{\Plog_i\} =$ all logical Paulis of $\mathcal{Q}$\\
  \textbf{Partial extractor}: $\{\Plog_i\} \subsetneq$ all logical Paulis
};

\end{tikzpicture}
\caption{
Structure of code surgery and its constituent definitions
  (Appendix~\ref{sec:code-surg}).
  An \emph{ancilla system} $(\mathcal{A},f,D)$ consists of a CSS ancilla code $\mathcal{A}$
  (drawn as a graph: vertices $= A_Z$, edges $=$ ancilla qubits, cycles $= A_X$)
  together with a \emph{port function} $f:\supp(\Plog)\to A_Z$ satisfying
  $\mathrm{im}(f)\subset\mathrm{supp}(c)$, where $c$ is the unique nonzero element of
  $\ker H_Z^{\top}$, and a \emph{deformation map} $D:\mathcal{S}\to\mathcal{P}(E)$ assigning
  each base-code stabilizer a path matching of $K(S)$ in the ancilla graph $G$.
  Logical measurement is performed by measuring the stabilizers of the \emph{deformed code} $\mathcal{Q}'$
  whose checks can be classified into three groups: \emph{vertex checks} (each $Z$-check of $\mathcal{A}$
  augmented by the Paulis $\mathcal{L}(q)$ for code qubits mapped to it by $f$);
  \emph{cycle checks} ($X$-checks of $\mathcal{A}$, unchanged); and
  \emph{path-matching checks} (original stabilizers $\mathcal{S}$, deformed onto 
  edges $D(S)$ in the ancilla graph).  The product of all vertex checks
  indexed by $\mathrm{supp}(c)$ equals $\Plog$, making $\Plog$ an element of the stabilizer of
  $\mathcal{Q}'$.
  An \emph{extractor} generalizes an ancilla system so that one code
  $\mathcal{X}$ handles many logical operators simultaneously, ideally in a way which preserves distance.}
\label{fig:code-surgery}
\end{figure*}

\subsection{Measurement Procedure and Fault-Distance}
\label{sec:meas}

Let \(\Q\) be a code of distance \(d\) with logical operator \(\overline{P}_M\), and ancilla system \(\A\) with \(k_{\A} = 0\). In order to use \(\A\) to measure \(\overline{P}_M\), we use the following code-deformation protocol:
\begin{enumerate}
\item Measure the stabilizers of \(\Q\) for \(d\) rounds.
\item Initialize the edge qubits of the ancilla system in \(\ket{+}\).
\item Measure the stabilizers of the merged code \(\Q'\) for \(d\) rounds.
\item Measure the edge qubits of the ancilla system in the \(X\) basis.
\item Measure the stabilizers of \(\Q\) for \(d\) rounds.
\end{enumerate}
We take the product of the measurement results of the vertex checks corresponding to \(\supp(c)\), for \(0 \ne c \in \ker(H_Z^\top )\) from the first round of measurement of step three as the result of the measurement of \(\overline{P}_M\).

To analyze the fault tolerance of such a protocol, we follow~\cite{cross2024improved} and consider the phenomenological (or spacetime) fault distance, which is the minimum number of single qubit errors and measurement errors required to cause a logical error. We can further separate logical errors into \emph{spacelike} errors, which commute with the logical operator being measured (\textit{i.e.} an error that affects the logical information not being measured), and \emph{timelike} errors, which flip the measurement result. We can see that timelike errors must flip an odd number of vertex checks in \(\supp(c)\) during the first round of syndrome measurement of the merged code.

Let \(\mathcal{S}\) be the stabilizer of the base code (including single qubit \(X\) operators on the edge qubits of the ancilla system), \(\{\overline{P}_1, \overline{Q}_1, \dots, \overline{P}_{k-1}, \overline{Q}_{k-1}, \overline{P}_M, \overline{Q}_M\}\) be a symplectic basis of logical operators, and \(\mathcal{S}'\) be the stabilizer of the deformed code.

Note that we can ensure that the set \(\mathcal{L} = \langle \overline{P}_1, \overline{Q}_1, \dots, \overline{P}_{k-1}, \overline{Q}_{k-1}\rangle\) forms a valid set of logical operators for both \(\mathcal{S}\) and \(\mathcal{S}'\). To see this, let \(\overline{R}\) be a logical operator of the base code that commutes with \(\overline{P}_M\). The only checks in \(\mathcal{S}'\) that could anticommute with \(\overline{R}\) are the vertex checks, and \(\{\overline{R}, S\} = 0\) for a vertex check \(S\) if and only if the set \(\{ q \in f^{-1}(S) : \{\overline{R}(q), \overline{P}_M(q)\} = 0\}\) has an odd size. The total set \(\{q : \{\overline{R}(q), \overline{P}_M(q)\} = 0\}\) must have an even size, and thus \(\overline{R}\) must anticommute with an even number of vertex checks. Let \(\mu\) be a path matching of these vertices in the ancilla graph. We can see that \(\overline{R} \prod_{e \in \mu} X(e)\) is a valid logical operator for both \(\mathcal{S}\) and \(\mathcal{S}'\).

Let \(\mathcal{L}^* = \mathcal{L} \setminus \{I\}\), and \(\mathcal{G} = \langle \mathcal{S}, \mathcal{S}'\rangle\). We then have the following:

\begin{lemma}[Lemma 9 of \cite{cross2024improved}]
  \label{thm:fault-dist}
 The timelike fault distance of the protocol described above is \(d(\overline{Q}_M\mathcal{L} \mathcal{G})\), and the spacelike fault distance is \(d(\mathcal{L}^* \mathcal{G})\).
\end{lemma}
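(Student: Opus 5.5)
Since this is Lemma~9 of~\cite{cross2024improved}, we could simply cite it. The plan below is a self-contained argument in the phenomenological spacetime picture, with one upper bound and one lower bound for each of the two distances.

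We model the protocol as a sequence of stabilizer groups, \(\mathcal{S}\) for \(d\) rounds, then \(\mathcal{S}'\) for \(d\) rounds, then \(\mathcal{S}\) for \(d\) rounds. The edge qubits are initialized and read out in the \(X\) basis, which is why single-qubit \(X\) on edges are included in \(\mathcal{S}\). Detectors are defined in the usual way:
\begin{itemize}
\item consecutive measurements of the same check are compared;
\item at the first merged round, each element of \(\mathcal{S}'\) whose value is inferred from the preceding \(\mathcal{S}\)-rounds and the edge initialization is compared against its measured value, and symmetrically at the split.
\end{itemize}
The only element of \(\mathcal{S}'\) whose value is \emph{not} fixed by the preceding rounds is \(\overline{P}_M\) (the product of vertex checks over \(\supp(c)\)). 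This is exactly the randomness that constitutes the measurement outcome.

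\textbf{Upper bounds.} For the spacelike distance, take a minimum-weight element \(E\in\mathcal{L}^*\mathcal{G}\) and write \(E = \overline{L} g s'\) with \(\overline{L}\in\mathcal{L}^*\), \(g\in\mathcal{S}\), \(s'\in\mathcal{S}'\). Apply \(E\) as a data error at the boundary between the merged and split phases (or at the merge). It commutes with every check measured after it, so it flips no detector. Its class acts as the nontrivial logical \(\overline{L}\) on the unmeasured qubits, so it is a spacelike logical fault of weight \(|E|\). For the timelike distance, take \(E\in\overline{Q}_M\mathcal{L}\mathcal{G}\) of minimum weight and insert it just before the first merged round. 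It anticommutes with \(\overline{P}_M\), so it flips the inferred outcome. Its commutation with every other check is absorbed by the \(\mathcal{S}\)/\(\mathcal{S}'\) components and the \(\mathcal{L}\) factor, so no detector is triggered.

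\textbf{Lower bounds.} Given an undetectable fault \(F\) causing a logical error, we would first use the standard cleaning argument to transport all data errors to a single time slice. Moving a data error forward in time across a round of measurements of a group \(\mathcal{T}\) changes only measurement errors. The syndrome-consistency conditions force the set of flipped measurements of each check between two times to equal the commutation pattern of the moved error. So we replace \(F\) by the cumulative data error \(E\) at the merge (respectively split) boundary together with residual measurement-error strings. Each residual string is either closed, hence trivial, or runs the full \(d\) rounds of a phase. In the latter case the fault already has weight \(\ge d \ge\) the claimed bound, using \(d(\mathcal{L}^*\mathcal{G}) \le d\), which holds because \(\mathcal{L}^*\) contains base-code logicals.

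Otherwise the detectors at the two transitions force \(E\) to commute with \(\mathcal{S}\) (before the merge) and with \(\mathcal{S}'\) (after it), up to elements of the other group. Hence \(E\) lies in the normalizer of \(\mathcal{G}\) modulo \(\mathcal{G}\), and its class is determined by its action on \(\mathcal{L}\) and on the pair \((\overline{P}_M,\overline{Q}_M)\). There are two cases:
\begin{itemize}
\item a nontrivial \(\mathcal{L}\)-component gives \(|F|\ge |E|\ge d(\mathcal{L}^*\mathcal{G})\);
\item a flip of the outcome means \(E\) anticommutes with \(\overline{P}_M\), so \(E\in\overline{Q}_M\mathcal{L}\mathcal{G}\) and \(|F|\ge d(\overline{Q}_M\mathcal{L}\mathcal{G})\).
\end{itemize}

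\textbf{Main obstacle.} The step I expect to be hardest is making the cleaning rigorous across the deformation boundary. There, checks of \(\mathcal{S}\) are replaced by path-matching checks and the vertex checks appear with random initial values. One must show that measurement-error chains on those random-valued checks can only end by flipping the \(\overline{P}_M\) product. I would handle this by treating the first merged round's vertex-check values as free gauge values and tracking only their product over \(\supp(c)\).
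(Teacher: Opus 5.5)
\paragraph{Verdict.} There is a genuine gap, in the timelike lower bound. The paper gives no proof of this lemma (it imports Lemma~9 of \cite{cross2024improved}), so your plan has to stand on its own. Your upper-bound constructions are essentially right.

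\paragraph{The false inequality.} The lower bound breaks where you dismiss measurement-error strings that run through a whole phase, using $d(\mathcal{L}^*\mathcal{G})\le d$. That inequality is false in general. Containing base-code logicals only bounds $d(\mathcal{L}^*\mathcal{G})$ by the weight of \emph{those} logicals, and $d$ may be attained only in the $\overline{P}_M,\overline{Q}_M$ sector. For example, take a disjoint union of a distance-$d$ block carrying the measured qubit and a higher-distance block carrying the rest; for $k=1$, $\mathcal{L}^*$ is even empty.

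\paragraph{Spacelike part: no inequality needed.} Measurement errors never act on the state, so after cleaning the $\mathcal{L}$-component of the fault is that of $E$. The surviving open strings, which with closed outer time boundaries can only be vertex-check strings, merely toggle recorded vertex values. Hence $|F|\ge|E|\ge d(\mathcal{L}^*\mathcal{G})$ directly.

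\paragraph{Timelike part: the gap is real.} Here you would need $d(\overline{Q}_M\mathcal{L}\mathcal{G})\le d$, and this fails. Take a $Z$-type $\overline{P}_M$ on a CSS code with $d_X>d_Z=d$. Every element of $\overline{Q}_M\mathcal{L}\mathcal{G}$ commutes with the (undeformed) base $Z$-checks and anticommutes with $\overline{P}_M$. So its base-code $X$-part is, modulo $X$-checks, an $X$-logical anticommuting with $\overline{P}_M$, and has weight at least $d_X>d$.

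Meanwhile, misreporting a single vertex check in all $d$ merged rounds triggers no detector. Its first-round value is random, and its last-round value is destroyed by the $X$ readout of the edges. Yet it flips the recorded outcome. So the timelike distance is at most the number of merged rounds.

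What your cleaning argument actually yields is therefore $\min\bigl(d,\,d(\overline{Q}_M\mathcal{L}\mathcal{G})\bigr)$. Equality with $d(\overline{Q}_M\mathcal{L}\mathcal{G})$ cannot be proved because it does not hold in general. You should state the timelike part with the number of merged rounds inside the minimum. That is also the only form used downstream, since Theorem~\ref{thm:css-fault-dist} and Theorem~\ref{thm:full-ext} take a minimum with $d$ anyway.

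\paragraph{Two smaller repairs.} In the spacelike upper bound, $E=\overline{L}gs'$ need not commute with the checks measured after it, because edge-$X$ factors of $g\in\mathcal{S}$ anticommute with vertex checks. The correct reason no detector fires has two parts:
\begin{itemize}
\item $E$ commutes with $\mathcal{S}\cap\mathcal{S}'$, which contains every check value compared across the boundary;
\item the vertex checks it anticommutes with have random first-round values, and $E$ commutes with their product $\overline{P}_M$.
\end{itemize}

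Likewise, ``the normalizer of $\mathcal{G}$ modulo $\mathcal{G}$'' is not meaningful, since $\mathcal{G}$ is non-abelian. The right object is the centralizer $C(\mathcal{S}\cap\mathcal{S}')$. By $(A\cap B)^{\perp}=A^{\perp}+B^{\perp}$ in the symplectic space, it equals $C(\mathcal{S})\,C(\mathcal{S}')=\langle\mathcal{G},\mathcal{L},\overline{Q}_M\rangle$. That identity, together with the fact that elements of $\mathcal{G}$ act trivially when applied at the merge boundary, is what produces your dichotomy between $\mathcal{L}^*\mathcal{G}$ and $\overline{Q}_M\mathcal{L}\mathcal{G}$.
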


When the base code is CSS, and the logical operator being measured is a \(Z\) type or \(X\) type logical operator, we may take our symplectic basis to be CSS as well, \emph{e.g.} \(\mathcal{L} = \langle \overline{X}_1, \overline{Z}_1, \dots, \overline{X}_{k-1}, \overline{Z}_{k-1}\rangle\), and we can further split up our stabilizer into the set of \(X\) checks \(\mathcal{S}_X\) and \(Z\) checks \(\mathcal{S}_Z\). In this case, we have the following further simplification of the fault distance, as originally noted by \cite{cross2024improved}.

\begin{theorem}[Theorem 11 of \cite{cross2024improved}, slightly modified]
  \label{thm:css-fault-dist}
  When \(\mathcal{Q}\) is a CSS code of distance \(d\), and \(\overline{P}_M\) is a \(Z\) type logical operator, then the fault distance of the protocol described above is \(\min (d, d_Z(\mathcal{L}^* \mathcal{S}'))\), \textit{i.e.} the minimum of \(d\) and the \(Z\)-distance of the merged code. When \(\overline{P}_M\) is an \(X\) type logical operator, then the fault distance is \(\min (d, d(\mathcal{L}_X^* \mathcal{S}_V' \mathcal{S}_X))\) where \(\mathcal{S}'_V\) is the group generated by vertex checks of the merged code, and \(\mathcal{L}_X^* = \langle \overline{X}_1, \dots, \overline{X}_{k-1}\rangle \setminus \{I\}\).
\end{theorem}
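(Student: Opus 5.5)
The plan is to derive the statement directly from Lemma~\ref{thm:fault-dist}, which gives the timelike fault distance as $d(\overline{Q}_M\mathcal{L}\mathcal{G})$ and the spacelike fault distance as $d(\mathcal{L}^*\mathcal{G})$, with $\mathcal{G}=\langle\mathcal{S},\mathcal{S}'\rangle$. The argument uses the CSS structure to split every element of these cosets into an $X$-part and a $Z$-part. Since $\wt(P)\ge\wt(P_X)$ and $\wt(P)\ge\wt(P_Z)$, it suffices to show that one of the two parts always lies in a coset whose minimum weight is already controlled. First I will record the generators of $\mathcal{G}$ when $\overline{P}_M$ is $Z$-type:
\begin{itemize}
\item the $Z$ checks of $\Q$, which need no deformation because they commute with $\overline{P}_M$;
\item the vertex checks, which are $Z$-type;
\item the cycle checks, which are $X$ on edges;
\item the deformed $X$ checks $S\prod_{e\in D(S)}X(e)$;
\item the original $X$ checks and the single-qubit $X(e)$ on edge qubits, both belonging to $\mathcal{S}$.
\end{itemize}
Hence the $Z$-part of any $g\in\mathcal{G}$ lies in $\mathcal{S}'_Z$, the $Z$-stabilizer of the merged code. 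The $X$-part of $g$, restricted to base qubits, lies in $\mathcal{S}_X$. I will take the symplectic basis CSS, with each $\overline{X}_i$ dressed by edge $X$'s as in the discussion preceding Lemma~\ref{thm:fault-dist}. The $\overline{Z}_i$ need no dressing, since they commute with all vertex checks.

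\emph{Spacelike errors.} Take $l g$ with $l=l_Xl_Z\in\mathcal{L}^*$ and $g\in\mathcal{G}$. If $l_Z\neq I$, the $Z$-part of $lg$ lies in $l_Z\mathcal{S}'_Z$, a nontrivial $Z$-logical coset of $\Q'$ that avoids $\overline{P}_M$. Its weight is therefore at least $d_Z(\mathcal{L}^*\mathcal{S}')$. If $l_Z=I$, then $l_X\neq I$, and the $X$-part restricted to base qubits lies in $l_X\mathcal{S}_X$, where $l_X$ is a nontrivial logical $X$ of $\Q$. This forces weight at least $d$.

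\emph{Timelike errors.} Here $\overline{Q}_M$ is $X$-type and appears in the $X$-part. The base-qubit restriction of that $X$-part lies in $\overline{Q}_M l_X\mathcal{S}_X$, which is a nontrivial $X$-logical class of $\Q$, so its weight is at least $d$.

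Together these give fault distance $\ge\min(d,d_Z(\mathcal{L}^*\mathcal{S}'))$. For the matching upper bound, note first that every element of $\mathcal{L}^*\mathcal{S}'$ is itself an element of $\mathcal{L}^*\mathcal{G}$. Second, take a weight-$d$ logical of $\Q$. Multiplying by $\overline{P}_M\in\mathcal{S}'$ if necessary, it lies either in $\mathcal{L}^*\mathcal{S}$ or in $\overline{Q}_M\mathcal{L}\mathcal{S}$, so it realizes a fault of weight at most $d$.

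\emph{$X$-type $\overline{P}_M$.} The same skeleton applies with the roles swapped. Now the vertex checks carry $X$ on base qubits and $Z$ on edges, so they are no longer pure, while $\mathcal{G}$ still contains all single-qubit $X(e)$ on edges. In this case I will split by the $Z$-logical component on base qubits instead. If $lg$, respectively $\overline{Q}_M l g$, has a nontrivial $Z$-logical component, its $Z$-part restricted to base qubits lies in a nontrivial coset of $\mathcal{S}_Z$, giving weight at least $d$. Otherwise, multiplying by the edge $X(e)\in\mathcal{S}$, cycle checks and deformed checks can be absorbed. The operator then reduces to an element of $\mathcal{L}_X^*\mathcal{S}'_V\mathcal{S}_X$, and the edge $X$'s can be chosen to minimize weight, giving exactly $d(\mathcal{L}_X^*\mathcal{S}'_V\mathcal{S}_X)$.

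The main obstacle is this bookkeeping for the $X$-type case. I need to check carefully which generators of $\mathcal{G}$ can be stripped without increasing weight. In particular, the $Z$-parts of the deformed $Z$ checks live on edges and must be traded against vertex-check contributions. This has to be done so that the minimum over the coset is exactly $d(\mathcal{L}_X^*\mathcal{S}'_V\mathcal{S}_X)$ rather than merely bounded by it. I would first verify this on the generators and then pass to the coset by linearity.
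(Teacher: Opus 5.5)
Your proposal is correct in substance but takes a different route from the paper. The paper proves nothing from scratch. For \(Z\)-type \(\overline{P}_M\) it simply invokes Theorem~11 of \cite{cross2024improved}. For \(X\)-type \(\overline{P}_M\) it exchanges \(X\) and \(Z\) on the base code, which does two things:
\begin{itemize}
\item the measured operator becomes \(Z\)-type;
\item the non-CSS merged code becomes CSS: vertex checks become \(Z\) on ports times \(Z\) on edges, and the deformed former-\(Z\) checks become \(X\)-type.
\end{itemize}
It then applies the same cited theorem and translates the resulting \(Z\)-type coset back into \(\mathcal{L}_X^*\mathcal{S}'_V\mathcal{S}_X\). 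You instead derive both cases directly from Lemma~\ref{thm:fault-dist}. You split every element of \(\mathcal{L}^*\mathcal{G}\) and \(\overline{Q}_M\mathcal{L}\mathcal{G}\) into \(X\)- and \(Z\)-parts and identify which part is pinned to a nontrivial class. The paper's route is two lines and makes the \(X\leftrightarrow Z\) symmetry explicit, but it rests entirely on the external theorem. Yours is self-contained and proves explicitly the lower bound, which is all the later appendices use. It also shows transparently why only \(\mathcal{S}'_V\), and not the cycle or deformed checks, survives in the \(X\)-type formula.

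\textbf{The \(X\)-type ``main obstacle''.} This obstacle is illusory, and your description of it is off. The deformed \(Z\) checks are \(S\prod_{e\in D(S)}X(e)\), i.e.\ \(Z\) on base qubits and \(X\) on edges, not \(Z\) on edges. Every \(X(e)\) lies in \(\mathcal{S}\subseteq\mathcal{G}\). So in the trivial-\(Z\)-class case every element has the form \(l_X s_X s_Z X_E\Lambda\), where:
\begin{itemize}
\item \(s_Z\in\mathcal{S}_Z\);
\item \(X_E\) is an arbitrary \(X\)-string on the edges;
\item \(\Lambda\in\mathcal{S}'_V\).
\end{itemize}
Deleting \(s_Z\) and \(X_E\) can only shrink the support, since the support on each qubit is the union of its \(X\)- and \(Z\)-supports. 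Hence the minimum is attained on \(l_X s_X\Lambda\in\mathcal{L}_X^*\mathcal{S}'_V\mathcal{S}_X\), and no trading against vertex checks is needed. You should, however, drop \(s_Z\) explicitly rather than only choosing the edge \(X\)'s.

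\textbf{The upper-bound step.} This step fails when every minimum-weight logical of \(\Q\) is equivalent to \(\overline{P}_M\). Such a logical then lies in \(\mathcal{G}\) and is not a fault; an example is \(k=1\) with \(d_Z<d_X\). In that case the bound \(d\) comes from the protocol itself, not from Lemma~\ref{thm:fault-dist}: the same measurement error on one vertex check in all \(d\) rounds of step~3 is undetectable and flips the outcome. This is harmless for the paper, since only the lower bound is used downstream.
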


\begin{proof}
  The case of measuring a \(Z\) type logical is identical to Theorem 11 of \cite{cross2024improved}. However, due to our definition of an ancilla system, when measuring an \(X\) type logical operator on a CSS code, our resulting merged code is no longer CSS, and the argument of \cite{cross2024improved} no longer applies directly. In this case though, we may consider an identical protocol applied to a base code with all of the \(X\) and \(Z\) operators swapped. In this protocol, the operator being measured is \(Z\) type, and we can apply the result of Theorem 11 of \cite{cross2024improved}. The relevant \(Z\) type operators in this exchanged basis correspond to the set \(\mathcal{L}_X^* \mathcal{S}_V' \mathcal{S}_X\) which yields the desired expression for the distance.
\end{proof}

\section{Extractor Construction}
\label{sec:extractor-construction}

In this section, we provide the details behind our extractor construction, and prove its fault tolerance.
The construction follows three steps. First, we construct extractors capable of measuring any logical operator in a given information row/column of an HGP code (Appendix \ref{sec:single-col}).
Then, we connect several of these systems together using additional qubits and checks called \emph{bridges} \cite{swaroop2024universal,cross2024improved}, to obtain an extractor capable of measuring every logical operator of a given basis (Appendix \ref{sec:single-basis-ext}). 
Finally, we connect two such extractors together with another bridge to obtain an extractor capable of measuring every logical operator in the code (Appendix \ref{sec:full-ext}). 
The fault tolerance of the full extractors relies on the fault tolerance of the single-column extractors, which can be verified numerically.
Providing provable constructions for such single-column extractor graphs remains an interesting area for future research. Figure~\ref{fig:extractor-construction-tikz} gives an overview of all three levels of the construction.

%
%
%

\begin{figure*}[tbp]
\centering
\begin{tikzpicture}[
  font=\footnotesize,
  every node/.style={inner sep=2pt},
]

\fill[RoyalBlue!10] (0, 0.75) rectangle (16, -4.15);
\draw[RoyalBlue!55, thick, rounded corners=4pt] (0, 0.75) rectangle (16, -4.15);
\node[font=\small\bfseries] at (8, 0.465)
  {Single Column Extractor \((\mathcal{X}_C,\, f,\, D)\)\quad[\S\ref{sec:single-col}]};

\fill[RoyalBlue!22] (0.20, 0.18) rectangle (5.10, -3.95);
\draw[RoyalBlue!55, thin, rounded corners=2pt] (0.20, 0.18) rectangle (5.10, -3.95);
\node[font=\footnotesize\bfseries, anchor=north] at (2.65, 0.15)
  {Ancilla graph $G=(V,E)$};

\def\vrad{8pt}
\coordinate (V1) at (1.35, -1.05);
\coordinate (V2) at (4.05, -1.05);
\coordinate (V3) at (1.35, -2.75);
\coordinate (V4) at (4.05, -2.75);

\draw[thick] (V1) -- (V2);
\draw[thick] (V1) -- (V3);
\draw[thick] (V2) -- (V4);
\draw[thick] (V3) -- (V4);
\draw[thick] (V1) -- (V4);  

\filldraw[fill=RoyalBlue!50, draw=RoyalBlue!80] (V1) circle (\vrad);
\node[font=\scriptsize, left=6pt] at (V1) {$v_1$};
\filldraw[fill=RoyalBlue!50, draw=RoyalBlue!80] (V2) circle (\vrad);
\node[font=\scriptsize, right=6pt] at (V2) {$v_2$};
\filldraw[fill=RoyalBlue!50, draw=RoyalBlue!80] (V3) circle (\vrad);
\node[font=\scriptsize, left=6pt] at (V3) {$v_3$};
\filldraw[fill=RoyalBlue!50, draw=RoyalBlue!80] (V4) circle (\vrad);
\node[font=\scriptsize, right=6pt] at (V4) {$v_4$};

\node[font=\scriptsize, inner sep=1pt] at (2.70, -0.90) {$e_1$};
\node[font=\scriptsize, inner sep=1pt] at (1.20, -1.90) {$e_2$};
\node[font=\scriptsize, inner sep=1pt] at (4.25, -1.90) {$e_3$};
\node[font=\scriptsize, inner sep=1pt] at (2.70, -2.60) {$e_4$};
\node[font=\scriptsize, inner sep=1pt] at (3.05, -1.95) {$e_5$};

\node[font=\scriptsize, anchor=north, align=center] at (2.65, -3.00)
  {vertices $\leftrightarrow$ $Z$-checks\\
   edges $\leftrightarrow$ ancilla qubits\\
   cycles $\leftrightarrow$ $X$-checks};

\fill[RoyalBlue!8] (5.30, 0.18) rectangle (10.25, -3.95);
\draw[RoyalBlue!45, thin, rounded corners=2pt] (5.30, 0.18) rectangle (10.25, -3.95);
\node[font=\footnotesize\bfseries, anchor=north] at (7.78, 0.15)
  {Ancilla $\mathcal{X}_C=\mathrm{CSS}(H_X,H_Z)$};

\node[anchor=north west, align=left, font=\scriptsize] at (5.45, -0.850) {%
  $H_Z$: incidence matrix of $G$\\[4pt]
  $H_X$: rows span $\ker(H_Z)$ {\tiny (cycle basis)} \\[4pt]
  $k_{\mathcal{X}}=0$;\quad $\dim\ker(H_Z^T)=1$\\[8pt]
  \textbf{Port function} $f: q_{i,1}\mapsto v_i$\\[4pt]
  \textbf{Deformation map} $D:\mathcal{S}\!\to\!\mathcal{P}(E)$\\[4pt]
};

\fill[RoyalBlue!8] (10.45, 0.18) rectangle (15.80, -3.95);
\draw[RoyalBlue!45, thin, rounded corners=2pt] (10.45, 0.18) rectangle (15.80, -3.95);
\node[font=\footnotesize, anchor=north, align=center] at (13.13, 0.15)
  {{\bf Measures}\\{\tiny (one column or row at a time)}};

\node[anchor=north west, align=left, font=\scriptsize] at (10.60, -0.95) {%
  $Z(c\cdot e_j^T,0)$,\; $c\!\in\!\ker\partial_A$,\;
    $j\!\in\![k_B]$\\[4pt]
  $X(e_i\cdot c^T,0)$,\; $i\!\in\![k_A]$,\;
    $c\!\in\!\ker\partial_B$\\[4pt]
  $Z(0,e_i\!\cdot\!\tilde{c}^T)$,\; $\tilde{c}\!\in\!\ker\partial_B^T$\\[4pt]
  $X(0,\tilde{c}\!\cdot\!e_j^T)$,\; $\tilde{c}\!\in\!\ker\partial_A^T$\\[5pt]
};

\fill[ForestGreen!10] (0, -4.55) rectangle (16, -9.30);
\draw[ForestGreen!55, thick, rounded corners=4pt] (0, -4.55) rectangle (16, -9.30);
\node[font=\small\bfseries] at (8, -4.825)
  {Single Basis Extractor ($\mathcal{X}_Z,\,f_Z,\,D_Z$)\quad[\S\ref{sec:single-basis-ext}]};

\fill[ForestGreen!18] (0.20, -5.10) rectangle (9.75, -9.10);
\draw[ForestGreen!50, thin, rounded corners=2pt] (0.20, -5.10) rectangle (9.75, -9.10);
\node[font=\footnotesize\bfseries, anchor=north] at (4.98, -5.08)
  {$k$ copies of $G$, joined by bridges $B^{(i)}$};

\fill[RoyalBlue!18] (0.45, -5.60) rectangle (2.65, -8.52);
\draw[RoyalBlue!55, thin, rounded corners=2pt] (0.45, -5.60) rectangle (2.65, -8.52);
\node[font=\scriptsize\bfseries, anchor=north] at (1.55, -6.20) {$G^{(1)}$};
\node[anchor=north west, align=left, font=\scriptsize] at (0.57, -6.60) {%
  $|V|$ $Z$-checks\\[3pt]
  $|E|$ qubits\\[5pt]
  $f_Z(i,1)\!=\!v_i^{(1)}$
};

\foreach \y in {-6.40, -7.06, -7.72} {
  \draw[<->, thin, BrickRed!70] (2.75, \y) -- (3.75, \y);
}
\node[font=\scriptsize, BrickRed!80, anchor=north] at (3.25, -7.85) {$B^{(1)}$};

\fill[RoyalBlue!18] (3.85, -5.60) rectangle (6.05, -8.52);
\draw[RoyalBlue!55, thin, rounded corners=2pt] (3.85, -5.60) rectangle (6.05, -8.52);
\node[font=\scriptsize\bfseries, anchor=north] at (4.95, -6.20) {$G^{(2)}$};
\node[anchor=north west, align=left, font=\scriptsize] at (3.97, -6.60) {%
  $|V|$ $Z$-checks\\[3pt]
  $|E|$ qubits\\[5pt]
  $f_Z(i,2)\!=\!v_i^{(2)}$
};

\node[font=\small, anchor=center] at (6.68, -7.06) {$\cdots$};

\fill[RoyalBlue!18] (7.30, -5.60) rectangle (9.50, -8.52);
\draw[RoyalBlue!55, thin, rounded corners=2pt] (7.30, -5.60) rectangle (9.50, -8.52);
\node[font=\scriptsize\bfseries, anchor=north] at (8.40, -6.20) {$G^{(k)}$};
\node[anchor=north west, align=left, font=\scriptsize] at (7.42, -6.60) {%
  $|V|$ $Z$-checks\\[3pt]
  $|E|$ qubits\\[5pt]
  $f_Z(i,k)\!=\!v_i^{(k)}$
};

\node[font=\scriptsize, anchor=north] at (4.98, -8.50)
  {right block: $f_Z(\sigma(i),\!\sigma(j))\!=\!v_j^{(i)}$\;
   (permutation $\sigma$ on $[n]$, see text)};

\fill[ForestGreen!8] (9.95, -5.10) rectangle (15.80, -9.10);
\draw[ForestGreen!45, thin, rounded corners=2pt] (9.95, -5.10) rectangle (15.80, -9.10);
\node[font=\footnotesize\bfseries, anchor=north] at (12.88, -5.13)
  {Properties};

\node[anchor=north west, align=left, font=\scriptsize] at (10.10, -5.73) {%
  Measures \emph{any} $Z$-basis logical operator:\\[4pt]
  $\Zlog{=}Z\!\bigl(\textstyle\sum_i c_i{\cdot}e_i^T,\,
    \sum_j e_{\sigma(j)}{\cdot}\tilde{c}_j^T\bigr)$\\[8pt]
  \textbf{Theorem~\ref{thm:single-basis-dist}}: if $(\mathcal{X},\,f,\,D)$ is\\[4pt]
  distance-preserving, then $(\mathcal{X}_Z,\,f_Z,\,D_Z)$\\[4pt]
  is distance-preserving.\\[8pt]
  Analogously: $(\mathcal{X}_X,\,f_X,\,D_X)$ for $X$-basis
};

\fill[Mahogany!10] (0, -9.70) rectangle (16, -14.40);
\draw[Mahogany!55, thick, rounded corners=4pt] (0, -9.70) rectangle (16, -14.40);
\node[font=\small\bfseries] at (8, -10.05)
  {Full Extractor $(\mathcal{X},\,f_F,\,D_F)\quad[\S\ref{sec:full-ext}]$};

\fill[Mahogany!15] (0.20, -10.30) rectangle (9.75, -14.30);
\draw[Mahogany!50, thin, rounded corners=2pt] (0.20, -10.30) rectangle (9.75, -14.30);

\fill[BrickRed!18] (0.40, -11.20) rectangle (4.20, -13.90);
\draw[BrickRed!55, thin, rounded corners=2pt] (0.40, -11.20) rectangle (4.20, -13.90);
\node[font=\scriptsize\bfseries, anchor=north] at (2.30, -11.70)
  {$X$-extractor $(\mathcal{X}_X,\,f_X,\,D_X)$};
\node[anchor=north, align=center, font=\scriptsize] at (2.15, -12.10) {%
  $G_X$: $k$ copies of $G$\\[4pt]
  $f_X\!:\!\supp(\Xlog)\!\to\!V_X$\\[4pt]
  Measures $X$-basis ops.
};

\fill[Mahogany!25] (4.30, -11.20) rectangle (5.65, -13.90);
\draw[Mahogany!55, thin] (4.30, -11.20) rectangle (5.65, -13.90);
\node[font=\scriptsize, Mahogany!90, rotate=0] at (4.95, -12.15)
  {$k^2$ };
\node[font=\scriptsize, Mahogany!90, rotate=0] at (4.95, -12.55)
  {bridge};
\node[font=\scriptsize, Mahogany!90, rotate=0] at (4.95, -12.95)
  {edges};

\fill[RoyalBlue!18] (5.75, -11.20) rectangle (9.55, -13.90);
\draw[RoyalBlue!55, thin, rounded corners=2pt] (5.75, -11.20) rectangle (9.55, -13.90);
\node[font=\scriptsize\bfseries, anchor=north] at (7.60, -11.70)
  {$Z$-extractor $(\mathcal{X}_Z,\,f_Z,\,D_Z)$};
\node[anchor=north, align=center, font=\scriptsize] at (7.60, -12.10) {%
  $G_Z$: $k$ copies of $G$\\[4pt]
  $f_Z\!:\!\supp(\Zlog)\!\to\!V_Z$\\[4pt]
  Measures $Z$-basis ops.\\
};

\node[font=\scriptsize, anchor=north, align=center] at (4.98, -13.90)
  {$f_F(q)=f_Z(q)$ if $q\in\mathrm{dom}(f_Z)$,\; else $f_X(q)$};

\node[font=\footnotesize\bfseries, anchor=north, align=center] at (4.98, -10.35)
  {$X$-extractor $\mathcal{X}_X$ and $Z$-extractor $\mathcal{X}_Z$\\
   joined by $k^2$ bridge edges};

\fill[Mahogany!8] (9.95, -10.30) rectangle (15.80, -14.30);
\draw[Mahogany!45, thin, rounded corners=2pt] (9.95, -10.30) rectangle (15.80, -14.30);
\node[font=\footnotesize\bfseries, anchor=north] at (12.88, -10.35)
  {Properties};

\node[anchor=north west, align=left, font=\scriptsize] at (10.10, -10.90) {%
  Measures all logical Pauli operators\\[4pt]
  $\overline{P}_M = \Xlog\cdot\Zlog$\\[8pt]
  \textbf{Fault distance}:\\[4pt]
  $\min\!\bigl(d,\;k^2,\;k\cdot\lambda(G)\bigr)$\\[3pt]
  $\lambda(G)$: edge connectivity of~$G$\\[6pt]
  $\mathcal{X}$ adds $2|E_X| + 2|E_Z| + 2k^2 + 1$\\
  ~~ ancilla qubits to the code 
};

\end{tikzpicture}
\caption{
  Assembly of the full extractor for a cyclic
  $\HGP(\partial,\partial)$ code
  (Appendix~\ref{sec:extractor-construction}).
  \emph{Top:} A single-column extractor $(\mathcal{X}_C,f,D)$ is built from a graph $G=(V,E)$
  with $|V|=n$: the incidence matrix of $G$ gives the ancilla $Z$-checks, a
  cycle basis of $G$ gives the $X$-checks, a port function $f$ maps code
  qubits to vertices, and a deformation map $D:\mathcal{S}\to\mathcal{P}(E)$ assigns each
  base-code stabilizer a path matching in $G$.  The HGP surgery subcode
  (Theorem~\ref{thm:sur-sc-dist}) reduces the associated distance
  computation from $O(n^2)$ to $O(n + |E| )$ qubits.
  \emph{Middle:} A single-basis extractor $(\mathcal{X}_Z,f_Z,D_Z)$ stacks $k$
  copies of $G$ linked by bridges (consisting of edges and additional cycles) and measures any canonical
  $Z$-logical operator; it is distance-preserving whenever the
  single-column extractor is (Theorem~\ref{thm:single-basis-dist}).  An
  $X$-basis extractor $(\mathcal{X}_X,f_X,D_X)$ is constructed analogously.
  \emph{Bottom:} The full extractor $(\mathcal{X},f_F,D_F)$ joins
  $\mathcal{X}_X$ and $\mathcal{X}_Z$ via $k^2$ additional bridge edges and
  measures any logical Pauli operator $\overline{P}_M=\Xlog\cdot\Zlog$ with
  fault distance $\min(d,k^2,k\cdot\lambda(G))$, where $\lambda(G)$ is the
  edge connectivity of~$G$.}
\label{fig:extractor-construction-tikz}
\end{figure*}

Throughout this section, \(\mathcal{S}\) denotes the stabilizer of the base code, now including the single qubit \(X\) operators on the qubits of the ancilla system, and \(\mathcal{S}'\) denotes the stabilizers of the merged code. Since all of the base codes will be HGP codes, which are CSS, we let \(\mathcal{S}_X\) and \(\mathcal{S}_Z\) denote the \(X\) and \(Z\) checks of the base code.
Let \(\mathcal{S}'_V\) denote the group generated by the vertex checks of the merged code, and \(\mathcal{S}_C\) denote the group generated by the cycle checks of the merged code.

\subsection{Single Column/Row Extractors}
\label{sec:single-col}

The basis of our extractor construction is a ``single-column'' extractor, \textit{i.e.} an ancilla system capable of measuring any operator of the forms
\begin{enumerate}
\item \(Z(c \cdot e_j^\top , 0)\), for \(c \in \ker{\partial_A}, j \in [k_B]\)
\item \(X(e_i \cdot c^\top , 0)\), for \(i \in [k_A], c \in \ker{\partial_B}\)
\item \(Z(0, e_i \cdot \tilde{c}^\top )\), for \(i \in [k_A^\top ], \tilde{c} \in \ker{\partial_B^\top }\)
\item \(X(0, \tilde{c} \cdot e_j^\top )\), for \(\tilde{c} \in \ker{\partial_A^\top }, j \in [k_B^\top ]\).
\end{enumerate}
where the specific set of operators is determined by which row or column the extractor is connected to, and
where we are assuming that we have ordered the bits of the classical codes in such a way that the first \(k_A\), \(k_B\), \(k_A^\top \), and \(k_B^\top \) indices form information sets for \(\mathcal{C}_A, \mathcal{C}_B, \mathcal{C}_A^\top \), and \(\mathcal{C}_B^\top \) respectively. 
For simplicity we will focus on logical operators of the first form, but all of our arguments extend straightforwardly to the rest.
By the same argument as in Appendix~\ref{sec:meas}, we may take \(\mathcal{L} = \langle \overline{X}_1, \overline{Z}_1, \dots , \overline{X}_{k-1}, \overline{Z}_{k-1} \rangle\) to be the subgroup generated by unmeasured logical operators, where we take \(\{\overline{X}_1, \overline{Z}_1, \dots, \overline{X}_M, \overline{Z}_M\}\) to be a symplectic basis of logical operators for \(\mathcal{S}\), and we assume that we are measuring the operator \(\overline{Z}_M\).

By Theorem \ref{thm:css-fault-dist}, the fault distance of using an extractor to measure such a logical operator is given by the minimum of \(d\) and
\begin{equation}
d \left ( \mathcal{L}_Z^* \mathcal{S}_Z \mathcal{S}_V' \right )
\end{equation}
where \(\mathcal{L}_Z^* = \langle \Zlog_1, \dots, \Zlog_{k-1}\rangle \setminus \{I\}\).

We now demonstrate that we only have to look at a restricted set of logical \(Z\) operators when analyzing the distance of single-column systems.
\begin{lemma}
  \label{lem:single-col}
	Let $\Q$ be an HGP code with distance \(d\), and \((\A, f, D)\) be an ancilla system used to measure the logical operator $\overline{Z} = Z(c \cdot e_j^\top , 0), c \in \ker{\partial_A} \setminus \{0 \}$ supported entirely on a single column \(j \in [k_B]\). Then, the fault distance during the measurement procedure is at least the minimum of \(d\) and
\begin{equation}
    d \left (
      \left \{ Z(c' \cdot e_j^\top , 0) \Lambda_A : c' \in \ker{\partial_A} \setminus \{c, 0\}, \Lambda_A \in \mathcal{S}'_V  \right \}
    \right )
\end{equation}
\end{lemma}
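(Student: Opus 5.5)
The plan is to start from Theorem~\ref{thm:css-fault-dist}, specialized as in the text above. The fault distance is the minimum of \(d\) and \(d(\mathcal{L}_Z^*\mathcal{S}_Z\mathcal{S}'_V)\). So it suffices to show that every element
\[
W=\Zlog_L\, s\,\Lambda_A ,\qquad \Zlog_L\in\mathcal{L}_Z^*,\ s\in\mathcal{S}_Z,\ \Lambda_A\in\mathcal{S}'_V ,
\]
has weight at least \(d\) or at least the weight of some element of the restricted set in the statement.

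First I fix the symplectic basis. Take \(\Zlog_M=Z(c\cdot e_j^\top,0)\). For the remaining \(Z\) logicals, take canonical basis operators \(\bar Z^L_{i'j'}\) and \(\bar Z^R_{i'j'}\), after a change of basis inside column \(j\) so that \(c\) is one of the basis codewords. Then every \(\Zlog_L\in\mathcal{L}_Z^*\) is a product of canonical operators not containing \(\Zlog_M\).

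The key structural observation is about the vertex checks. Since \(f\) is defined only on \(\supp(\Zlog_M)\subset\) column \(j\) of the left block, every element of \(\mathcal{S}'_V\) acts on the base code only inside column \(j\); elsewhere it acts only on ancilla edges.

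\textbf{Case 1: \(\Zlog_L\) has a factor outside column \(j\)}, i.e.\ some \(\bar Z^L_{i'j'}\) with \(j'\neq j\), or some right-block operator \(\bar Z^R\). Outside column \(j\), the base-code part of \(W\) coincides with that of \(\Zlog_L s\), which is a representative of \(\Zlog_L\). By Lemma~\ref{lem:hgp-clean} and Corollary~\ref{cor:clean} (and their right-block analogues), this representative has weight at least \(d\) outside column \(j\). Hence \(|W|\ge d\).

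\textbf{Case 2: \(\Zlog_L\) is supported in column \(j\).} Then \(\Zlog_L=Z(c'\cdot e_j^\top,0)\) with \(c'\in\ker\partial_A\). Because \(\Zlog_L\) is a nontrivial product of basis operators excluding \(\Zlog_M\), we have \(c'\notin\{0,c\}\). It remains to remove \(s=Z(U\partial_B,\partial_AU)\) without increasing the weight. Let
\[
R=\{\,i:(U\partial_B)_{i,j}=1\,\}
\]
be the set of rows where \(s\) flips the column-\(j\) qubit. Then the column-\(j\) (and ancilla) parts of \(W\) and of \(W'=Z(c'\cdot e_j^\top,0)\Lambda_A\) differ in at most \(|R|\) positions.

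For each \(i\in R\), the row \((U\partial_B)_{(i,\cdot)}\) is a nonzero element of \(\rs(\partial_B)\) with a \(1\) in information coordinate \(j\le k_B\). As in the proof of Lemma~\ref{lem:hgp-clean}, using the systematic form of \(\partial_B\), it has another \(1\) in some column \(>k_B\), hence outside column \(j\). On that qubit only \(s\) acts among the factors of \(W\), since \(\Zlog_L\) and \(\Lambda_A\) are confined to column \(j\) on the left block. These compensating qubits lie in distinct rows. Therefore
\[
|W|\ \ge\ |W|_{\text{col }j\cup\text{anc}}+|R|\ \ge\ |W'| ,
\]
and \(W'\) belongs to the restricted set of the statement.

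Combining the two cases gives the claimed lower bound. The step I expect to need the most care is Case 2: the charging argument must account for every qubit on which \(s\) and \(W'\) disagree. In particular, right-block support of \(s\) only adds weight, so it can be dropped. A secondary subtlety is justifying, in Case 1, the choice of basis so that "has a factor outside column \(j\)" is well defined modulo stabilizers. This follows from the homology invariants of Lemma~\ref{lem:z-hom-inv}.
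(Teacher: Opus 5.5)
Your proposal is correct and follows essentially the same route as the paper. You first eliminate operators with support outside column $j$ (the right block via Lemma~\ref{lem:z-hom-inv}, other left columns via Lemma~\ref{lem:hgp-clean}), then strip the stabilizer factor using the fact that every row of $U\partial_B$ with a $1$ in column $j$ has another $1$ outside column $j$, so $|(c'+f^{-1}(\eta))e_j^\top + U\partial_B| \ge |(c'+f^{-1}(\eta))e_j^\top|$; the only cosmetic difference is that the paper works directly with an arbitrary representative $Z(L+U\partial_B,\,R+\partial_A U)$ of a different homology class instead of first fixing a symplectic basis adapted to $c$.
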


\begin{proof}
  We will show that for any logical \(Z\) operator \(\Zlog'\) in a different homology class than \(\Zlog\), and any \(\Lambda_A \in \mathcal{S}'_V\), either \(\wt(\Zlog' \Lambda_A)\ge d\), or there exists a \(\Zlog''\) contained entirely in the same column as \(\Zlog\) such that \(\wt(\Zlog' \Lambda_A) \ge \wt(\Zlog'' \Lambda_A)\).
  Without loss of generality we can take \(\Zlog = Z(c \cdot e_1^\top , 0)\), and \(\Zlog' = Z(L + U\partial_B, R + \partial_A U)\) for \(U \in \mathbb{F}_2^{n_A \times m_B}\), \(L = \sum_{j=1}^{k_B} c_j \cdot e_j^\top \), and \(R = \sum_{i = 1}^{k_A^\top } e_i \cdot \tilde{c}_i^\top \), where \(c_j \in \ker{\partial_A}\) and \(\tilde{c}_i \in \ker{\partial_B^\top }\).

  By Lemma \ref{lem:z-hom-inv}, if \(R \ne 0\) then \(\wt(\Zlog' \Lambda_A) \ge d_B^\top  \ge d\).
  Thus we may restrict our attention to the case where \(R = 0\).
  Additionally, if there exists a \(j' > 1\) such that \(c_{j'} \ne 0\), then by Lemma \ref{lem:hgp-clean}, \(\wt(\Zlog' \Lambda_A) \ge d_A \ge d\).
  Thus we can assume that \(L = c' \cdot e_1^\top \).
  Let \(\eta\) be the set of vertices corresponding to \(\Lambda_A\), and let \(S_A\) be the support of \(\Lambda_A\) on the ancilla system.
  We then have that
  \begin{equation}
    \wt(\Zlog' \Lambda_A) = |(c' + f^{-1}(\eta))e_1^\top  + U\partial_B| + |\partial_A U| + |S_A|.
  \end{equation}
  Since \(e_1 \notin \rs(\partial_B)\), for any \(i\) such that \((U\partial_B)_{i, 1} = 1\), there exists a \(j > 1\) such that \((U\partial_B)_{i, j} = 1\). Thus,
  \begin{equation}
    |(c' + f^{-1}(\eta)) e_1^\top  + U\partial_B| \ge |(c' + f^{-1}(\eta))e_1^\top |.
  \end{equation}
  Letting \(\Zlog'' = Z(c'\cdot e_1^\top , 0)\), we see that \(\wt(\Zlog' \Lambda_A) \ge \wt(\Zlog'' \Lambda_A)\) as desired.
\end{proof}

We now show that to determine the distance of the deformed code, it suffices to examine a much smaller code that we call the \emph{HGP surgery subcode}.

\begin{definition}[HGP Surgery Subcode]
Let $\mathcal{Q} = \mathrm{HGP}(\partial_A, \partial_B)$ be an HGP code, where \(\partial_A \in \mathbb{F}_2^{m_A \times n_A}\), and let $\overline{Z} = Z(c \cdot e_j^\top , 0)$ for $c \in \ker{\partial_A} \setminus \{0\}$ and $j \in [k_B]$ be a logical $Z$ operator. Let $\A = \mathrm{CSS}(H_X, H_Z)$ be an ancilla system with $\hat{n}$ qubits, $\hat{m}_{X}$ $X$ checks, $\hat{m}_{Z}$ $Z$ checks, \(k_{\A} = 0\), an injective port function $f$, and a deformation map \(D\).

The \emph{HGP surgery subcode} is a CSS code defined as follows: 
\begin{itemize}
	\item The code has $n_A + \hat{n}$ qubits that we break into two sets of size $n_A$ and $\hat{n}$. We call these sets the \emph{base block} and \emph{ancilla block} respectively.
	\item For each $i \in [\hat{m}_{Z}]$ we add a $Z$ check with support on $H_Z^\top  e_i$ on the ancilla block, as well as \(f^{-1}(\{i\})\) in the base block.
	\item For each $i \in [m_A]$ we add an $X$ check \(S\), which has support $\partial_A^\top  e_i$ on the $n_A$ block of qubits, and \(D(S)\) on the ancilla block, where \(D(S)\) by definition is a path matching of \(f(c \cap \partial_A^\top  e_i)\).
	\item For each $i \in [\hat{m}_{X}]$, we add an $X$ check with support on $H_X^\top  e_i$ on the ancilla block. 
\end{itemize}
\end{definition}

\begin{theorem}
  \label{thm:sur-sc-dist}
Suppose $\partial_A$ is a parity check matrix, and let $c \in \ker{\partial_A} \setminus \{0\}$.
Then the HGP surgery subcode associated with $\mathrm{HGP}(\partial_A, \partial_B)$ and the logical $Z$ operator $Z(c \cdot e_j^{\top}, 0)$ for $j \in [k_B]$ has the same $Z$ distance as the deformed code formed by joining $\HGP(\partial_A, \partial_B)$ and an ancilla system used to measure $\overline{Z}$.
\end{theorem}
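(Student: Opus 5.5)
The plan is to prove the equality of $Z$ distances as two inequalities, using an explicit weight-preserving correspondence between $Z$-type operators of the two codes. Take $j=1$ without loss of generality. The correspondence sends an operator supported on column $1$ of the left block together with the ancilla block to the subcode operator with the same support, where column $1$ is identified with the base block $\mathbb{F}_2^{n_A}$. The first task is to check that the stabilizer structures match under this identification.

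\emph{$X$ checks.} A deformed HGP $X$ check indexed by $(i,b)\in[m_A]\times[n_B]$ acts on the left block as $\partial_A^\top e_i\otimes e_b$, so it touches column $1$ only when $b=1$. Its path-matching part $D(S)$ is a path matching of $f(c\cap\partial_A^\top e_i)$ when $b=1$, and is empty otherwise, because $\supp(\Zlog)$ lies in column $1$. Its right-block part is irrelevant, since our $Z$ operators have no right-block support. The cycle checks appear verbatim in both codes. Hence a $Z$ operator supported on column $1\cup$ ancilla commutes with all checks of the deformed code if and only if its image commutes with all $X$ checks of the subcode.

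\emph{$Z$ checks.} The vertex checks restrict exactly to the subcode $Z$ checks. No product of base $Z$ checks is supported in column $1$ alone: by the argument in the proof of Lemma~\ref{lem:single-col}, $e_1\notin\rs(\partial_B)$, so every nonzero row of $U\partial_B$ has an entry outside column $1$.

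\emph{Deformed distance $\ge$ subcode distance.} Take a minimum-weight nontrivial $Z$ logical of the deformed code, i.e.\ an element of $\mathcal{L}_Z^*\mathcal{S}_Z\mathcal{S}'_V$ in the language of Theorem~\ref{thm:css-fault-dist}. Apply the cleaning argument of Lemma~\ref{lem:single-col}.
\begin{itemize}
\item If it has nontrivial right-block homology, Lemma~\ref{lem:z-hom-inv} gives weight at least $d$.
\item If it has nonzero $c_{j'}$ for some $j'>1$, Lemma~\ref{lem:hgp-clean} gives weight at least $d$.
\item Otherwise it can be replaced, without increasing weight, by $Z(c'e_1^\top,0)\Lambda_A$ with $c'\in\ker\partial_A\setminus\{0,c\}$.
\end{itemize}
This reduced operator maps to a subcode operator of the same weight, with base part $c'+f^{-1}(\eta)$ and ancilla part $S_A$. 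It commutes with the subcode $X$ checks, and it is nontrivial because $c'\notin\{0,c\}$. The operators of weight at least $d$ also need a matching bound on the subcode side. I would handle this by noting that the subcode always contains a logical $Z(c''e_1^\top)$ of weight $d_A$ for $c''\notin\{0,c\}$; if $k_A=1$ there are no such $c''$, and that case must be treated separately.

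\emph{Subcode distance $\ge$ deformed distance.} Take a nontrivial subcode $Z$ logical $(x,a)$ and lift it to column $1$ plus ancilla. By the check-matching above, the lift commutes with all deformed checks. It remains to show the lift is not in $\mathcal{S}'_Z=\langle\mathcal{S}_Z,\mathcal{S}'_V\rangle$ and is not equivalent to the measured $\Zlog$, which is itself a product of vertex checks.

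This nontriviality transfer is the main obstacle, because a triviality certificate in the deformed code might use base $Z$ checks leaving column $1$. Suppose the lift equals $Z(U\partial_B,\partial_AU)\Lambda$ with $\Lambda\in\mathcal{S}'_V$. The vertex-check part $\Lambda$ lives on column $1$ and the ancilla, so $U\partial_B$ must vanish outside column $1$. By $e_1\notin\rs(\partial_B)$ this forces $U\partial_B=0$, and hence the lift is a product of vertex checks alone, i.e.\ trivial in the subcode, a contradiction. The same argument handles equivalence to $\Zlog$.

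To match the exact notion of $Z$ distance used in the statement, I would also check that the subcode's logical space corresponds to $\ker\partial_A/\langle c\rangle$. The count $k_{\A}=0$ and $\dim\ker H_Z^\top=1$ ensure the ancilla contributes no extra logicals, and $\Zlog$ becomes a stabilizer. Together the two inequalities give equality of the $Z$ distances.
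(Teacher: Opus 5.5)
Your argument is correct in substance but takes a different route from the paper. The paper starts from the reduced operators $Z(c'e_j^\top,0)\Lambda_A$ given by Lemma~\ref{lem:single-col}, which yield $k_A-1$ independent subcode logicals. It then counts independent checks of the subcode to show it encodes exactly $k_A-1$ qubits: $n_A-k_A$ from the base $X$ checks (dependencies are absorbed by cycle checks since $k_{\mathcal{A}}=0$), $\hat n$ from the ancilla, and one more because the only ancilla-free vertex-check product has base support. Hence those operators exhaust the subcode's logical $Z$ space.

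You replace this dimension count with an explicit two-way correspondence. The reduced operators restrict to nontrivial subcode logicals of equal weight. Conversely, every nontrivial subcode logical lifts to a nontrivial logical of $\mathcal{Q}'$ of equal weight. The key point is that $e_1\notin\mathrm{rs}(\partial_B)$ forces any triviality certificate $Z(U\partial_B,\partial_AU)\Lambda$ to have $U\partial_B=0$, and $\partial_AU=0$ holds because the lift has no right-block support. This avoids the check-independence bookkeeping and never uses injectivity of $f$, at the cost of being longer. Your closing plan to verify that the subcode's logical space is $\ker\partial_A/\langle c\rangle$ is unnecessary, since the lift direction already gives the bound you need.

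The one step to discard is your patch for the classes that Lemmas~\ref{lem:z-hom-inv} and~\ref{lem:hgp-clean} only bound below by $d$. For general $\partial_A$, the subcode's natural representatives $Z(c''e_1^\top)$ have weight $|c''|$, and nothing guarantees a $c''\notin\{0,c\}$ with $|c''|=d_A$; take $c$ to be the unique minimum-weight codeword. When $k_A=1$ the subcode has no logicals at all, while $\mathcal{Q}'$ still has $Z$ logicals in other columns whenever $\mathcal{Q}$ has other logical qubits. So literal equality of $Z$ distances can fail.

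Writing $d_{\mathrm{sub}}$ for the subcode's $Z$ distance, your two directions actually establish $\min(d,d_{\mathrm{sub}})\le d_Z(\mathcal{Q}')\le d_{\mathrm{sub}}$. In other words, the ancilla system is distance preserving if and only if $d_{\mathrm{sub}}\ge d$. This is also exactly what the paper's proof delivers: its step $d_Z=d(\mathcal{L}_Z^*)$ via Lemma~\ref{lem:single-col} only holds after taking the minimum with $d$. It is also all the subsequent Remark uses. In the cyclic setting with $k\ge 2$, a cyclic shift of a minimum-weight codeword supplies some $c''\ne c$ of weight $d=d_A$, so there your patch, and hence literal equality, does go through.
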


\begin{proof}
  Consider the set of \(Z\) operators
  \begin{equation}
    \mathcal{L}_Z^* = \{ Z(c'\cdot e_j^\top , 0) \Lambda_A : c' \in \ker{\partial_A} \setminus \{c, 0\}, \Lambda_A \in \mathcal{S}'_V \}
  \end{equation}
  By Lemma \ref{lem:single-col}, we can see that for the deformed HGP code,
  \begin{equation}
    d_Z = d ( \mathcal{L}_Z^*  )
  \end{equation}

  We note that there are \(k_A - 1\) independent logical \(Z\) operators in the set \(\mathcal{L}_Z^*\) (up to multiplication by vertex checks), and that each operator is also a logical operator of the HGP surgery subcode. To prove the theorem, it then suffices to show that these are \emph{all} of the logical $Z$ operators of the HGP surgery subcode, \emph{i.e.} it encodes $k_A - 1$ logical qubits.

Consider that the HGP surgery subcode has $n_A + \hat{n}$ qubits. We now count the number of independent checks.
Note that without considering their support on the ancilla block, \(n_A - k_A\) of the \(m_A\) base code \(X\) checks are independent. Suppose there is a subset of checks, represented by \(\eta \in \mathbb{F}_2^{m_A}\), which is linearly dependent, \emph{i.e.} \(\partial_A^\top  \eta = 0\). Their product will have support on the ancilla block corresponding to a path matching of \(f(c \cap \partial_A^\top \eta)\). However, since they are dependent, \(\partial_A^\top \eta = 0\), and their support on the ancilla block must be equal to an element of \(\ker(H_Z)\). Since \(k_{\A} = 0\), \(\rs(H_X) = \ker(H_Z)\) and we see that the support of these checks is equal to the product of cycle checks, and we still have a linearly dependent set.
Suppose \(\A\) has \(\hat{m}'_Z\) independent \(Z\) checks and \(\hat{m}'_X\) independent \(X\) checks. Since \(k_{\A} = 0\), \(\hat{m}'_X + \hat{m}_Z' = \hat{n}\).
Since $\nullity(H_Z^\top ) = 1$, there is only one product of $Z$ checks that does not have support on the ancilla block, which is the product of all of the $Z$ checks corresponding to the support of \(0 \ne \hat{c} \in \ker(H_Z^\top )\).
However, this will have support on the $n_A$ block, and so the $Z$ checks are independent. Thus, we can see that in total, the number of independent checks is \(n_A - k_A + \hat{m}'_X + \hat{m}'_Z + 1 = n_A - k_A + \hat{n} + 1\), and thus the HGP surgery subcode encodes \(k_A - 1\) logical qubits.
\end{proof}

\begin{remark}
    A variety of methods for computing the distance of quantum codes exist \cite{webster2026distancefinding}. In the case of HGP codes, if the seed classical codes have length \(n\) bits, then the HGP code has \(O(n^2)\) qubits.
    Thus, Theorem~\ref{thm:sur-sc-dist} reduces the size of the code passed to the solver from \(O(n^2)\) to \(O(n + |\mathcal{X}_C|)\), where \(|\mathcal{X}_C|\) is the number of data qubits in the single-column extractor.
    For all of the constructions in this work, we found \(|\mathcal{X}_C| = 2n\).
    Using the CP-SAT solver \cite{cpsatlp} from the OR-TOOLS \cite{ortools} software package, we were able to verify the distance for each of the systems introduced in this work in several minutes on a consumer laptop by computing the distance of logical \(Z\) operators on the corresponding HGP surgery subcodes.
\end{remark}

With the above results in mind, we now describe our approach for constructing single-column extractors \((\mathcal{X}_C, f, D)\).
For simplicity suppose we have an HGP code \(\HGP(\partial, \partial)\) with \(\partial \in \mathbb{F}_2^{m \times n}\).
We begin with a connected graph \(G = (V, E)\), where \(|V| = n\).
For \(\mathcal{X}_C\), let \(H_Z \in \mathbb{F}_2^{|V| \times |E|}\), with \((H_Z)_{i,j} = 1\) if \(v_i \in e_j\), and \((H_Z)_{i,j} = 0\) otherwise.
For each \(X\) check \(S\) in the first column of the HGP code, we assign \(D(S)\) to a set of edges whose edge-induced subgraph contains a path matching of \(K(S)\) for each logical \(Z\) operator supported in the first column.
Let \(H_X\) be a matrix whose rows span \(\ker(H_Z)\) --- the rows of \(H_X\) correspond to a cycle basis for the graph \(G\).
For \(f\), assign some ordering of the vertices of \(V\), \emph{i.e.} \(V = \{v_1, v_2, \dots, v_{n}\}\) and take \(f: [n] \to V\) to be \(f(i) = v_i\).
This results in an extractor \(\mathcal{X}_C\) with \(|V|\) \(Z\) checks, \(|E|\) data qubits, and \(|E| - |V| + 1\) \(X\) checks, and thus \(2|E| + 1\) total qubits.

\begin{remark}
    In practice, to construct the graph \(G\), we begin with \(|V| = n\) and \(E = \emptyset\), and then for each \(X\) check \(S\), add edges forming a path through \(f^{-1}(\mathrm{supp}(S))\). For each code, we attempted this method using several different orderings of the \(X\) checks, and took the graph with the smallest maximum degree.
\end{remark}

We list the degrees of each kind of qubit in the construction in Table \ref{tab:single-column}.
The table uses notation we define now.
Let \(w\) be the maximum weight of a row of \(\partial\), \(r\) be the maximum weight of a column of \(\partial\), \(\Delta_Z\) be the maximum weight of the rows of \(H_Z\), \(\Delta_X\) be the maximum weight of a row of \(H_X\), \(\rho\) be the maximum weight of a column of \(H_X\).
Note that the maximum weight of a column of \(H_Z\) is two.
For each codeword \(c_i \in \ker{\partial}\), and each \(j \in [m]\), let \(\mu_{i,j}\) be a path matching of \(f(c_i \cap \partial^\top  e_j)\) (which must exist, because \(G\) is a graph, \(f\) is injective, and \(|c_i \cap \partial^\top  e_j|\) is even), and let \(u = \max_j | \cup_{i} \mu_{ij}|\).
Finally, for each edge \(e \in E\), let \(p_e = | \{ j : \exists i \text{ such that } e \in \mu_{ij} \} |\), and let \(p = \max_{e \in E} p_e\). 

\begin{table}[h]
    \centering
    \begin{tabular}{ccc}
    \toprule
    Qubit Type & Max. Degree & \makecell{Example Value for \\ \(d=10\) System}\\
    \midrule
    Left Block Data Qubit  & \(2r + 1\) & \(7\) \\
    Right Block Data Qubit & \(2w\) & \(6\) \\
    Base Code \(X\) Check & \(w + r + u\) & 8 \\
    Ancilla System Data Qubit & \(2 + \rho + p\) & 6 \\
    Vertex Check & \(\Delta_Z + 1\) & 5 \\
    Cycle Check & \(\Delta_X\) & 6 \\
    \bottomrule
    \end{tabular}
    \caption{Qubit degrees in single-column extractor system. The last column lists the values for the extractor system for the \([[882,50,10]]\) cyclic HGP code.}
    \label{tab:single-column}
\end{table}

\subsection{Single Basis Extractors}
\label{sec:single-basis-ext}

In this section, we show how to combine the single row/column extractors described in the previous section into extractors capable of measuring any logical \(X\) or \(Z\) operator, and prove that these constructions are distance preserving. For the rest of the appendices, unless otherwise specified, we will only consider cyclic HGP codes (described in Section \ref{sec:cyclic-hgp}). Again, for simplicity, we will only consider extractors for measuring logical \(Z\) type operators, but the arguments extend straightforwardly to \(X\) type operators as well.

Suppose we are using an \([n, k, d]\) classical cyclic code. Let \((\mathcal{X}_C, f, D)\) be a distance preserving single-column ancilla system as described in the previous section, with \(\mathcal{X}_C = \mathrm{CSS}(H_X, H_Z)\).
Let \(G = (V, E)\) be the graph used to construct \(\mathcal{X}_C\).
Our \(Z\) basis extractor \((\mathcal{X}_Z, f_Z, D_Z)\) will be based off of the graph \(G_Z\) obtained by taking \(k\) copies of \(G\), and adding edges between corresponding vertices in copies \(i\) and \(i + 1\).
More concretely, \(G_Z = (V_Z, E_Z)\), with \(V_Z = \cup_{i = 1}^k V^{(i)}\), \(E_Z = \left ( \cup_{i=1}^{k} E^{(i)} \right) \cup \left ( \cup_{i=1}^{k-1} B^{(i)} \right )\), where for each \(i \in [k]\) \(|V^{(i)}| = |V|\), \((v^{(i)}_j, v^{(i)}_{j'}) \in E^{(i)}\) if and only if \((v_j, v_{j'}) \in E\), and for each \(i \in [k-1]\), \(B^{(i)} \subset V^{(i)} \times V^{(i+1)}\) such that \( B^{(i)} = \{ (v^{(i)}_j, v^{(i+1)}_j) : j \in [|V|] \}\).
We call the qubits associated with the edges in \(\cup_{i=1}^{k-1} B^{(i)}\) \emph{bridge qubits}.
An example \(Z\) parity check matrix corresponding to \(G_Z\) for the \(k=3\) case is shown in Equation \eqref{eq:z-pcm}.

\begin{equation}
\label{eq:z-pcm}
\bordermatrix{
  & E_1 & B_1 & E_2 & B_2 & E_3 \cr
V_1 & H_Z & I   & 0   & 0   & 0   \cr
V_2 & 0   & I   & H_Z & I   & 0   \cr
V_3 & 0   & 0   & 0   & I   & H_Z
}
\end{equation}

To determine the \(X\) parity check matrix, we must identify a cycle basis for \(G_Z\). Note that a simple cycle basis is obtained by taking the cycles corresponding to \(G^{(1)} = (V^{(1)}, E^{(1)})\), and then the \((k-1)\cdot |E|\) length four cycles of the form 
\[
[(v^{(i)}_j, v^{(i)}_{j'}), (v^{(i)}_{j'}, v^{(i+1)}_{j'}), (v^{(i+1)}_{j'}, v^{(i+1)}_{j}), (v^{(i+1)}_{j}, v^{(i)}_{j})]
\] 
for \(i \in [k-1]\) and \((v_j, v_{j'}) \in E\).

Having defined an \(\mathcal{X}_Z\) with \(k_{\mathcal{X}_Z} = 0\), it remains to define \(f_Z\). Given qubit \((i, j)\) in the left block, with \(j \in [k]\), we let \(f(i,j) = v_i^{(j)}\). To define \(f_Z\) on the right block, we first let \(\sigma\) be the permutation on \(n\) elements defined by \(\sigma(1) = 1\), and \(\sigma(j) = n + 1 - j\) for \(j \ne 1\). Then, for qubit \((\sigma(i), \sigma(j))\) with \(i \in [k]\) on the right block, we define \(f(\sigma(i), \sigma(j)) = v_j^{(i)}\).

Note that \(\sigma = \sigma^{-1}\). Then, for our classical cyclic code parity check matrices, \(P(\sigma) \partial P(\sigma) = \partial^\top \), where \(P(\sigma)\) is the permutation matrix associated with \(\sigma\).
This means that if \(s = \partial x\), \(P(\sigma) s = \partial^\top  P(\sigma)x\), and notably, if \(c \in \ker{\partial}\) then \(P(\sigma) c \in \ker{\partial^\top }\).
This will prove useful momentarily.
Furthermore, note that the first \(k\) bits form an information set for \(\partial\), and thus \(\{\sigma(i) : i \in [k]\}\) forms an information set for \(\partial^\top \).
We will take our canonical basis of logical operators in the right block to be supported on the rows and columns \(\{\sigma(i) : i \in [k]\}\).
The reason for taking this information set in the right block, as opposed to the first \(k\) rows and columns, will be explained in the following section, as it relates to how the \(X\) and \(Z\) basis extractors are connected.

The maximum degree of each kind of qubit in a single-basis system is listed in Table~\ref{tab:single-basis}, using the same notation as for Table~\ref{tab:single-column}.
Since we are now specializing to the case of cyclic HGP, we have \(w = r\).
We now describe the deformation map \(D_Z\).
Consider base code \(X\) check \(S\) with coordinate \((i, j)\) in the HGP code.
This corresponds to the operator \(X(\partial^\top  e_i \cdot e_j^\top , e_i \cdot (\partial e_j)^\top )\).
There are four cases:
\begin{enumerate}[label=Case \arabic*:, leftmargin=*]
    \item \(j \notin [k], \sigma(i)\notin [k]\). Then, this check has no overlap with the canonical \(Z\) logical operators and does not need to be deformed, \emph{i.e.} \(D_Z(S) = \emptyset\) and has degree \(2w\).
    \item \(j \in [k], \sigma(i) \notin[k]\). In this case, this check has no overlap with the canonical \(Z\) logical operators in the right block. For such a check, we assign \(D_Z(S)\) to the edges in \(G^{(j)}\) corresponding to \(D(S)\), \emph{i.e.} the single-column deformation map.
    The maximum degree of such a check qubit is still \(2w + u\).
    \item \(j \notin [k], \sigma(i) \in[k]\). In this case, this check has no overlap with the canonical \(Z\) logical operators in the left block. For such a check, we assign \(D_Z(S)\) to the edges in \(G^{(\sigma(i))}\) corresponding to \(\sigma(D(S)) = D(\sigma(S))\).
    The maximum degree of such a check qubit is \(2w + u\).
    \item \(j \in [k], \sigma(i) \in [k]\). We assign \(D_Z(S)\) to the union of \(D(S)\) in \(G^{(j)}\) and \(\sigma(D(S))\) in \(G^{(\sigma(i))}\).
    The maximum degree of such a check is \(2w + 2u\).
\end{enumerate}

\begin{table}[h]
    \centering
    \begin{tabular}{ccc}
    \toprule
    Qubit Type & Max. Degree & \makecell{Example Value for \\ \(d=10\) System}\\
    \midrule
    Left Block Data Qubit  & \(2w + 1\) & \(7\) \\
    Right Block Data Qubit & \(2w + 1\) & \(7\) \\
    Base Code \(X\) Check & \(2w + 2u\) & \(10\) \\
    Base Code \(Z\) Check & \(2w\) & \(6\) \\
    Ancilla System Data Qubit & \(2 + \rho + 2p + 1\) & 8 \\
    Vertex Check & \(\Delta_Z + 4\) & 8 \\
    Cycle Check & \(\max(\Delta_X, 4)\) & 6 \\
    Bridge Qubits & \(2 + \Delta_Z\) & 6\\
    \bottomrule
    \end{tabular}
    \caption{Qubit degrees in single-basis extractor system. The last column lists the values for the extractor system for the \([[882,50,10]]\) cyclic HGP code.
    }
    \label{tab:single-basis}
\end{table}

With the extractor \((\mathcal{X}_Z, f_Z, D_Z)\) suitably defined, we now show that, given that the single-column extractor \((\mathcal{X}_C, f, D)\) is distance preserving, then \((\mathcal{X}_Z, f_Z, D_Z)\) is as well. By Theorem \ref{thm:css-fault-dist}, this is sufficient to show that every measurement protocol performed using \((\mathcal{X}_Z, f_Z, D_Z)\) has fault distance \(d\).

\begin{theorem}
\label{thm:single-basis-dist}
    Let \(\Q = \HGP(\partial, \partial)\) be a cyclic HGP code of an \([n, k, d]\) classical cyclic code.
    Let \((\mathcal{X}_Z, f_Z, D_Z)\) be a logical \(Z\) extractor as described above.
    Then, for every possible logical \(Z\) operator measurement using the canonical basis, 
    \begin{equation}
    \label{eq:single-basis-dist}
        d(\mathcal{L}_Z^* \mathcal{S}_Z \mathcal{S}_V') \ge d
    \end{equation}
\end{theorem}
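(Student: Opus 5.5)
The plan is to reduce the statement to the hypothesis that the single-column system \((\mathcal{X}_C, f, D)\) is distance preserving. I would do this by slicing an arbitrary operator copy by copy along the \(k\) copies \(G^{(1)}, \dots, G^{(k)}\), and charging any inconsistency between copies to the bridge qubits. Fix the measured operator \(\Zlog_M = Z(\sum_i c_i e_i^\top, \sum_j e_{\sigma(j)}\tilde{c}_j^\top)\). Take an element \(O = \Zlog' \, T \, \Lambda\) with \(\Zlog' \in \mathcal{L}_Z^*\), \(T \in \mathcal{S}_Z\), and \(\Lambda \in \mathcal{S}'_V\) the product of the vertex checks on \(\eta \subset V_Z\). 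Write \(\eta = \bigsqcup_i \eta^{(i)}\) with \(\eta^{(i)} \subset V^{(i)}\). The ancilla support of \(O\) is \(M_{G_Z}^\top \eta\), so
\begin{equation}
\wt(O) \ge \sum_{i=1}^k \Big( \wt_i(O) \Big) + \sum_{i=1}^{k-1} |\eta^{(i)} + \eta^{(i+1)}|,
\end{equation}
where \(\wt_i(O)\) counts the support on left column \(i\), on right row \(\sigma(i)\), and on the edges of \(E^{(i)}\).

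First I would dispose of the easy classes using Lemma~\ref{lem:z-hom-inv} and Lemma~\ref{lem:hgp-clean} (and Corollary~\ref{cor:clean}), exactly as in the proof of Lemma~\ref{lem:single-col}. Suppose the class of \(\Zlog'\) has a nonzero component in a left column or right row outside the information set. Such a component is not touched by any vertex check, so its weight is \(\ge d\). We may therefore assume \(\Zlog'\) is a sum of canonical operators \(Z(c'_i e_i^\top,0)\) and \(Z(0, e_{\sigma(j)}\tilde{c}'^\top_j)\). The cleaning step of Lemma~\ref{lem:single-col} then lets us drop \(T\) without increasing weight, restricted to each column or row. Here I use \(P(\sigma)\partial P(\sigma) = \partial^\top\) to treat right row \(\sigma(i)\) as a permuted copy of a left column. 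After this step, \(\wt_i(O)\) is at least the weight of an operator in the deformed code that attaches \(G^{(i)}\) to column \(i\). By the \(\sigma\)-symmetry, it is moreover at least the weight of the analogous operator attached to row \(\sigma(i)\). Taking the larger of the two lets me ignore the doubled port function.

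The core step is a per-copy dichotomy. For each \(i\), consider the column-\(i\) class \(c'_i\) against the measured component \(c_i\). If \(c'_i \notin \{0, c_i\}\), then distance preservation of \((\mathcal{X}_C, f, D)\), in the form of Lemma~\ref{lem:single-col}, gives \(\wt_i(O) \ge d\) for any \(\eta^{(i)}\). Otherwise set \(t_i \in \{0,1\}\) with \(c'_i = t_i c_i\), and do the same for the right rows. Since \(\Zlog' \notin \{I, \Zlog_M\}\) modulo stabilizers, the vector \((t_i)\) cannot be all zeros with all components trivial, nor all ones; so either some copy already has \(\wt_i \ge d\), or \(t\) is non-constant. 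In the latter case, use the fact that the vertex checks on all of \(V^{(i)}\) implement \(Z(c_i e_i^\top, \ldots)\) on copy \(i\). Replacing \(\eta^{(i)}\) by its complement converts residual class \(t_i c_i\) into \((1-t_i)c_i\), so the cost of each copy is governed by how far \(\eta^{(i)}\) is from \(\emptyset\) or from \(V^{(i)}\). The bridges then pay for switching: by the triangle inequality, \(\sum_i |\eta^{(i)}+\eta^{(i+1)}| \ge |\eta^{(a)} + \eta^{(b)}|\) for any \(a<b\).

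I expect the main obstacle to be exactly this combination step. A copy with \(t_i = 0\) but \(\eta^{(i)}\) nearly full (or \(t_i = 1\) and \(\eta^{(i)}\) nearly empty) must be charged \(\ge d\) either from within the copy or from bridges, and this needs a quantitative version of distance preservation. The statement I would try first is: for a copy with trivial residual, \(\wt_i(O) \ge \min(d, |\eta^{(i)}|, |V \setminus \eta^{(i)}|)\)-type bounds are too weak. Instead I would compare with the single-column operator for \(c' = c_i\) versus \(c'=0\). If \(t_a \ne t_b\), then either \(\eta^{(a)}\) and \(\eta^{(b)}\) differ in \(\ge d\) vertices (paid by bridges, as \(n \ge d\)), or they are close. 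If they are close, transplanting \(\eta^{(b)}\) into copy \(a\) yields an operator of the single-column deformed code with nontrivial class (\(c_a\) versus the residual), whose weight is at most \(\wt_a(O) + |\eta^{(a)}+\eta^{(b)}|\). This is because the base-column weight changes by at most the symmetric difference and the cut changes by at most its degree-weighted size. This bound may need the factor \(\Delta_Z\), which is the point I would scrutinize most carefully.
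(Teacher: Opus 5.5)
The step you flag as the main obstacle is a genuine gap, and the transplant cannot close it. Changing the vertex set in copy \(a\) keeps you in the same class of the single-column merged code measuring \(c_a\). The residual classes available in copy \(a\), namely \(0\) and \(c_a\), are both trivial there. So distance preservation for \(c_a\) yields nothing, however close \(\eta^{(a)}\) and \(\eta^{(b)}\) are.

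The missing idea is to invoke distance preservation of the single-column extractor for a \emph{different} measured codeword: the sum. If \(c'_a = 0 \ne c_a\) and \(c'_b = c_b \ne 0\), measure \(c_a + c_b\) in one column. Then consider the residual of class \(c_b \notin \{0, c_a + c_b\}\), cleaned by the union \(\eta^{(a)} \cup \eta^{(b)}\). Three facts make this work:
\begin{enumerate}
\item \(f^{-1}_{c_a+c_b}(\eta) = f^{-1}_{c_a}(\eta) + f^{-1}_{c_b}(\eta)\).
\item The cut of a union obeys \(|H_Z^\top(\eta^{(a)}\cup\eta^{(b)})| \le |H_Z^\top\eta^{(a)}| + |H_Z^\top\eta^{(b)}|\), so no \(\Delta_Z\) factor appears.
\item \(|f^{-1}_{c_a}(\eta^{(a)}\cup\eta^{(b)})| \le |f^{-1}_{c_a}(\eta^{(a)})| + |\eta^{(a)}+\eta^{(b)}|\), and the last term is paid by the bridges between copies \(a\) and \(b\).
\end{enumerate}
No near/far dichotomy is needed.

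The same device is required \emph{inside} a single copy, which your plan does not cover. Copy \(i\) is coupled to column \(i\) and row \(\sigma(i)\) simultaneously. Suppose \(c_i \ne 0 \ne \tilde c_i\) with residual \(c'_i = c_i\) and \(\tilde c'_i = 0\); this is the situation at the heart of the paper's Cases 3 and 4. Then each single-attachment operator is trivial in its own merged code, so taking the larger of the two bounds gives nothing, and there is no bridge to charge. The paper instead adds the two contributions:
\(|c_i + f^{-1}_{c_i}(\eta^{(i)})| + |f^{-1}_{P(\sigma)\tilde c_i}(\eta^{(i)})| \ge |c_i + f^{-1}_{c_i + P(\sigma)\tilde c_i}(\eta^{(i)})|\).
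That is, it treats copy \(i\) as a single-column extractor measuring \(c_i + P(\sigma)\tilde c_i \in \ker\partial\), against which the class \(c_i\) is nontrivial.

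Separately, your claim that the cleaning step of Lemma~\ref{lem:single-col} lets you drop \(T\) column by column is not justified. That lemma works because the operator occupies a single information column, so every row of \(U\partial\) meeting it also hits a column where the operator vanishes. With several occupied information columns, a row of \(U\partial\) (an element of \(\rs(\partial)\)) can meet several of them while leaving the information set in only one coordinate. Moreover, the non-information columns that absorb the displaced support are shared by all copies. Hence \(\wt_i(O)\), which counts only column \(i\), row \(\sigma(i)\) and \(E^{(i)}\), is not bounded below by a single-column merged-code weight once \(T \ne I\), and restoring the shared columns to each copy would double-count.

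The paper charges at most two information columns (or one column and its paired row) at a time. It first proves the \(U=0\) bound in the strengthened form \(d + |\beta| \le \cdots\), where \(\beta = (f^{-1}_{c_a}(\eta^{(a)}) \cap c_b) \setminus f^{-1}_{c_b}(\eta^{(b)})\) is the set of rows in which the residual occupies both chosen columns. It then observes that \(Z\) checks can remove at most one unit of weight per row of \(\beta\) from those two columns together with the \(n-k\) non-information columns.

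Your easy case is also misidentified: canonical classes have no component outside the information set. The case settled by Corollary~\ref{cor:clean} is an information column with \(c_i = 0 \ne c'_i\), or an information row with \(\tilde c_i = 0 \ne \tilde c'_i\). No vertex check touches these, because \(f_Z\) is restricted to \(\supp(\Zlog_M)\).
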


\begin{proof}
    Let \(\Zlog = Z(L, R)\), be the logical operator being measured, with \(L = \sum_{i=1}^k c_i \cdot e_i^\top \) for \(c_i \in \ker(\partial)\), and \(R = \sum_{j=1}^k e_{\sigma(j)} \cdot \tilde{c}_j^\top \) for \(\tilde{c}_j \in \ker(\partial^\top )\). To show Equation \eqref{eq:single-basis-dist}, it suffices to show that \(\wt(\Zlog' \Lambda) \ge d\), for an arbitrary logical \(\Zlog'\) in a different homology class than \(\Zlog\), and an arbitrary \(\Lambda \in \mathcal{S}_V'\). Without loss of generality, let \(\Zlog' = Z(L' + U\partial, R' +\partial U)\) where \(U \in \mathbb{F}_2^{n \times n}\), \(L' = \sum_{i=1}^k x_i \cdot e_i^\top \) for \(x_i \in \ker(\partial)\), and \(R' = \sum_{j=1}^{k} e_{\sigma(j)} \cdot \tilde{x}_j^\top \) for \(\tilde{x}_j \in \ker(\partial^\top )\). The proof proceeds through an analysis of cases of different possible values of \(x_i\) and \(\tilde{x}_j\).
    
    First, note that \(\Zlog'\) being in a different homology class than \(\Zlog\) implies that there exists an \(i\) such that \(c_i \ne x_i\), or there exists a \(j\) such that \(\tilde{c}_j \ne \tilde{x}_j\). If there exists an \(i\) such that \(c_i = 0\) and \(x_i \ne 0\), then Corollary \ref{cor:clean} implies that \(\Zlog'\) has support of size at least \(d\) in the columns in the left block outside of those which have qubits adjacent to the vertex checks of the extractor, and thus \(\wt(\Zlog' \Lambda) \ge d\). Similar reasoning holds if there is a \(j\) such that \(\tilde{c}_j = 0\) and \(\tilde{x}_j \ne 0\). Thus, for the rest of the proof, we may assume that if \(c_i = 0\), then \(x_i = 0\), and if \(\tilde{c}_j = 0\), then \(\tilde{x}_j = 0\).

    Let \(\eta_i\) be the set of vertex checks in \(\Lambda\) in \(V^{(i)}\)
    and let \(S_{B,i}\) be the support of \(\Lambda\) on \(B^{(i)}\). We can then write \(\wt(\Zlog' \Lambda)\) as
    \begin{align}
        \wt(\Zlog' \Lambda) &=
        \underbrace{\sum_{i=1}^k |x_i + f_{c_i}^{-1}(\eta_i) + (U\partial)_{(\cdot, i)}|}_{\text{Left block information columns}}
        + \underbrace{\sum_{i=k+1}^{n} |(U\partial)_{(\cdot, i)}|}_{\substack{\text{Left block non}\\\text{-information columns}}}
        + \underbrace{\sum_{j=1}^k |\tilde{x}_j + f_{\tilde{c}_j}^{-1}(P(\sigma)\eta_j) + (\partial U)_{(\sigma(j), \cdot)}|}_{\text{Right block information rows}}\nonumber \\
        &+ \underbrace{\sum_{j=k+1}^n |(\partial U)_{(\sigma(j), \cdot)}|}_{\substack{\text{Right block non}\\\text{-information rows}}} + \underbrace{\sum_{i = 1}^k |H_Z^T \eta_i|}_{\substack{\text{Single column}\\\text{extractors}}} 
        + \underbrace{\sum_{i=1}^{k-1} |\eta_i + \eta_{i+1}|}_{\text{Bridges}}
    \end{align}
    Suppose that there exists an \(i'\) such that \(x_{i'} \ne 0\) and \(x_{i'} \ne c_{i'}\). Then, 
    \begin{align}
        \wt(\Zlog' \Lambda) &\ge |(x_{i'} + f_{c_{i'}}^{-1}(\eta_i)) + (U\partial)_{(\cdot, i')}| + \sum_{i=k+1}^n |(U\partial)_{(\cdot, i)}| + |H_Z^T \eta_{i'}| \\
        & \ge d
    \end{align}
    where the second inequality follows from the fact that \((\mathcal{X}_C, f)\) is a distance preserving single-column extractor.
    Now suppose there exists a \(j'\) such that \(\tilde{x}_{j'} \ne 0\) and \(\tilde{x}_{j'} \ne \tilde{c}_{j'}\). Then,
    \begin{align}
        \wt(\Zlog' \Lambda) &\ge 
        |\tilde{x}_{j'} + f_{\tilde{c}_{j'}}^{-1}(P(\sigma)\eta_{j'}) + (\partial U)_{(\sigma(j), \cdot)} | + \sum_{j = k + 1}^n |(\partial U)_{(\sigma(j), \cdot)}| + |H_Z^T \eta_{j'}| \\
        & = |P(\sigma) \tilde{x}_{j'} + f_{P(\sigma)\tilde{c}_{j'}}^{-1}(\eta_{j'}) + (U'\partial)_{(\cdot, j')}| + \sum_{j=k+1}^n |(U' \partial)_{(\cdot, j)}| + |H_Z^T \eta_{j'}| \\
        & \ge d
    \end{align}
    where \(U' = P(\sigma) U^T P(\sigma)\)
    and the last inequality follows from the fact that \((\mathcal{X}_C, f)\) is a distance preserving single-column extractor.

    We are now left with the case that for all \(i \in [k]\), either \(x_i = 0\) or \(x_i = c_i\), and for all \(j \in [k]\), either \(\tilde{x}_j = 0\) or \(\tilde{x}_j = \tilde{c}_j\). Since \(\Zlog'\) is in a different homology class than \(\Zlog\), we also must have either at least one \(i\) such that \(x_i = 0\) and \(c_i \ne 0\), or at least one \(j\) such that \(\tilde{x}_j = 0\) and \(\tilde{c}_j \ne 0\). We further subdivide this into four cases.
    
    \begin{enumerate}[label=Case \arabic*:, leftmargin=*]
        \item There exists an \(i\) such that \(x_{i} = 0 \ne c_i\) and an \(i'\) such that \(x_{i'} \ne 0\).
        \item There exists a \(j\) such that \(\tilde{x}_j = 0 \ne \tilde{c}_j\) and a \(j'\) such that \(\tilde{x}_{j'} \ne 0\).
        \item For all \(i\), \(x_i = c_i\) and for all \(j\), \(\tilde{x}_j = 0\).
        \item For all \(i\), \(x_i = 0\) and for all \(j\), \(\tilde{x}_j = \tilde{c}_j\).
    \end{enumerate}

    We will prove cases one and three, as cases two and four follow from similar arguments.
    First, case one.
    Let \(i, i' \in [k]\) such that \(c_i \ne 0\), \(x_i = 0\), \(c_{i'} = x_{i'}\). For the moment suppose \(|i - i'| = 1\).
    Without loss of generality, let \(i = 1\) and \(i' = 2\).
    Additionally, we temporarily assume \(U = 0\).
    Let \(\eta_1 \in \mathbb{F}_2^{|V|}\) denote the set of vertex checks in \(G^{(1)}\) in \(\Lambda\), and \(\eta_2\) denote the set of vertex checks in \(G^{(2)}\) in \(\Lambda\).
    Let 
    \begin{equation}
    \beta = (f_{c_1}^{-1}(\eta_1) \cap c_2) \setminus f_{c_2}^{-1}(\eta_2)
    \end{equation}
    so that
    \begin{equation}
        |\beta| = |f_{c_1}^{-1}(\eta_1) \cap c_2| - |f_{c_1}^{-1}(\eta_1) \cap f_{c_2}^{-1}(\eta_2)|.
    \end{equation}
    Observe that
    \begin{equation}
    \label{eq:sdproof1}
        \wt(\Zlog' \Lambda) \ge |f_{c_1}^{-1}(\eta_1)| + |c_2 + f_{c_2}^{-1}(\eta_2)| + |H_Z^\top \eta_1| + |H_Z^\top \eta_2| + |\eta_1 + \eta_2|
    \end{equation}
    where \(H_Z^\top \eta_1 \in \mathbb{F}_2^{|E|}\) corresponds to the support of \(\Lambda\) on the edge qubits of \(G^{(1)}\) (and similarly for \(\eta_2\)). Suppose we were to use our single-column extractor to measure the operator \(Z((c_1 + c_2)\cdot e_1^\top , 0)\). The fact that our single-column extractor is distance preserving means that
    \begin{equation}
    \label{eq:sdproof2}
        |c_2 + f_{c_1 + c_2}^{-1}(\eta_1 \cup \eta_2)| + |H_Z^\top (\eta_1 \cup \eta_2)| \ge d
    \end{equation}
    where we think of \(\eta_1\cup \eta_2\) as the vertex checks we are using to try and reduce the weight of the logical \(Z(c_2\cdot e_1^\top , 0)\). Observe that
    \begin{equation}
    \label{eq:sdproof3}
        f_{c_1 + c_2}^{-1}(\eta_1 \cup \eta_2) = f_{c_1}^{-1}(\eta_1 \cup \eta_2) + f_{c_2}^{-1}(\eta_1 \cup \eta_2)
    \end{equation}
    and furthermore that
    \begin{align}
    \label{eq:sdproof4}
        |f_{c_2}^{-1}(\eta_1 \cup \eta_2) | &= |f_{c_2}^{-1}(\eta_1) \cup f_{c_2}^{-1}(\eta_2)|\nonumber \\
        &= | (c_2 \cap f^{-1}(\eta_1)) \cup f_{c_2}^{-1}(\eta_2)| \nonumber\\
        & \ge |(c_2 \cap f_{c_1}^{-1}(\eta_1)) \cup f_{c_2}^{-1}(\eta_2)| \nonumber\\
        &= |c_2 \cap f_{c_1}^{-1}(\eta_1)| + |f_{c_2}^{-1}(\eta_1)| - |c_2 \cap f_{c_1}^{-1}(\eta_1) \cap f_{c_2}^{-1}(\eta_2)|\nonumber\\
        &= |c_2 \cap f_{c_1}^{-1}(\eta_1)| + |f_{c_2}^{-1}(\eta_2)| - |f_{c_1}^{-1}(\eta_1) \cap f_{c_2}^{-1}(\eta_2)|\nonumber \\
        &= |\beta| + |f_{c_2}^{-1}(\eta_2)|
    \end{align}
    and that
    \begin{align}
    \label{eq:sdproof5}
        |f_{c_1}^{-1}(\eta_1 \cup \eta_2)| &= |f_{c_1}^{-1}(\eta_1) \cup f_{c_1}^{-1}(\eta_2 \setminus \eta_1)| \nonumber \\
        &\le | f_{c_1}^{-1}(\eta_1)| + |\eta_2 \setminus \eta_1| \nonumber \\
        &\le | f_{c_1}^{-1}(\eta_1)| + |\eta_1 + \eta_2|.
    \end{align}
    Putting this all together, we get that
    \begin{align}
        d + |\beta| &\le |c_2 + f^{-1}_{c_1 + c_2}(\eta_1 \cup \eta_2)| + |H_Z^\top (\eta_1 \cup \eta_2)| + |\beta| && \text{by equation \eqref{eq:sdproof2}} \\
        &= |c_2 + f^{-1}_{c_1}(\eta_1 \cup \eta_2) + f^{-1}_{c_2}(\eta_1 \cup \eta_2)| + |H_Z^\top (\eta_1 \cup \eta_2)| + |\beta| && \text{by equation \eqref{eq:sdproof3}} \\
        &\le |c_2| - |f_{c_2}^{-1}(\eta_1 \cup \eta_2)| + |f_{c_1}^{-1}(\eta_1 \cup \eta_2)| + |H_Z^\top  \eta_1| + |H_Z^\top \eta_2| + |\beta| && \text{by the triangle inequality} \\
        &\le |c_2| - |f_{c_2}^{-1}(\eta_2)| - |\beta| + |f_{c_1}^{-1}(\eta_1 \cup \eta_2)| + |H_Z^\top  \eta_1| + |H_Z^\top \eta_2| + |\beta| && \text{by equation \eqref{eq:sdproof4}} \\
        &\le |c_2| - |f_{c_2}^{-1}(\eta_2)| + |f_{c_1}^{-1}(\eta_1)| + |\eta_1 + \eta_2|  + |H_Z^\top  \eta_1| + |H_Z^\top  \eta_2| && \text{by equation \eqref{eq:sdproof5}} \\
        & \le \wt(\Zlog' \Lambda) && \text{by equation \eqref{eq:sdproof1}}.
    \end{align}

    Now, let us consider the ways in which including a nonzero \(U\) could reduce the weight of the checks.
    We consider only the support of \(\Zlog'\Lambda\) in the first two columns, and columns \( k < j \le n\).
    As has been established previously, if a row of \(U\partial\) has a nonzero entry in the first \(k\) indices, it must also have a nonzero entry in the last \(n-k\) indices.
    Thus, for each \(i \in f_{c_1}^{-1}(\eta_1) \cap(c_2 + f_{c_2}^{-1}(\eta_2))\), we cannot rule out the possibility that multiplication by stabilizers reduces the weight by one. Since \(f_{c_2}^{-1}(\eta_2) \subseteq c_2\), we may rewrite this set of indices as \(f_{c_1}^{-1}(\eta_1) \cap(c_2 + f_{c_2}^{-1}(\eta_2)) = f_{c_1}^{-1}(\eta_1) \cap (c_2 \setminus f_{c_2}^{-1}(\eta_2)) = \beta\). Thus, in these \(n-k + 2\) columns of the left block, multiplication by stabilizers can reduce the weight by at most \(|\beta|\), which implies that \(\wt(\Zlog' \Lambda) \ge d\) as desired.

    The general case where \(|i - i'| \ne 1\) follows from observing that
    \begin{align}
        \wt(\Zlog' \Lambda) &\ge |f_{c_i}^{-1}(\eta_i)| + |c_{i'} + f_{c_{i'}}^{-1}(\eta_{i'})| + |H_Z^\top \eta_{i}| + |H_Z^\top \eta_{i'}| + \sum_{l=i}^{i'-1} |\eta_{l} + \eta_{l+1}| \\
        & \ge |f_{c_i}^{-1}(\eta_i)| + |c_{i'} + f_{c_{i'}}^{-1}(\eta_{i'})| + |H_Z^\top \eta_{i}| + |H_Z^\top \eta_{i'}| + |\eta_{i} + \eta_{i'}|
    \end{align}
    where we assume without loss of generality that \(i' > i\), and the second inequality follows from the triangle inequality. The rest of the argument goes through identically.

    Now for case three.
    Suppose for all \(i \in [k]\), \(x_i = c_i\), and that for all \(j \in [k]\), \(\tilde{x}_j = 0\).
    Note that at least one \(x_{i'} \ne 0\), otherwise \(\Zlog'\) is in the stabilizer.
    Additionally, at least one \(\tilde{c}_{j'} \ne 0\), as otherwise \(\Zlog'\) is in the same homology class as \(\Zlog\).
    Suppose again temporarily that \(U = 0\), and also that \(i' \ne j'\).
    We then see that
    \begin{align}
        \wt(\Zlog' \Lambda) &\ge |c_{i'} + f_{c_{i'}}^{-1}(\eta_{i'})| + |f_{\tilde{c}_{j'}}^{-1}(P(\sigma)\eta_{j'})| + |c_{j'} + f_{c_{j'}}^{-1}(\eta_{j'}))| + |H_Z^T \eta_{i'}| + |H_Z^T \eta_{j'}| + \sum_{i=1}^{k-1} | \eta_{i} + \eta_{i+1}| \\
        &\ge |c_{i'} + f_{c_{i'}}^{-1}(\eta_{i'})| + |c_{j'} + f_{c_{j'} + P(\sigma)\tilde{c}_{j'}}^{-1}(\eta_{j'})| + |H_Z^{\top} \eta_{i'}| + |H_Z^{\top}\eta_{j'}| + \sum_{i=1}^{k-1}|\eta_i + \eta_{i+1}|\\
        &= \wt(\Zlog'' \Lambda)
    \end{align}
    where the second inequality follows from the triangle inequality and Equation~\eqref{eq:sdproof3}, \(\Zlog'' = Z(c_{i'}\cdot e_{i'}^{\top} + c_{j'}\cdot e_{j'}^{\top}, 0)\) and in the last equality, we assume we are measuring the operator
    \begin{equation}
        Z(c_{i'}\cdot e_{i'}^{\top} + (c_{j'} + P(\sigma)\tilde{c}_{j'}) \cdot e_{j'}^{\top}, 0).
    \end{equation}
    If \(c_{j'} = 0\), then \(\wt(\Zlog'' \Lambda ) \ge d\) since this corresponds to Case 1 above.
    If \(c_{j'} \ne 0\), then
    \begin{align}
        \wt(\Zlog''\Lambda) &\ge |c_{j'} + f^{-1}_{c_{j'} + P(\sigma) \tilde{c}_{j'}}(\eta_{j'})| + |H_Z^{\top} \eta_{j'}| \\
        & \ge d
    \end{align}
    where the second inequality follows because the single-column system is distance preserving.
    
    Now suppose \(i' = j'\). 
    \begin{align}
        \wt(\Zlog'\Lambda) &\ge |c_{i'} + f_{c_{i'}}^{-1}(\eta_{i'})| + |f_{P(\sigma)\tilde{c}_{i'}}^{-1}(\eta_{i'})| + |H_Z^\top \eta_{i'}| \\
        &\ge |c_{i'} + f_{c_{i'} + P(\sigma)\tilde{c}_{i'}}^{-1}(\eta_{i'})| + |H_Z^\top  \eta_{i'}| \\
        &\ge d
    \end{align}
    where the second inequality follows from the triangle inequality and equation \eqref{eq:sdproof3}, and the third inequality follows because the single-column system is distance preserving.

    Finally, by similar reasoning as above, multiplication by \(Z\) checks cannot reduce this weight below \(d\).
\end{proof}

We conclude this section by noting that while we have restricted the above discussion to cyclic HGP codes, the same techniques would apply to HGP codes \(\HGP(\partial, \partial)\) where \(\partial\) is full rank, as in this case, there are no logical operators supported on the right block.

\subsection{Full Extractors}
\label{sec:full-ext}

Now, by using one \(X\) extractor, and one \(Z\) extractor as described in the previous section, it is possible to perform any logical CSS measurement.
In this section, we demonstrate how to connect these two extractors to perform non-CSS measurements, thus yielding a full extractor.

Let \((\mathcal{X}_X, f_X, D_X)\) be the \(X\) extractor and \((\mathcal{X}_Z, f_Z, D_Z)\) be the \(Z\) extractor for a cyclic HGP code, using an \([n, k, d]\) cyclic code as the classical code.
Consider that the domains of \(f_X\) and \(f_Z\) overlap in \(2k^2\) locations, corresponding to the \(k^2\) information qubits in the left block, and the \(k^2\) information qubits in the right block.
Let the two extractor graphs be \(G_X\) and \(G_Z\), and let \emph{e.g.} vertex \(v_{i, x}^{(j)}\) denote the \(i\)th vertex in the \(j\)th single-column extractor graph in \(G_X\).
Suppose without loss of generality that for each single-column extractor graph, vertices \(1, 2, \dots, k\) are the ones which the port function maps the information qubits to.
We see that the \(2k^2\) qubits on which the domains of \(f_X\) and \(f_Z\) overlap correspond to only \(k^2\) vertices in \(G_X\) and \(k^2\) vertices in \(G_Z\), which is why we chose the information rows and columns in the right block to correspond to \(\{\sigma(i): i \in [k]\}\), as opposed to just \([k]\).
To connect the two extractors, we construct the full extractor graph \(G_F\) by adding an edge between vertices \(v_{i, x}^{(j)}\) and \(v_{j, z}^{(i)}\) for every \(i, j \in [k]\), as each of these two vertices has the same pre-image under \(f_X\) and \(f_Z\) respectively. This adds \(k^2\) edges to the graph, and \(k^2 - 1\) additional cycles. The following lemma characterizes the weight of these new cycles, as well as the new congestion introduced by adding these cycles.

\begin{lemma}
	Let \(G_X\) and \(G_Z\) be the \(X\) and \(Z\) extractor graphs as described in the previous section. Suppose they each have a cycle basis with maximum weight \(\tilde{w}\) and congestion \(\tilde{\rho}\). Let \(T\) be a \((k-1) \times |E|\) matrix such that row \(i\) of \(T\) corresponds to a path in the single-column extractor graph between vertex \(i\) and \(i+1\). Let the maximum row weight of \(T\) be \(w'\) and the maximum column weight of \(T\) be \(\rho'\). Then, there exists a cycle basis of the full extractor graph with maximum cycle weight \(\max(\tilde{w}, w' + 3)\) and congestion at most \(\tilde{\rho} + \rho'\).
\end{lemma}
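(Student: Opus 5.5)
The plan is to build the cycle basis of $G_F$ explicitly as the union of the given bases of $G_X$ and $G_Z$ with $k^2-1$ new cycles, one per bridge edge beyond the first, and then to bound weights and congestion directly. First the count. $G_X$ and $G_Z$ are connected and disjoint, and we add $k^2$ edges between them. So the cycle space dimension increases by $(|E_X|+|E_Z|+k^2) - (|V_X|+|V_Z|) + 1 - [(|E_X|-|V_X|+1)+(|E_Z|-|V_Z|+1)] = k^2-1$. It therefore suffices to exhibit $k^2-1$ independent new cycles which, together with the old bases, remain independent.

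Index the bridge edges $b_{i,j} = (v^{(j)}_{i,x}, v^{(i)}_{j,z})$ for $i,j\in[k]$, and order the pairs $(i,j)$ along a path in the $k\times k$ grid that moves only between adjacent pairs. A snake order works: consecutive pairs differ either by $i\to i+1$ with $j$ fixed, or by $j\to j+1$ with $i$ fixed. For each consecutive pair $b, b'$, form the cycle consisting of $b$, $b'$, a short path in $G_X$ joining their $X$-endpoints, and a short path in $G_Z$ joining their $Z$-endpoints.

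Now consider a step $(i,j)\to(i+1,j)$. In $G_X$ the endpoints are $v^{(j)}_{i,x}$ and $v^{(j)}_{i+1,x}$, which lie in the same copy $G^{(j)}$, so they are joined by the path given by row $i$ of $T$, of weight at most $w'$. In $G_Z$ the endpoints are $v^{(i)}_{j,z}$ and $v^{(i+1)}_{j,z}$, corresponding vertices in adjacent copies, so they are joined by the single bridge edge of $B^{(i)}$. The cycle weight is therefore at most $w'+1+2 = w'+3$. The step $(i,j)\to(i,j+1)$ is symmetric, with the $T$-path now lying in $G_Z$. Each new cycle contains exactly two bridge edges, and the sequence of bridge edges along the snake makes the new cycles triangular with respect to the bridge coordinates, so they are independent of one another. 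Since every old cycle avoids all bridge edges, the full collection is independent and hence a basis. The maximum weight is $\max(\tilde w, w'+3)$.

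The main obstacle is the congestion bound $\tilde\rho+\rho'$, because the snake order can reuse the same $T$-row path in the same copy many times. For $(i,j)\to(i+1,j)$ the $T$-path of row $i$ sits in copy $j$ of $G_X$, and the snake traverses that step at most once for each fixed $j$. So in each copy, each row of $T$ is used at most once, and an edge of $G_X$ gains extra congestion at most $\rho'$. The interlayer edges of $B^{(\cdot)}$ in $G_X$ or $G_Z$ are also reused. The edge $(v^{(i)}_{j,z}, v^{(i+1)}_{j,z})$ is used only by the step $(i,j)\to(i+1,j)$, hence at most once, and a bridge edge $b_{i,j}$ lies in at most two new cycles. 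The remaining care is to choose the snake so that no $T$-path is used twice within one copy. I would verify this by charging each step to the pair (copy, row of $T$). If a naive snake failed, I would instead use the spanning "comb" (one full column of vertical steps plus horizontal rows), which likewise uses each (copy, row) pair at most once. The old basis contributes $\tilde\rho$, giving congestion at most $\tilde\rho+\rho'$, up to the additive constant from the $B$ edges and bridge edges, which I would check is absorbed because $\rho'$ and $\tilde\rho$ already count those edges' other incidences.
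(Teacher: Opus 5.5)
Your proposal is correct and is essentially the paper's own proof. The paper also threads the $k^2$ new edges along a boustrophedon path in the $k\times k$ grid, with full runs in one index and turns alternating between the two ends. It closes each consecutive pair with one single-basis bridge edge and one row of $T$ (weight $w'+3$), and gets independence from the repetition-code block $H_C$ on the new edges.

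The congestion worry you flag is not real. A simple path in the grid uses each grid edge once, and grid edges correspond injectively both to (copy, row of $T$) pairs and to interlayer edges. So $T$-path edges gain at most $\rho'$, and interlayer edges, which are disjoint from the $T$-paths, gain at most $1\le\rho'$. Each new edge lies in at most two basis cycles.
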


\begin{proof}
    We will explicitly construct \(k^2-1\) new linearly independent cycles.
    Our construction borrows heavily from the universal adaptors of~\cite{swaroop2024universal}.
    Let \(e_{i,j}\) be the edge between \(v_{j,x}^{(i)}\) and \(v_{i,z}^{(j)}\).
    Let \(b_{i,x}^{(j)}\) be the bridge qubit/edge between \(v_{i,x}^{(j)}\) and \(v_{i,x}^{(j+1)}\), and \(b_{i,z}^{(j)}\) be the bridge qubit/edge between \(v_{i,z}^{(j)}\) and \(v_{i,z}^{(j+1)}\).
    Consider the following set of cycles:
    \begin{enumerate}
        \item For \(i \in [k-1]\), \(j \in [k]\), the cycle \(\{b_{j,x}^{(i)}, e_{i,j}, e_{i+1,j}, t_{i, z}^{(j)}\} \) where \(t_{i,z}^{(j)}\) is the set of edges in the \(j\)th \(Z\) column extractor corresponding to row \(i\) of \(T\).
        \item For \(i \in [k-1]\), \(i\) odd, the cycle \(\{b_{k,z}^{(i)}, e_{k,i}, e_{k,i+1}, t_{i,x}^{(k)} \}\) where \(t_{i,x}^{(k)}\) is the set of edges in the \(k\)th \(X\) row extractor corresponding to row \(i\) of \(T\).
        \item For \(i \in [k-1]\), \(i\) even, the cycle \(\{b_{1,z}^{(i)}, e_{1,i}, e_{1,i+1}, t_{i,x}^{(1)}\}\) where \(t_{i,x}^{(1)}\) is the set of edges in the first \(X\) row extractor corresponding to row \(i\) of \(T\).
    \end{enumerate}
    The bridge edges present in this set of cycles are shown in Figure~\ref{fig:full-ext-cycle-basis-bridge-edges}.
    Note how the bridge edges together form a path through the information vertices, and we can see that the two edges contained at the ends of this path will each be contained in one new cycle, and every other newly added edge will be contained in two cycles.
    \begin{figure}
        \centering
        \includegraphics[width=0.7\linewidth]{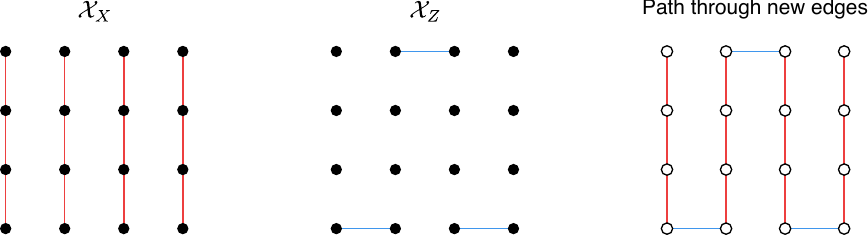}
        \caption{
        Three different subsets of the full extractor, here shown for \(k=4\).
        On the left, black dots indicate information vertices in \(\mathcal{X}_X\), and red lines indicate edges in \(\{B_X^{(1)}, \dots, B_X^{(k)}\}\) which are contained in the newly added cycles for the cycle basis of the full extractor.
        In the middle, black dots indicate information vertices in \(\mathcal{X}_Z\), and blue edges indicate edges in \(\{B_Z^{(1)}, \dots, B_Z^{(k)}\}\) which are contained in the newly added cycles for the cycle basis of the full extractor.
        Each red and blue edge is contained in one new cycle, which consists of the aforementioned edge, two newly added bridge edges corresponding to the vertices incident to the edge, and then a path in the opposite basis graph corresponding to a row of \(T\).
        On the right, white dots indicate bridge edges between \(\mathcal{X}_X\) and \(\mathcal{X}_Z\), and we overlay the red and blue edges to show they form a path through the new bridge edges.
        This ensures that the cycle basis matrix, when restricted to these newly added edges, looks like the full rank parity check matrix for a repetition code on \(k^2\) bits.}
        \label{fig:full-ext-cycle-basis-bridge-edges}
    \end{figure}
    Let \(N_X\) and \(N_Z\) be the cycle bases of \(G_X\) and \(G_Z\).
    Then, we may write the cycle basis for \(G_F\) as

    \begin{equation}
        \begin{pmatrix}
            N_X & 0 & 0 \\
            T_X & H_C & T_Z \\
            0 & 0 & N_Z
        \end{pmatrix}
    \end{equation}

    where \(H_C\) is the standard full rank parity check matrix for the repetition code on \(k^2\) bits, and the middle \(k^2 - 1\) rows correspond to the cycles listed above.
    Since \(N_X, H_C\), and \(N_Z\) all have linearly independent rows, the above matrix has linearly independent rows, and thus this forms a full cycle basis for \(G_F\).
    We can see that the maximum row weight of \(\begin{pmatrix}
        T_X & H_C & T_Z
    \end{pmatrix}\) is \(w' + 3\), and the maximum column weight is \(\max(2, \rho')\).
    Thus, the maximum weight of the cycle basis is \(\max(\tilde{w}, w' + 3)\), and the congestion is at most \(\tilde{\rho} + \rho'\).
\end{proof}

We note that in practice, we found it straightforward to construct single-column extractor graphs such that \(w'\) and \(\rho'\) were small (e.g. see Table~\ref{tab:full-ext}). 

We now define the port function of the full extractor. As stated previously, the intersection of the domain of \(f_X\) and the domain of \(f_Z\) is the information qubits in the code. Thus, to construct a valid port function, we must assign each information qubit to its corresponding vertex in either the \(X\) extractor, or the \(Z\) extractor. A simple choice is to assign all of the information qubits to the \(X\) extractor, e.g. for each qubit \(q\) in the support of a canonical logical basis operator,
\begin{align}
    \label{eq:full-ext-port}
    f_F(q) = \begin{cases} 
    f_X(q) & \text{ if } q \in \mathrm{domain}(f_X) \\
     f_Z(q) & \text { otherwise }
\end{cases}
\end{align}

In practice, it might be beneficial to include connections between the information qubits and the corresponding vertex checks of both the \(X\) and \(Z\) extractors, as this allows for measuring disjoint \(X\) and \(Z\) logical operators in parallel.

Next, we examine the necessary adjustments to the deformation map that arise from using the port function from Equation~\eqref{eq:full-ext-port}.
For a \(Z\) check \(S\), we simply take \(D_F(S) = D_X(S)\).
Similarly, for an \(X\) check \(S\) with no support on any of the information qubits in the code, we take \(D_F(S) = D_Z(S)\).
For an \(X\) check which has support on an information qubit, it becomes necessary to add a connection between that \(X\) check and the bridge qubit adjacent to the \(Z\) extractor vertex check in the image of that information qubit under \(f_Z\), as illustrated in Figure \ref{fig:extractor-connectivity}.
We note that for a cyclic code parity check matrix, the maximum overlap between a given check and the first \(k\) columns is \(w-1\) (where \(w\) is the weight of the rows).
This follows from the fact that given the check polynomial \(h(x) = h_0 + \dots + h_k x^k\), we always have \(h_0 = h_k = 1\).
Thus for such \(X\) checks, \(D_F(S) = D_Z(S) \cup B_S\), where \(B_S\) are the aforementioned bridge qubits.
The degree of each kind of qubit in the construction is listed in Table~\ref{tab:full-ext}.

\begin{figure}
    \centering
    \includegraphics[width=0.6\linewidth]{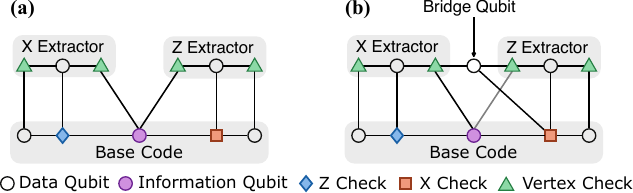}
    \caption{\textbf{(a)} The required connectivity between the \(X\) extractor and the base code, as well as \(Z\) extractor and the base code. For simplicity of the illustration, for each extractor, we include one edge incident to a vertex in the image under the port function of an information qubit. \textbf{(b)} The required connectivity between the full extractor and the base code. Observe that in order for the base code \(X\) check to be connected to a path matching between the vertex checks in the extractor that it locally anti-commutes with on the base code, an additional connection to the bridge qubit must be added. We also note that while technically not part of the full extractor, we keep the connection between the information qubit and the \(Z\) extractor vertex check to enable parallel measurement of disjoint \(X\) and \(Z\) operators.}
    \label{fig:extractor-connectivity}
\end{figure}

\begin{table}[h]
    \centering
    \begin{tabular}{ccc}
        \toprule
         Qubit Type & Max. Degree & \makecell{Example Value for \\ \(d=10\) System}  \\
         \midrule
         Left Block Data Qubit & \(2w + 2\) & 8 \\
         Right Block Data Qubit & \(2w + 2\) & 8 \\
         Base Code \(X\) Check & \(2w + 2u + 2(w-1)\) & 14 \\
         Base Code \(Z\) Check & \(2w + 2u\) & 10 \\
         Ancilla System Data Qubit & \(2 + \rho + 2p + 1 + \rho'\) & 8 \\
         Vertex Check & \(\Delta_Z + 5\) & 9 \\
         Cycle Check & \(\max(\Delta_X, w' + 3 )\) & 6 \\
         Single Basis Bridge Qubit & \(\Delta_Z + 3\) & 7 \\
         Full Extractor Bridge Qubit & \(4 + 2w\) & 10\\
         \bottomrule
    \end{tabular}
    \caption{Qubit degrees in a full extractor system. The last column lists the values for the extractor system for the \([[882,50,10]]\) cyclic HGP code. Note that some \(X\) checks in the \(d=10\) system have degree greater than ten. To reduce the degree of the construction back to ten, measure these checks using Bell checks~\cite{yoder2025tour}, which use two degree-8 qubits instead of one degree-\(14\) qubit.}
    \label{tab:full-ext}
\end{table}

Now, having suitably defined the full extractor, we show its fault tolerance.

\begin{theorem}
    \label{thm:full-ext}
    Let \((\mathcal{X}_F, f_F)\) be a full extractor described above for a cyclic HGP code with a classical \([n, k, d]\) code. Suppose the single-column extractor is distance preserving, and the single-column extractor graph \(G\) has edge connectivity \(\lambda(G)\). Then, using this extractor to measure any product of canonical logical basis operators supported on the code has fault distance at least \(\min(d, k^2, \lambda(G) \cdot k)\).
\end{theorem}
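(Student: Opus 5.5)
The plan is to use Lemma~\ref{thm:fault-dist}. It says the timelike fault distance is \(d(\overline{Q}_M\mathcal{L}\mathcal{G})\) and the spacelike fault distance is \(d(\mathcal{L}^*\mathcal{G})\). So it suffices to lower bound the weight of every element of these two sets by \(\min(d,k^2,k\lambda(G))\).

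I would first dispose of the pure-CSS case, \(\Plog_M=\Zlog\) or \(\Plog_M=\Xlog\). The only change from Appendix~\ref{sec:single-basis-ext} is that the unused single-basis extractor hangs off the graph through the \(k^2\) bridge edges, and the information qubits are routed to \(\mathcal{X}_X\). Every vertex of \(G_F\) is either used or connected to a used vertex. So a product of vertex checks whose vertex set \(\eta\) leaves the used extractor must pay on the edges of the unused extractor. Working through the cases that follow therefore reduces to Theorem~\ref{thm:single-basis-dist} together with Theorem~\ref{thm:css-fault-dist}, giving bound \(d\).

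For a general \(\Plog_M=\Xlog\cdot\Zlog\), I would fix an operator \(E\Lambda\) from \(\mathcal{L}^*\mathcal{G}\) or \(\overline{Q}_M\mathcal{L}\mathcal{G}\). Here \(\Lambda\in\mathcal{S}'_V\) corresponds to a vertex set \(\eta\subseteq V_X\cup V_Z\). I would write \(\eta^{(j)}_X\) and \(\eta^{(j)}_Z\) for its intersections with the \(j\)-th copy of \(G\) in \(G_X\) and \(G_Z\). The weight of \(E\Lambda\) on the ancilla is at least
\begin{equation}
\sum_{j}\bigl(|M^\top\eta^{(j)}_X|+|M^\top\eta^{(j)}_Z|\bigr)+\#\{\text{cut bridge edges in } G_X, G_Z\}+\#\{\text{cut full-extractor bridge edges}\}.
\end{equation}
The case split follows from this expression.
\begin{itemize}
\item \textbf{Case (i): many copies cut.} Suppose that within \(G_X\) or within \(G_Z\) every one of the \(k\) copies of \(G\) is cut, meaning \(\eta^{(j)}\notin\{\emptyset,V^{(j)}\}\). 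Then by the definition of \(\lambda(G)\) the ancilla weight alone is at least \(k\lambda(G)\).
\item \textbf{Case (ii): \(\eta\) constant on one side.} Suppose some copy is uncut. If copies differ across the bridges \(B^{(i)}\), the single-basis bridges contribute. The goal is to reduce to the situation where \(\eta\) restricted to \(G_Z\) is either empty or all of \(V_Z\), and likewise for \(G_X\). I expect an exchange argument in the style of equation \eqref{eq:sdproof5}: replacing \(\eta\) on a side by its majority value does not increase the weight.
\item \textbf{Case (iii): the two sides disagree.} Suppose \(\eta\) contains all of \(V_Z\) but is empty on \(V_X\), or vice versa. Then every one of the \(k^2\) full-extractor bridge edges is cut, giving weight \(\ge k^2\).
\item \textbf{Case (iv): the two sides agree.} Suppose the two sides agree, i.e.\ \(\eta\) is trivial or all of \(V_F\). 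Then \(\Lambda\) is trivial or equals \(\Plog_M\) itself. In this case I split \(E\) into its \(X\)- and \(Z\)-components and use that \(\mathcal{G}\) is generated by CSS base checks and cycle checks, apart from the vertex checks now removed. Each component must then be a nontrivial base-code logical (spacelike case), or flip \(\Plog_M\) only through the first-round vertex checks (timelike case). Either way it has weight \(\ge d\), via Lemma~\ref{lem:z-hom-inv}, Corollary~\ref{cor:clean}, and the single-basis analysis applied separately to the \(X\) and \(Z\) parts.
\end{itemize}

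The main obstacle is the mixed regime. There, \(\eta\) is partially cut: some copies are cut, which buys fewer than \(k\) multiples of \(\lambda(G)\), and some bridges are cut, which buys fewer than \(k^2\). Meanwhile the base-code part uses those vertex checks to shave weight off a logical supported on the corresponding columns. I would handle this with the column-by-column argument of Theorem~\ref{thm:single-basis-dist}, Case 1. For each copy \(j\) where \(\eta\) is nonconstant, the single-column distance-preserving property charges \(\ge d\) unless the logical restricted to that column is trivial or equal to the measured one. For a column where the logical is trivial or measured, the vertex set on it can be replaced by \(\emptyset\) or \(V^{(j)}\) at a cost bounded by the cut edges already counted.

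A second subtlety is the port function \(f_F\) sending information qubits only to \(\mathcal{X}_X\). This means a \(Z\)-type vertex-check product touches information qubits only through \(G_X\). I would handle it by noting that the extra deformed edges \(B_S\) are \(X\)-type and so do not affect the \(Z\)-weight count. After that substitution, the single-basis argument carries over with \(f_Z\) replaced by \(f_F\).
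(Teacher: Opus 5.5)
There is a genuine gap, and it lies in the regime where the bound \(\min(k^2,k\lambda(G))\) actually comes from.

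Your cases (i), (iii) and (iv) are correct, but they are only extremes. Logical parts with \(I\neq\Zlog'\neq\Zlog\) or \(I\neq\Xlog'\neq\Xlog\) are already disposed of with weight \(\ge d\) via Theorem~\ref{thm:single-basis-dist}. After that, the dangerous operators are those whose logical part is exactly \(\Zlog\) or exactly \(\Xlog\). For these, every column and row of the logical is either trivial or equal to the measured one. So the single-column distance-preserving argument you invoke for the mixed regime gives nothing there.

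Your remaining tool is to replace a cut copy by \(\emptyset\) or \(V^{(j)}\), or a whole side by its majority value, ``at a cost bounded by the cut edges already counted.'' That monotonicity is not established and fails in general. Making a copy constant removes only its internal cut, which can be as small as \(\lambda(G)\). Meanwhile it can newly cut single-basis bridge edges (up to \(n\) per neighbouring copy) and up to \(k\) full-extractor bridge edges. It also changes which base-code qubits are cleaned. So the mixed regime is never reduced to (i), (iii), (iv).

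Likewise, in (ii) the single-basis bridges alone do not pay enough when an empty copy sits next to a copy with small nonempty \(\eta\). Their cost must be combined with the weight of the logical left uncleaned on the base code.

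The paper avoids any exchange argument. Take the case where the logical part is \(\Zlog\).

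First, suppose some \(\eta_{i,z}=0\). Telescope along the single-basis bridges from copy \(i\) to a copy \(j\) with \(c_{j,z}\neq 0\). The accumulated bridge cost plus the uncleaned weight \(|c_{j,z}|-|\eta_{j,z}|\) in column \(j\) is at least \(|c_{j,z}|\ge d\). The case where some \(\eta_{j,x}=\mathbf{1}\) is symmetric.

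In what remains, every \(Z\)-copy is cut or full, and every \(X\)-copy is cut or empty. Each pair consisting of \(X\)-copy \(j\) and \(Z\)-copy \(i\) shares exactly one full-extractor bridge edge. With \(n_z\) cut \(Z\)-copies and \(n_x\) cut \(X\)-copies, one gets directly
\[
\wt(\Zlog M\Lambda)\ \ge\ (n_z+n_x)\lambda(G)+(k-n_z)(k-n_x)\ \ge\ k\min(k,\lambda(G)),
\]
which interpolates between your cases (i) and (iii).

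Two ideas are missing from your proposal: this product count of cut full-extractor bridge edges, and the preliminary telescoping step that rules out empty \(Z\)-copies and full \(X\)-copies at cost \(\ge d\).
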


\begin{proof}
    Suppose we use our extractor to measure logical operator \(\overline{P}_M =  \Xlog \Zlog\) (where we ignore phase). Because our merged code may potentially be non-CSS, we can no longer rely on Theorem \ref{thm:css-fault-dist}, and must instead fall back to Lemma \ref{thm:fault-dist}. We first consider the spacelike fault distance, \emph{i.e.} we wish to evaluate the smallest weight operator in the set \(\mathcal{L}^*\mathcal{G}\). Any operator in this set may be written (up to phase) as
    \begin{equation}
    \label{eq:full-ft-1}
        \Xlog' \Zlog' X_e M M'
    \end{equation}
    where \(\Xlog'\) and \(\Zlog'\) are \(X\) and \(Z\) type logical operators of the base code such that \(\Xlog' \Zlog' \ne \overline{P}_M\), \(X_e\) is the product of \(X\) operators on edge qubits of the extractor, \(M \in \mathcal{S}\) and \(M' \in \mathcal{S}'\). Note that operators of this form actually include all elements of the set \(\overline{Q}_M \mathcal{L} \mathcal{G}\), and so this analysis extends to the timelike distance as well. Furthermore, every operator \(M M'\) can be expressed as the product of an element of \(\mathcal{S}\), \(X\) operators on edge qubits in the extractor, and an element \(\Lambda \in \mathcal{S}_V\). Thus, we may re-write Equation \eqref{eq:full-ft-1} as
    \begin{equation}
        \Xlog' \Zlog' X_e M \Lambda
    \end{equation}
    and we see that
    \begin{equation}
        \wt(\Xlog' \Zlog' X_e M \Lambda) \ge \wt(\Xlog' \Zlog' M \Lambda).
    \end{equation}

    Let \(\Zlog'\Xlog' M = Z(L_Z +U_Z \partial, R_Z + \partial U_Z) X(L_X + U_X\partial, R_X + \partial U_X)\) for some \(L_Z, R_Z, U_Z, L_X, R_X\) and \(U_X\).
    
    Now, suppose that \(I \ne \Zlog' \ne \Zlog\). We then have 
    \begin{align}
        \wt(\Zlog'\Xlog' M \Lambda) &\ge |L_Z + U_Z \partial + f^{-1}_{Z, L, \Zlog}(\eta_z) | + |R_Z +\partial U_Z  + f^{-1}_{Z, R, \Zlog}(\eta_z) | + |S_{A, Z}| \\ &\ge d
    \end{align}
    where \(\eta_z \in \mathbb{F}_2^{k \cdot |V|}\) denotes the set of vertices in \(G_Z\) which are part of \(\Lambda\), \(S_{A, Z}\) denotes the support of \(\Lambda\) on the edge qubits of the \(Z\) extractor, \(f_{Z, L, \Zlog}\) denotes the \(Z\) extractor port function with the domain restricted to the support of \(\Zlog\) in the left block, and similarly for \(f_{Z, R, \Zlog}\). The second inequality follows from the fact that the single-column extractor is distance preserving, and from Theorem \ref{thm:single-basis-dist}. Similarly, if \(I \ne \Xlog' \ne \Xlog\), we also have that \(\wt(\Zlog'\Xlog' M \Lambda) \ge  d\). Thus, it remains to examine the case where \(\Zlog' = \Zlog\) and \(\Xlog' = I\) (and vice versa). Suppose
    \begin{equation}
        L_Z = \sum_{i=1}^k c_{i,z} \cdot e_i^\top , \quad R_Z = \sum_{j=1}^k e_{\sigma(j)} \cdot \tilde{c}_{j,z}^\top , \quad L_X = \sum_{j=1}^k e_j \cdot c_{j,x}^\top , \quad R_X = \sum_{i=1}^k \tilde{c}_{i,x} \cdot e_{\sigma(i)}^\top .
    \end{equation}
    Furthermore, suppose that in the \(i\)th single-column \(Z\) extractor, \(\Lambda\) corresponds to vertices \(\eta_{i,z}\), and in the \(i\)th row of the \(X\) extractor, \(\Lambda\) corresponds to vertices \(\eta_{i,x}\). The proof now proceeds in three steps. First, we show that if any of the \(\eta_{i,z} = 0\), then \(\wt(\Zlog M\Lambda) \ge d \). Then, we show that if any of the \(\eta_{i,x} = \mathbf{1}\) (the all ones vector), then \(\wt(\Zlog M \Lambda) \ge d\). Then, assuming each \(\eta_{i,z} \ne 0\) and each \(\eta_{i,x} \ne \mathbf{1}\), we show \(\wt(\Zlog M \Lambda) \ge \min(k^2, k\cdot \lambda(G))\).

    Suppose there exists at least one \(i' \in [k]\) such that \(\eta_{i', z} = 0\). We then let \(i,j \in [k]\) be the indices that minimize \(|i-j|\) such that \(\eta_{i,z} = 0\), and \(\eta_{j,z} \ne 0\) and \(c_{j,z} \ne 0\) or \(\tilde{c}_{j, z} \ne 0\). We know such a \(j\) must exist, as there must be at least one index \(j\) such that \(c_{j,z} \ne 0\) or \(\tilde{c}_{j, z} \ne 0\) (as otherwise \(\Zlog = I\)) and for that \(j\), if \(\eta_{j,z} = 0\) then we automatically have \(\wt(\Zlog M\Lambda) \ge d\). Without loss of generality, let \(j > i\), and let \(c_{j,z} \ne 0\). Note that for any \(i < l < j\), we must have that \(\eta_{l, z} \ne 0\). Furthermore, we have that \(c_{i,z} = 0\) and \(\tilde{c}_{i, z} = 0\), as otherwise we again automatically have \(\wt(\Zlog M \Lambda)\ge d\). Now, supposing for the moment that \(M = I\), we have
    \begin{align}
        \wt(\Zlog \Lambda) &\ge |\eta_{i+1, z}| + \sum_{l=i+1}^{j-1} |\eta_{l,z} + \eta_{l+1,z}| + |c_{j,z} + f^{-1}_{c_{j,z}}(\eta_{j,z})| \\
        & \ge |\eta_{i+1, z}| + \sum_{l=i+1}^{j-1}\max(|\eta_{l,z}| - |\eta_{l+1,z}|, |\eta_{l+1,z}| - |\eta_{l,z}|) + |c_{j,z}| - |f_{c_{j,z}}^{-1}(\eta_{j,z})|.
    \end{align}

    If we split out the last term from the sum, we get
    \begin{align}
        \wt(\Zlog \Lambda) & \ge |\eta_{i+1, z}| + \left (\sum_{l=i+1}^{j-2}\max(|\eta_{l,z}| - |\eta_{l+1,z}|, |\eta_{l+1,z}| - |\eta_{l,z}|) \right )\nonumber \\ &+ \max(|\eta_{j-1,z}| - |\eta_{j,z}|, |\eta_{j,z}| - |\eta_{j-1,z}|) + |c_{j,z}| - |f_{c_{j,z}}^{-1}(\eta_{j,z})| \\
        & \ge |\eta_{i+1, z}| + \left (\sum_{l=i+1}^{j-2}\max(|\eta_{l,z}| - |\eta_{l+1,z}|, |\eta_{l+1,z}| - |\eta_{l,z}|) \right ) + |c_{j,z}| - \min(|\eta_{j-1,z}|, |\eta_{j,z}|)
    \end{align}
    where we used that \(|\eta_{j,z}| \ge |f_{c_{j,z}}^{-1}(\eta_{j,z})|\). We may continue this process, until we obtain
    \begin{align}
        \wt(\Zlog \Lambda) & \ge |\eta_{i+1, z}| + |c_{j,z}| - \min(|\eta_{i+1,z}|, |\eta_{i+2,z}|) \\
        & \ge |c_{j,z}|\\
        & \ge d.
    \end{align}

    Since we are only relying on the support of our operator in a single information row or column of the base code, this weight cannot be reduced by multiplication by stabilizers by Lemma~\ref{lem:hgp-clean}.

    Now we move to the second step.
    Assume there exists at least one \(i\) such that \(\eta_{i,x} = \mathbf{1}\).
    Again, note that there must exist at least one \(j\) such that \(c_{j,x} \ne 0\) or \(\tilde{c}_{j,x} \ne 0\), as otherwise the operator we are considering cleaning is in the same homology class as the operator we are measuring.
    Let \(i, j \in [k]\) be the indices that minimize \(|i-j|\), such that \(\eta_{j,x} = \mathbf{1}\) and either \(c_{i,x} \ne 0\) or \(\tilde{c}_{i,x} \ne 0\).
    Without loss of generality let \(j > i\) and let \(c_{i,x} \ne 0\).
    For each \(i'\), let \(|\eta_{i',x}| = n - |c_{i,x}| + \varepsilon_{i'}\).
    Again, supposing temporarily that \(M = I\), we have
    \begin{align}
        \wt(\Zlog \Lambda) &\ge |f_{c_{i,x}}^{-1}(\eta_{i,x})| + \sum_{l=i}^{j-2} |\eta_{l,x} + \eta_{l+1,x}| + |\eta_{j-1,x} + \eta_{j,x}| \\
        & \ge |f_{c_{i,x}}^{-1}(\eta_{i,x})| + \sum_{l=i}^{j-2}|\varepsilon_{l} - \varepsilon_{l+1}| + n - |\eta_{j-1,x}| \\
        &\ge |f_{c_{i,x}}^{-1}(\eta_{i,x})| + \sum_{l=i}^{j-2} \max( \varepsilon_l - \varepsilon_{l+1}, \varepsilon_{l+1}-\varepsilon_l) + |c_{i,x}| - \varepsilon_{j-1} \\
        &\ge |f_{c_{i,x}}^{-1}(\eta_{i,x})| + |c_{i,x}| - \min(\varepsilon_i, \varepsilon_{i+1}) \\
        & \ge \max(0, \varepsilon_i) + |c_{i,x}| - \min(\varepsilon_i, \varepsilon_{i+1}) \\
        & \ge d
    \end{align}
    where we used the same telescoping sum as in the previous step. Again, since we are only relying on the support of our operator in single information row/column of the base code, this weight cannot be reduced by multiplication by stabilizers by Lemma~\ref{lem:hgp-clean}.
    
    This brings us to the third step. Suppose for every \(i\), \(\eta_{i,z} \ne 0\), and for every \(j\), \(\eta_{j,x} \ne \mathbf{1}\). 
    Additionally, suppose there are \(n_z\) indices such that \(\eta_{i,z} \ne \mathbf{1}\).
    For each such \(i\), the \(i\)th \(Z\) column extractor has support of size at least \(\lambda(G)\).
    Additionally, suppose there are \(n_x\) indices such that \(\eta_{j,x} \ne 0\).
    For each such \(j\), the \(j\)th \(X\) row extractor has support at least \(\lambda(G)\). 
    Looking only at the extractor, but taking into account the support of the bridge between the \(X\) and \(Z\) extractors, we then have
    \begin{align}
        \wt(\Zlog M \Lambda) &\ge n_z \lambda(G) + (k-n_z) \cdot k - n_x(k - n_z) + n_x \lambda(G) \\
        & = (n_z + n_x) \lambda(G) + (k-n_z) (k-n_x).
    \end{align}
    If \(n_x + n_z \ge k\), then we have \(\wt(\Zlog M \Lambda) \ge k \lambda(G)\) as desired.
    Now suppose \(n_x + n_z < k\). Then
    \begin{align}
        (n_z + n_x) \lambda(G) + (k-n_z)(k-n_x) & \ge (n_z + n_x) \lambda(G) + k(k-n_z-n_x) \\
        & \ge k \cdot \min(k, \lambda(G)).
    \end{align}

    This completes the proof.
\end{proof}
Finally, observe that the full extractor has size
\begin{equation}
    4k |E| + 4 (k-1) |V| + 2k^2 + 1 = \Theta(k(|E| + |V| + k)).
\end{equation}
This observation, along with Theorem~\ref{thm:full-ext}, proves Theorem~\ref{thm:main}. 

\subsection{Explicit Construction Details}

In this section we provide additional information about the three new extractors shown in Table~\ref{tab:comparison}; namely, the size of the graph, the edge connectivity of the graph, and the distribution of qubit degrees of the full extractor.
We note that each construction contains several qubits of degree greater than ten.
In each case, these qubits correspond to deformed \(X\) checks of the base code.
In order to bring the maximum qubit degree of the construction down to ten, we measure these checks using a Bell pair; see \emph{e.g.} Figure 4 of~\cite{yoder2025tour}.
Given a check of degree \(\delta\), this results in two qubits of degrees \(\lfloor \frac{\delta}{2} \rfloor + 1\) and \(\lceil \frac{\delta}{2} \rceil + 1\). The full degree distributions are listed in Table~\ref{tab:deg-dist}.

\begin{table}[h]
	\begin{center}
\begin{tabular}{cccccccccccc}
	\toprule
	& \multicolumn{3}{c}{Graph} & \multicolumn{8}{c}{Qubit Degree Distribution} \\
	\cmidrule(lr){2-4} \cmidrule(lr){5-12}
	\addlinespace
	Code & \(|V|\) & \(|E|\) & \(\lambda(G)\) & 3 & 4 & 5 & 6 & 7 & 8 & 9 & 10 \\
	\midrule
	\([[450,32,8]]\) & 15 & 30 & 3 & 4 & 201 & 86 & 664 & 294 & 265 & 52 & 39 \\
	\addlinespace
	\([[882,50,10]]\) & 21 & 42 & 4 & 0 & 402 & 89 & 1414 & 473 & 504 & 90 & 39 \\
	\addlinespace
	\([[1922,50,16]]\) & 31 & 62 & 4 & 0 & 557 & 117 & 3309 & 761 & 774 & 82 & 51\\
	\bottomrule
\end{tabular}
\caption{Graph sizes, graph edge connectivity, and qubit degree distributions for the three extractors proposed in this work.
The qubit degree distributions include all qubits in both the base code and extractor.}
\label{tab:deg-dist}
\end{center}
\end{table}

\section{Numerical Simulation Details}
\label{sec:numerics-details}
In this section we provide additional details about the memory and surgery experiments described in Section~\ref{sec:simulation-results}. All simulations use a standard circuit-level depolarizing noise model parameterized by a single value \(p\), where qubit initialization, measurement, single-qubit gates, and two-qubit gates have errors with probability \(p\), and idling locations have errors with probability \(p/10\). 

The memory experiment for the base HGP code proceeded with the following steps, all of which were noisy:
\begin{itemize}
    \item Initialize all data qubits in \(\ket{0}\).
    \item Perform the syndrome measurement circuit \(N_R\) times.
    \item Measure all of the data qubits in the \(Z\) basis.
\end{itemize}

For the surgery experiment, we used the full extractor to measure the following logical operator:
\begin{align}
    \overline{P}_1 &= X_0 X_1Y_3X_5Y_8Z_9Y_{10}Y_{12}Y_{13}Y_{14}Z_{16}Z_{18}Y_{19}Z_{21}Z_{22}Z_{23}X_{24}Z_{25}Y_{26}\nonumber\\
    & \cdot Z_{28}Z_{29}Z_{30}Z_{32}X_{33}Z_{34}Z_{35}X_{36}Y_{37}Y_{38}Y_{39}Z_{42}Z_{43}Y_{44}Y_{46}X_{47}Z_{48}Z_{49}.
\end{align}
We constructed a symplectic basis \(\{\overline{P}_1, \overline{Q}_1, \dots, \overline{P}_{50}, \overline{Q}_{50}\}\), and then the experiment proceeded as follows:
\begin{itemize}
    \item Noiselessly prepare the data code block in a joint \(+1\) eigenstate of \(\overline{P}_1, \dots, \overline{P}_{50}\).
    \item Initialize the data qubits of the extractor in \(\ket{+}\).
    \item Measure the syndrome of the merged code \(N_R\) times.
    \item Measure the data qubits of the extractor in the \(X\) basis.
    \item Measure the syndrome of the base code once.
    \item Noiselessly measure the syndrome of the base code, and noiselessly measure \(\overline{P}_2, \dots, \overline{P}_{50}\).
\end{itemize}
All circuits were constructed using the ``Chunked Circuit Construction'' approach described in Appendix B of~\cite{gidney2024magic}, and the syndrome measurement circuits were scheduled using the integer programming method described in Section A.5 of~\cite{yoder2025tour}.

For the first stage of the two-stage decoder, we use Relay-BP~\cite{muller2025improved} with the parameters listed in Table~\ref{tab:relay-params}.
For the second stage, we use an integer linear programming (ILP) decoder implemented with Gurobi~\cite{gurobi}.

In Figure~\ref{fig:extractor-results-detailed}, we show the results of the logical measurement error rate using just the Relay-BP decoder, and an estimate of the logical measurement error rate using an ILP decoder.
For the Relay-BP decoder, we note that we only count a shot as a logical measurement failure if, for a predicted error \(e'\) and actual error \(e\), the components of \(Le\) and \(Le'\) corresponding to the measured observable differ, where \(L\) is the logical observables matrix~\cite{higgott2025sparse}.
We do not require that Relay-BP converges, \emph{i.e.} \(De' = 0\), where \(D\) is the detector check matrix~\cite{higgott2025sparse}.
For the ILP decoder, we take all failed shots from Relay-BP (including shots in which Relay-BP converged to an incorrect prediction) and decode these shots with the ILP decoder.
We note that it is possible that there are shots that Relay-BP decoded correctly that the ILP decoder would decode incorrectly, thus the results shown in Figure~\ref{fig:extractor-results-detailed}(b) are a lower bound on the actual ILP decoder results, although we expect this to be tight in practice.
As seen in the figure, for \(N_R \ge 11\), there is a significant discrepancy in the logical measurement error rates of Relay-BP and the two-stage decoder.

\begin{figure}
    \centering
    \includegraphics[width=1.0\linewidth]{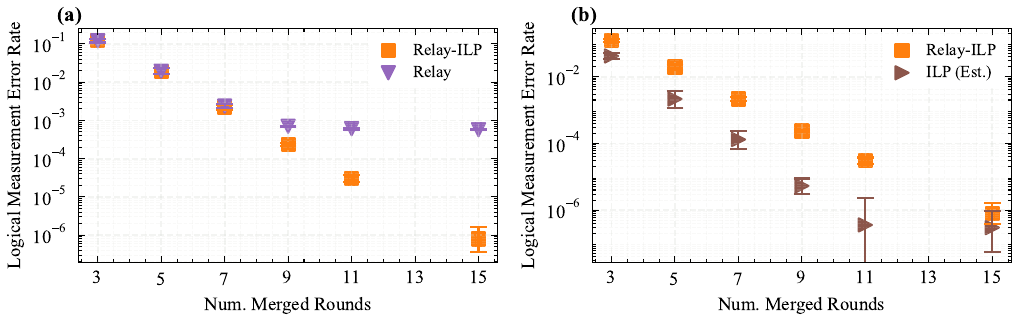}
    \caption{\textbf{(a)} Comparing the logical measurement error rate of the two-stage decoder with just Relay-BP. \textbf{(b)} Comparing the logical measurement error rate of the two-stage decoder with an estimate of an ILP decoder.
    For both panels, error bars indicate \(95\%\) Agresti-Coull confidence intervals.
    }
    \label{fig:extractor-results-detailed}
\end{figure}

\begin{table}[]
    \centering
    \begin{tabular}{cc}
    \toprule
       Parameter  & Value \\
       \midrule
        \texttt{gamma0} & \(0.35\) \\
        \texttt{pre\_iter} & \(200\) \\
        \texttt{num\_sets} & \(10\) \\
        \texttt{set\_max\_iter} & \(200\) \\
        \texttt{stop\_nconv} & 1 \\
        \texttt{gamma\_dist\_interval} & \((0.2, 0.6)\) \\
        \bottomrule
    \end{tabular}
    \caption{Relay-BP parameters used for decoding. The \texttt{gamma0} and \texttt{gamma\_dist\_interval} parameters were optimized via a small grid search on a memory experiment circuit.}
    \label{tab:relay-params}
\end{table}

\section{Comparison Table Details}
\label{sec:table-details}
In this section, we detail the assumptions made to obtain the values listed in Table~\ref{tab:comparison}.

For the surface code (fast) row, we assume the ``fast block'' layout from~\cite{litinski2019game}, which requires \(2k + \sqrt{8k} + 1\) surface patches to store \(k\) logical qubits; our \(n_{\mathrm{tot}}\) values ignore the subleading term and assume a fixed \(2\times\) overhead.
For the surface code (compact) row, we assume the ``compact block'' layout from~\cite{litinski2019game}, with one empty patch for every two data patches, and that measurements involving \(Y\) terms use twist-free surgery~\cite{chamberland2022universal}: with ancilla states prepared in advance, this requires \(d\) rounds for an \(X\)-type measurement, up to \(6\cdot d\) rounds for patch rotation, and another \(d\) rounds for a \(Z\)-type measurement.

For the systems from~\cite{yoder2025tour}, the listed compilation overhead is the maximum number of native measurements required to measure a logical operator on eleven of the twelve encoded qubits; the average is around \(18\).
For the systems from \cite{cain2026shorsalgorithm}, we include only those designed for the processor codes. Unlike in \cite{cain2026shorsalgorithm}, where qubits are allocated for checks in only one basis, our \(n_{\mathrm{tot}}\) allocates qubits for both \(X\) and \(Z\) checks, and refers to an ancilla system sized for the largest \(X\)-type logical operators sampled in \(10^5\) trials; non-CSS measurements would likely require larger systems.
For BB codes, we assume the standard overcomplete check basis~\cite{bravyi2024high-threshold2}.

Finally, for this work, \(n_{\mathrm{tot}}\) includes the ancilla qubits used to measure high-degree checks via Bell checks.

\end{document}